\title[Regularizations in the perturbed KP solitons]
{Regularizations for shock and rarefaction waves in the perturbed solitons of the KP equation}
\author{Guangfu Han$^1$, Yuji Kodama$^{1,2}$, Chuanzhong Li$^1$, Lin Sun$^1$}
\date{\today}
\address{$^1$ College of Mathematics and Systems Science, Shandong University of Science and Technology, Qingdao, 266590, China}
\address{$^2$Department of Mathematics, The Ohio State University,
	Columbus, OH 43210}
\subjclass[2000]{}
\def\tbox(#1,#2)#3{
	\x=#1 \y=#2
	\multiply\x by 12
	\multiply\y by 12
	\z=\x \t=\y
	\advance\z by 12
	\advance\t by 12
	\psline(\x,\y)(\x,\t)(\z,\t)(\z,\y)(\x,\y)
	\advance\x by 6
	\advance\y by 6
	\rput(\x,\y){{\bf #3}}}
\def\proof{\par{\it Proof}. \ignorespaces}
\def\endproof{{\ \vbox{\hrule\hbox{%
				\vrule height1.3ex\hskip0.8ex\vrule}\hrule }}\par}
\theoremstyle{definition}
\theoremstyle{remark}
\numberwithin{equation}{section}
\def\Wr{\mathop{\mathrm{Wr}}\nolimits}
\let\trueint=\int
\let\truesum=\sum
\def\int{\mathop{\textstyle\trueint}\limits}
\def\sum{\mathop{\textstyle\truesum}\limits}
\def\sech{\mathop{\rm sech}\nolimits}
\def\Wr{\text{Wr}}
\def\t{\mathbf{t}}
\def\0{\mathbf{0}}
\def\edge{\ar@{-}}
\def\dedge{\ar@{.}}
\newtheorem{theorem}{Theorem}[section]
\newtheorem{definition}[theorem]{Definition}
\newtheorem{proposition}[theorem]{Proposition}
\newtheorem{lemma}[theorem]{Lemma}
\newtheorem{example}[theorem]{Example}
\newtheorem{remark}[theorem]{Remark}
	\newcommand{\thmrefer}[1]{\renewcommand\thetheorem
		{\protect\ref{#1}}\addtocounter{theorem}{-1}}
\begin{document}

\begin{abstract}
      Using an asymptotic perturbation method, we study the initial value problem for the KP equation with initial data consisting of parts of exact line-soliton solutions. We consider a slow modulation of the soliton parameters, described by a dynamical system obtained via the perturbation method. {The dynamical system is given by a $2$-component quasi-linear system.} In particular, we show that a singular solution (\emph{shock wave}) {of the system} leads to the generation of a new soliton as a result of the resonant interaction of solitons. We also show that a regular solution corresponding to a rarefaction wave {of the system} can be described by a parabola (which we call a \emph{parabolic soliton}). We then perform numerical simulations of the initial value problem and show that they are in excellent agreement with the results obtained by the perturbation method.
\end{abstract}
      
\maketitle
    
      \begingroup
  \hypersetup{linkcolor=black}
  \tableofcontents
  \endgroup 
	
     \noindent{\small\bf Keywords:} {KP equation, line-soliton, parabolic-soliton, soliton resonance, $\kappa$-system, colored $\kappa$-graph}	
     
\section{Introduction}
\par The KP equation is a two-dimensional nonlinear dispersive wave equation given by
	\begin{eqnarray}\label{0}
		(4u_{t}+6uu_{x}+u_{xxx})_{x}+3u_{yy}=0,
	\end{eqnarray}
where $x$, $y$, and $t$ are the spatial coordinates and time, $u=u(x,y,t)$ represents the (normalized) wave amplitude, and the subscripts denote partial derivatives. It is well-known that the KP equation admits a line soliton solution $u(x,y,t)=\phi(\xi-x_0;\kappa_i,\kappa_j)$ with two constants $\{\kappa_i,\kappa_j\}$, the soliton parameters (see for example \cite{Kodama3}),
	\begin{eqnarray}\label{01}
		\phi(\xi-x_0;\kappa_i,\kappa_j)=A_{[i,j]}\sech^2\Big(\sqrt{\dfrac{A_{[i,j]}}{2}}(\xi-x_0)\Big),\quad\text{with}\quad \xi=x+\tan\varPsi_{[i,j]} y-C_{[i,j]}t,
	\end{eqnarray}
where $x_0$ gives a constant phase. Here
the amplitude $A_{[i,j]}$ and the soliton inclination from $y$-axis $\tan\varPsi_{[i,j]}$, 
and the velocity $C_{[i,j]}$ are expressed in terms of $\kappa_i$ and $\kappa_j$, 
\[
A_{[i,j]}=\frac{1}{2}(\kappa_i-\kappa_j)^2,\qquad \tan\Psi_{[i,j]}=\kappa_i+\kappa_j,\qquad C_{[i,j]}=\kappa_i^2+\kappa_i\kappa_j+\kappa_j^2.
\]
Note that $C_{[i,j]}>0$, which implies every line-soliton propagates in the positive $x$-direction.
We call the soliton solution \eqref{01} line-soliton of $[i,j]$-type (or simply \emph{$[i,j]$-soliton}).
The KP equation is a two-dimensional generalization of the Korteweg-deVries (KdV) equation, and the KdV soliton is recovered when $\kappa_{i}=-\kappa_{j}$ in (\ref{01}). 
It is also well-known that Eq.~(\ref{0}) admits a resonant soliton solution.
The resonant solution is observed in the Mach reflection problem of shallow water waves (see, e.g. Chapter 8 in \cite{Kodama3}). In \cite{Miles2}, Miles showed that two obliquely interacting line solitons become resonant at a certain critical interaction angle. As a result of the resonance, the phase shift between these line solitons becomes infinity, and the resonance generates an $additional$ soliton(s).
The resonant solution forms a $Y$-shape soliton, simply called \emph{$Y$-$soliton$} \cite{KY:16} (also see section \ref{sec:Y} for the details). In Appendix \ref{A-KP}, we also provide a brief review of the general soliton solutions of the KP equation, referred to as \emph{KP solitons}, and their classification
(see \cite{Kodama3}).

One should note here that the stability problem of these solitons is widely open except for the case of one line-soliton (see \cite{Mizumachi}, also  Remarks 6.1 and 6.2 in \cite{Kodama3}). 
It was shown in \cite{Mizumachi} that a small perturbation generates local phase shifts propagating along the line-soliton, but asymptotically the soliton parameters $\{\kappa_i,\kappa_j\}$ remain unchanged, i.e., the amplitude $A_{[i,j]}$ and the slope $\Psi_{[i,j]}$ of the soliton remain the same. More precisely, for one line-soliton with a small perturbation, it was shown that  as $t\to\infty$,
\begin{equation}\label{1-solST}
	\iint_{D_t}|u(x,y,t)-\phi(x-x_0;\kappa_i,\kappa_j)|^2dxdy\longrightarrow 0,\qquad\text{for some}\quad x_0,
\end{equation}
where $D_t\subset \mathbb{R}^2$ is any compact domain including the line soliton and it depends on $t$ (see also Chapter 6 in \cite{Kodama3} for the details).
We emphasize that the parameters $\{\kappa_i,\kappa_j\}$ for one-soliton stay the same, unlike the case of the KdV soliton, whose parameters change under even a small perturbation in general.
 
Recently, there have been several publications on the initial value problems of Eq.~(\ref{0}) with certain classes of initial data, which include \cite{Kao,McDowell,Ryskamp1,Ryskamp} for numerical and semi-analytical studies,   \cite{LYK:11} for shallow water experiments and \cite{Wang,Yuan2} ocean simulations. 

Their work demonstrates that solutions to the initial value problem with specific types of initial condition approaches to certain KP soliton solutions. These results may be stated as follows, which is an extension of \eqref{1-solST} (see Chapter 6 in \cite{Kodama3}). For this type of initial data, there exists a KP soliton so that 
\begin{equation}\label{localstability}
	\iint_{D_t}|u(x,y,t)-u_0(x,y,t)|^2dxdy\longrightarrow 0,\qquad\text{as}\quad t\to \infty,
\end{equation}
where the integration domain $D_t$ may be taken to cover the ``main part'' (or a central part of the interaction patterns) of the solution, and $u_0(x,y,t)$ is an exact soliton solution, KP soliton.
In the present paper, we study this type of stability for some explicit initial data.
The initial data we consider are those in \cite{Kao} (also see \cite{Kodama3}),
which include $V$-type initial value waves (see Figure \ref{fig96} below).
It should be noted that this problem with some initial data was first numerically studied in \cite{PTLO} for the Mach reflection phenomena. The phenomena were later explained in terms of the KP solitons in \cite{Kodama} (see also Chapter 8 in \cite{Kodama3}).
In \cite{Kao}, the initial value problems with several initial data are studied numerically, and their result leads to a conjecture that the asymptotic solution of the perturbed problem converges to a certain KP soliton in the sense of \eqref{localstability}. Our main result {in this paper} is to confirm the conjecture by analytically solving a \emph{quasi-linear} system describing the dynamics of the soliton parameters $(\kappa_i,\kappa_j)$, which depend on the slowly varying variables $(Y=\epsilon y, T=
\epsilon t)$ for some small parameter $0<\epsilon \ll 1$.
We provide an elementary derivation of the system in Appendix \ref{A-kappa}, and it is given by
\begin{eqnarray}{\label{6}}		
	\frac{\partial}{\partial T}\left(\begin{array}{cc}
		\kappa_{1} \\
		\kappa_{2} \\
	\end{array}\right)+\left(\begin{array}{cc}
		2\kappa_{1}+\kappa_{2}& 0\\
		0& \kappa_{1}+2\kappa_{2}\\
	\end{array}\right)
	\frac{\partial}{\partial Y}\left(\begin{array}{cc}
		\kappa_{1} \\
		\kappa_{2} \\
	\end{array}\right)=0.
\end{eqnarray}
Note here that the soliton parameters $(\kappa_1,\kappa_2)$ are the Riemann invariants of the system. This system has also been derived in  \cite{Ryskamp1,Ryskamp} using the Whitham modulation theory \cite{Wh:74} for  $A_{[i,j]}$ and $\tan\varPsi_{[i,j]}$. 

In general, a quasi-linear system admits a singular solution, called a shock wave. 
To obtain a global solution, we regularize the initial data in a similar way as in the KdV-Whitham theory in \cite{BK:94, Kodama5} (see also Appendix \ref{A-KW}). We then show 
that a shock wave in the $\kappa$-system generates a soliton as a result of resonant interaction of solitons. The main result of the present paper is to provide an analytical explanation for the asymptotic stability in the sense of \eqref{localstability} \cite{Kao, Kodama3}.

The paper is organized as follows. In Section \ref{Sec:KPsolitons}, we give some details on
line-solitons and Y-soliton as the necessary background for our study. In particular, we discuss
the resonance phenomena of two solitons of so-called \emph{O-type} following \cite{Miles2}
(see also \cite{Kodama3}). Here, we introduce the \emph{colored $\kappa$-graph} (Definition \ref{CG}) to describe some of the KP solitons, which will play the main role in the paper.
In particular, we show that a Y-soliton can be described by a ``singular'' colored $\kappa$-graph, which is obtained by a limit of the soliton parameters in O-type soliton. This limit corresponds to the resonance found in \cite{Miles2}.
In Section \ref{Sec:kappa}, we discuss some properties of the $\kappa$-system \eqref{6}. In particular, we give a condition for the global existence of the solution (Lemma \ref{simple}). Moreover, we set up the initial value problem of the $\kappa$-system \eqref{6} with particular set of initial data (see Section \ref{sec:IVP}). Then in Section \ref{sec:H}, we study simple but important examples,
where the initial data consist of a semi-infinite line-soliton, referred to as a \emph{half-soliton}.
We show, in particular, that the rarefaction wave can be described by a perturbed soliton whose
peak trajectory has a parabolic shape (we call it a \emph{parabolic-soliton}).
These results provide part of the building blocks for the solutions we study in the paper.
In Section \ref{Sec:IVP-V}, we study the initial value problem of the $\kappa$-system \eqref{6} with V-shape initial data consisting with two half-solitons. The initial data for the $\kappa$-system is
then given by step functions. The main result of this section is to regularize the step
initial data, so that the initial value problem of the $\kappa$-system admits a global solution.
In particular, we find that the shock singularity can be regularized by adding a new soliton
(Section \ref{sec:d}).
This regularization is due to the resonant interaction of the KP solitons.
Then we find that the asymptotic solution consists of line-solitons and parabolic-solitons,
and the solution converges locally to some exact KP solitons in the sense of the local stability \eqref{localstability}.
In Section \ref{summary}, we give a summary of the results of the initial value problems
with V-shape initial data (Theorem \ref{main}).

We also provide a brief review of the KP solitons in Appendix \ref{A-KP}, 
an elementary derivation of the $\kappa$-system \eqref{6} in Appendix \ref{A-kappa}, and a brief review of the regularization in the KdV-Whitham equation in Appendix \ref{A-KW}.

\section{Background}\label{Sec:KPsolitons}
In this section, we briefly review soliton solutions of the KP equation, particularly some details of the one soliton solution, two solitons solution and a resonant soliton solution. Here, we fix the notations of those solutions and introduce the chord diagram (permutation diagram) to describe the asymptotic structure of the solitons (also see Appendix \ref{A-KP}).

\subsection{One line-soliton}
The KP equation admits a steady propagating wave of the KP equation \eqref{0} in the form,
	\begin{eqnarray}\label{92}
			u(x,y,t)=A\sech^2\dfrac{1}{2}({\bf K} \cdot ({\bf x}-{\bf x}_{0})-\Omega t),
	\end{eqnarray}
where $A$ is the amplitude, ${\bf K}=(K^{x},K^{y})$ is the wave vector with ${\bf x} =(x,y)$, $\Omega$ is the frequency, and ${\bf x}_{0}=(x_{0},y_{0})$ is a constant vector. This solution
 is localized along the line ${\bf K} \cdot({\bf x}-{\bf x}_{0})-\Omega t=0$ (the wave crest) and decays exponentially away from the line. For  $x\to +\infty$, the solution \eqref{92} has the asymptotic form,
\begin{eqnarray*}
u(x,y,t)~\longrightarrow~ A\exp{({\bf K}\cdot {\bf x}_{0})}\cdot\exp{(-{\bf K}\cdot {\bf x}+\Omega t)},
\end{eqnarray*}
where we have assumed ${K^{x}}>0$. Then, from the KP equation (\ref{0}), we see that the constants $({\bf K},\Omega)$ satisfy the (soliton) \emph{dispersion relation},
	\begin{eqnarray}
		-4\Omega K^{x}+(K^{x})^4+3(K^{y})^2=0.
	\end{eqnarray}
The dispersion relation can be \emph{parametrized} by a pair of arbitrary constants $\{\kappa_{i},\kappa_{j}\}$, called \emph{soliton parameters}, such that
\begin{eqnarray}\label{93}
{\bf{K}}=(K^{x},K^{y})=\left(\kappa_{j}-\kappa_{i},\kappa_{j}^2-\kappa_{i}^2\right),
\qquad\Omega=\kappa_{j}^3-\kappa_{i}^3.
\end{eqnarray}
Note that the condition $K^{x}>0$ implies $\kappa_{i}<\kappa_{j}$, and the amplitude $A$ is given by $A=\frac{1}{2}(\kappa_{i}-\kappa_{j})^2$. We call the solution (\ref{92}) with \eqref{93} $[i,j]$-soliton, and we write $A = A_{[i,j]}$, ${\bf K} = {\bf K}_{[i,j]}$ and $\Omega= \Omega_{[i,j]}$. The slope of the crest and  the velocity in the $x$-direction of $[i,j]$-soliton are given by
	\begin{eqnarray}\label{90}
\tan \Psi_{[i,j]}=\dfrac{{K_{[i,j]}^{y}} }{{K_{[i,j]}^{x}}} =\kappa_{i}+\kappa_{j},\qquad C^{x}_{[i,j]}=\dfrac{\Omega_{[i,j]}}{{K_{[i,j]}^{x}}}=\kappa_{i}^2+\kappa_{i}\kappa_{j}+\kappa_{j}^{2},
	\end{eqnarray}
where the angle $\Psi_{[i,j]}$ is measured in the counter-clockwise direction form the $y$-axis. One should note that there is no soliton parallel to the $x$-axis (i.e., $-\frac{\pi}{2}<\Psi_{[i,j]}< \frac{\pi}{2}$). 

As shown in Appendix \ref{A-KP}, the solution is expressed by $u(x,y,t)=2(\ln\tau(x,y,t))_{xx}$, and
the $\tau$-function of the line-soliton \eqref{92} is given by
\begin{equation}\label{One-sol}
\tau(x,y,t)=E_i(x,y,t)+aE_j(x,y,t)\qquad \text{with}\qquad E_i(x,y,t)=\exp(\kappa_ix+\kappa_i^2y-\kappa_i^3t),
\end{equation}
where $a$ is a positive constant. Then the \emph{peak trajectory} (wave crest) is given by the line $L_{[i,j]}$,
\begin{equation}\label{crest}
L_{[i,j]}~:~{\bf K}\cdot ({\bf x}-{\bf x}_0)-\Omega t=(\kappa_j-\kappa_i)(x+\tan\Psi_{[i,j]}y-C_{[i,j]}t+x_{[i,j]}^0)=0,
\end{equation}
where $x_{[i,j]}^0=\frac{1}{\kappa_j-\kappa_i}\ln a$.

In this paper, we study an \emph{adiabatic} deformation of solitons under some perturbations,
which can be described by changes in the soliton parameters on slow time scales, i.e., \eqref{6}.
We then define the \emph{colored $\kappa$-graph} to illustrate the dynamics of the parameters for $[i,j]$-soliton.

\begin{definition}\label{CG}
	Along any line \begin{tikzpicture}
		\draw[color=blue, thick] (0,0.1) -- (0.5,0.1);
		\draw[color=red, thick] (0,-0.1) -- (0.5,-0.1);		
	\end{tikzpicture} parallel to the $y$-axis (There is a gap between the two lines), we will obtain the amplitude variation of the line soliton on this line, with a peak value of $A_{[i,j]}$. This means that line solitons can be represented as two lines of different colours (red and blue) paired together. Therefore, we can represent this soliton on the $y$-$\kappa$ plane, which is called colored $\kappa$-graph or $\kappa$-graph. See Figure \ref{fig1}.
\end{definition} 
\begin{figure}[H]
	\begin{minipage}[htb]{1\linewidth}
		\centering
		\includegraphics[height=3.75cm,width=11.92cm]{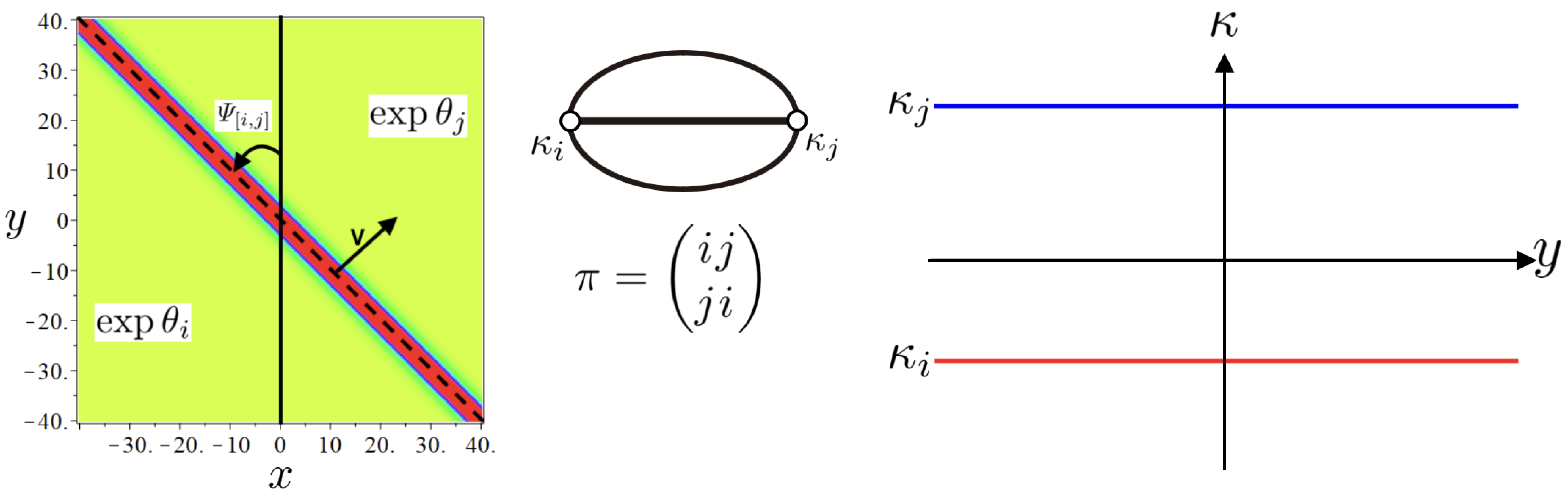}
	\end{minipage}%
	\caption{The left panel shows the contour plot of the $[i,j]$-soliton solution (\ref{92}) with $\kappa_{i}=-1$, $\kappa_{j}=2$ at $t=0$. The dotted line is the crest of the soliton. The middle panel shows the corresponding permutation (transposition $i\leftrightarrow j$), which we call the chord diagram of the soliton. The diagram indicates the asymptotic structure of KP soliton, that is, 
		the upper (lower) part of the diagram shows the $[i,j]$-soliton for $y\gg0$ ($y\ll 0$). The right panel shows the corresponding colored $\kappa$-graphs.}
	\label{fig1}
\end{figure}

The soliton parameters are, of course, constants without perturbations, and the $\kappa$-graph
for line-soliton is trivial. {The main tool of our study is the colored $\kappa$-graph, which we use to describe the dynamics of the parameters in the presence of perturbations.}
We give the following remarks about the colored $\kappa$-graph and adiabatic deformation of $[i,j]$-soliton.
\begin{itemize}
	\item[1.] The $blue$ line represents the larger parameter $\kappa_{j}$, and the $red$ line represents the smaller parameter $\kappa_{i}$ in the pair $\{\kappa_{i},\kappa_{j}\}$. Then if the blue and red lines coincide  \begin{tikzpicture}
		\draw[color=blue, thick] (0,0.00) -- (0.5,0.00);
		\draw[color=red, thick] (0,-0.05) -- (0.5,-0.05);		
	\end{tikzpicture} , we have $\kappa_{i}=\kappa_{j}$ and $A_{[i,j]}=0$, i.e., there is no soliton.	 
	\item[2.]  The adiabatic deformation of the line-soliton can be described by the small scales $(X=\epsilon x, Y=\epsilon y,T=\epsilon t)$.   
	Then the {peak trajectory} of the $[i,j]$-soliton is described by the ``curve''
	\begin{equation}\label{Trajectory}
		X+\tan\Psi_{[i,j]}Y-C_{[i,j]}T=0,
	\end{equation}
	which is given by the limit $\epsilon\to 0$ for $\epsilon L_{[i,j]}$ in \eqref{crest} {with $X,Y,T \sim \mathcal{O}(1)$.}  
    {The phase part $x_{[i,j]}^0$ is ignored in \eqref{crest}, so that the line-soliton \eqref{Trajectory} intersects the origin at $T=0$.}
	Then we see that the slow scale $X$ should be considered as a function of $(Y,T)$, i.e., $X=X(Y,T)$.
	Then taking the variation of $X$, i.e., $dX+\tan\Psi_{[i,j]}dY-C_{[i,j]}dT=0$,  we obtain
	\begin{equation}\label{dX}
		\frac{\partial X}{\partial Y}=-\tan\Psi_{[i,j]}=-(\kappa_i+\kappa_j),\qquad \frac{\partial X}{\partial T}=C_{[i,j]}=\kappa_i^2+\kappa_i\kappa_j+\kappa_j^2,
	\end{equation}
	which gives the curve of the peak trajectory. This is the main object that we study in the present paper.
\end{itemize}

\subsection{O-type soliton and Y-soliton}\label{sec:OY}
Here, we review some particular KP solitons such as O-type soliton and Y-soliton. 
The main purpose of this section is to show that a Y-soliton is generated as a result of \emph{resonant} interaction of two line-solitons (this was first discovered by Miles in \cite{Miles2}). As will be explained in the following Sections \ref{sec:H} and \ref{Sec:IVP-V}, the resonance plays an important role in our regularization of a shock singularity.

\subsubsection{O-type soliton}\label{sec:O}
We recall that two solitons, say $[i,j]$-soliton and $[k,l]$-soliton, are of O-type, if the soliton parameters of these solitons satisfy
\[
\kappa_i<\kappa_j<\kappa_k<\kappa_l.
\]
In this case, the totally nonnegative matrix $A$ and the exponential matrix $E$ in the $\tau$-function $\tau=|AE^T|$ in \eqref{26} are given by
\[
A=\begin{pmatrix}
1 & a & 0 & 0\\
0&0&1&b
\end{pmatrix},\qquad\text{and}\qquad 
E=\begin{pmatrix}
E_1 &E_2 & E_3& E_4\\
\kappa_1 E_1&\kappa_2E_2&\kappa_3E_3&\kappa_4E_4
\end{pmatrix},
\]
where $a$ and $b$ are positive constants, and $E_{i}=\exp{(\kappa_{i}x+\kappa_{i}^2y-\kappa_{i}^3t)}$.
Then the $\tau$-function in the form (\ref{26}) is 
\begin{align}\label{150}
		\tau=|AE^T|=&=E_{1,3}+bE_{1,4}+aE_{2,3}+abE_{2,4}, 
	\end{align}  
where $E_{i,j}=(\kappa_{j}-\kappa_{i})E_{i}E_{j}$. Take the following values of the parameters $a$ and $b$, so that two line-solitons in the O-type soliton intersect at the origin $(0,0)$ at $t=0$ (see  \cite{CK:09}),
\[
a=\sqrt{\frac{(\kappa_3-\kappa_1)(\kappa_4-\kappa_1)}{(\kappa_3-\kappa_2)(\kappa_4-\kappa_2)}},\qquad\text{and}\qquad b=\sqrt{\frac{(\kappa_3-\kappa_1)(\kappa_3-\kappa_2)}{(\kappa_4-\kappa_1)(\kappa_4-\kappa_2)}}.
\]
Then the $\tau$-function becomes
\begin{equation}\label{O-soliton}
\tau=(\kappa_3-\kappa_1)\left(E_1E_3+\Delta E_2E_3+\Delta E_1E_4+E_2E_4\right),
\end{equation}
where the coefficient $\Delta$ is given by
\[
\Delta=\sqrt{\frac{(\kappa_3-\kappa_2)(\kappa_4-\kappa_1)}{(\kappa_3-\kappa_1)(\kappa_4-\kappa_2)}}~<~1,
\]
{which gives the phase shift resulting from the nonlinear interaction between these solitons.}
Figure \ref{fig101} shows an example of O-type soliton solution. For $y\gg0$, the O-type soliton has two solitons of $[1,2]$-, and $[3,4]$-type, which are given by
\begin{align*}
u(x,y,0)~\approx~&A_{[1,2]}\sech^2\frac{\kappa_2-\kappa_1}{2}\left(x+(\kappa_1+\kappa_2)y-\frac{1}{\kappa_2-\kappa_1}\ln\Delta\right)\\
+&A_{[3,4]}\sech^2\frac{\kappa_4-\kappa_3}{2}\left(x+(\kappa_3+\kappa_4)y+\frac{1}{\kappa_4-\kappa_3}\ln \Delta\right).
\end{align*}
These two solitons intersect at $(x,y)=(x_+,y_+)$ with
\begin{equation}\label{y+}
\left\{
\begin{array}{lll}
\displaystyle{x_+=-(\kappa_1+\kappa_2)y_++\frac{1}{\kappa_2-\kappa_1}\ln\Delta},\\[2.0ex]
\displaystyle{y_+=\frac{(\kappa_4-\kappa_1)-(\kappa_3-\kappa_2)}{(\kappa_4-\kappa_1)+(\kappa_3-\kappa_2)}\frac{-\ln \Delta}{(\kappa_4-\kappa_3)(\kappa_2-\kappa_1)}>0}.
\end{array}\right.
\end{equation}
It should be noted that the middle section of the O-type soliton in Figure \ref{fig101} represents the phase shift. In this figure, we give a relatively large phase shift by taking special values of the parameters to explain a generation of the Y-soliton as the result of the resonance interaction of two solitons (see also \cite{CK:09} for the choice of the parameters).
\begin{figure}[htbp]
	\begin{minipage}[htb]{1\linewidth}
		\centering
		\includegraphics[width=14cm,height=4cm]{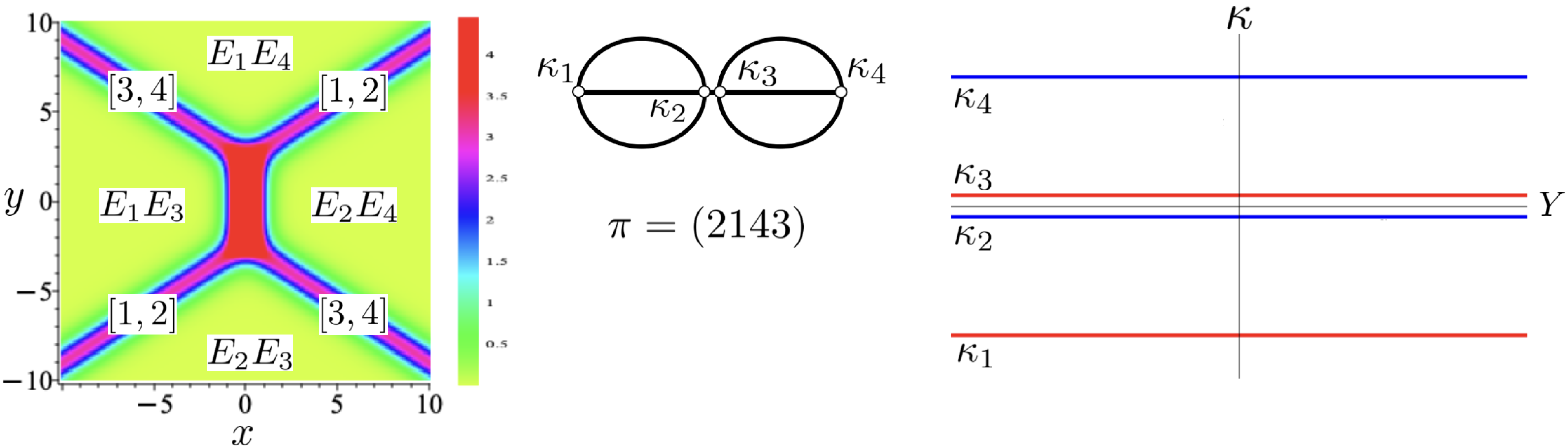}
	\end{minipage}
	\caption{The parameters in the matrix $A$ are $a=\tfrac{1}{a'}=\sqrt{\tfrac{3\cdot 10^5}{2}}$, $b=\tfrac{1}{b'}=\sqrt{\tfrac{2}{3\cdot 10^{5}}}$ (i.e., $ab=1$). The $\kappa$-parameters are given by $(\kappa_{1},\kappa_{2},\kappa_{3},\kappa_{4})=(-\frac{3}{2},-10^{-5},10^{-5},\frac{3}{2})$. 
	The left panel shows the contour plot of the solution $u(x,y,0)$. The middle panel is the chord diagram for the O-type soliton. The right panel shows the colored $\kappa$-graph in the slow scale $Y=\epsilon y$, and note that the phase shift in the left figure is ignored in this scale.}
	\label{fig101}
\end{figure}

Now we consider the limit $\kappa_3\to\kappa_2$. Then, the middle part (phase shift)
becomes $[1,4]$-soliton, which can be 
easily seen from the $\tau$-function \eqref{O-soliton}, that is, noting $\Delta\to0$
and $E_3\to E_2$, we have
\[
\tau~\longrightarrow~ (\kappa_3-\kappa_1)E_3(E_1+E_4).
\]
The solution $u=2(\ln\tau)_{xx}$ with the parameter $\kappa_1=-\kappa_4$ gives the $[1,4]$-soliton parallel to the $y$-axis, i.e.,
\[
u(x,y,0)=A_{[1,4]}\sech^2(\frac{\kappa_4-\kappa_1}{2}x).
\]
In the colored $\kappa$-diagram, this implies that the limit of $\kappa_3\to \kappa_2$ leads to the cancellation of the red line of $[3,4]$-soliton with the blue line of $[1,2]$-soliton, and generates the $[1,4]$-soliton (called the Mach stem \cite{Miles2}). With two solitons $[1,2]$- and $[3,4]$-solitons in the asymptotic regions $|y|\gg0$, the limit induces  a three wave resonance among $[1,2]$-, $[2,4]$-, and $[1,4]$-solitons, that is, we have  the resonant triad in the wave number space,
\[
{\bf K}_{[1,4]}={\bf K}_{[1,2]}+{\bf K}_{[2,4]}.
\]
Figure \ref{fig111} shows the resonant interaction at $y=y_+$ in \eqref{y+} for $y\gg0$ in
the limit $\kappa_3\to\kappa_2$. Similarly, we have the resonant interaction at $y=y_-=-y_+$
as shown also in Figure \ref{fig101}. We also represent the corresponding resonant solitons (Y-solitons) and the colored $\kappa$-graphs. 
\begin{figure}[htbp]
	\begin{minipage}[H]{1\linewidth}
		\centering
		\includegraphics[width=10.51cm,height=8cm]{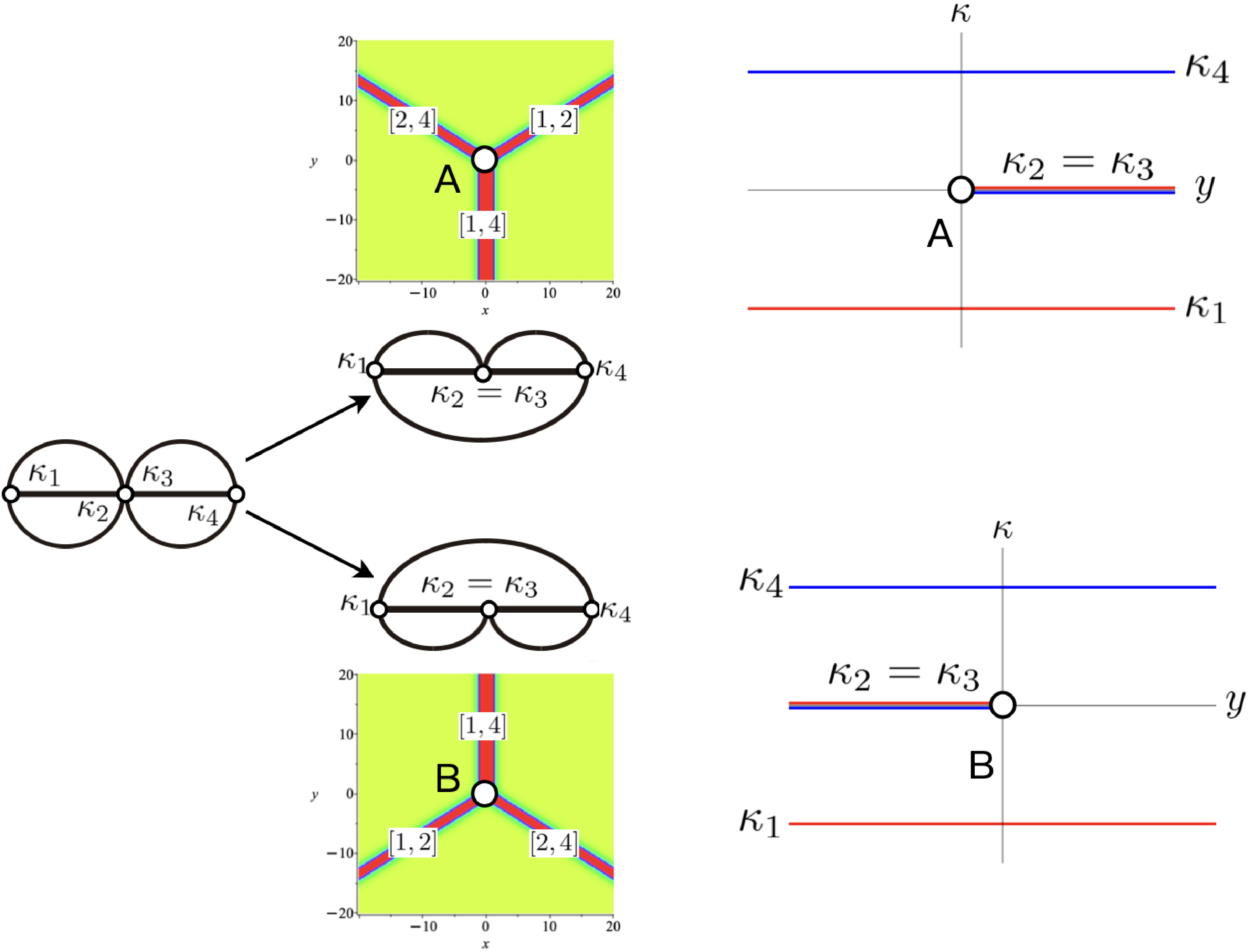}
	\end{minipage}
	\renewcommand\figurename{Figure}
	\caption{Y-solitons as the result of the resonant interactions of two line-solitons
	of O-type in the limit $\kappa_3\to\kappa_2$. In this limit,
	the chord diagram becomes singular, and the middle panel shows that the singular  chord diagram splits into two asymptotic diagrams for $y\gg 0$ and $y\ll 0$. Then the singularity can be represented by three wave resonant interaction, i.e., the generation of $[1,4]$-soliton. The colored $\kappa$-graphs around the points $A$  and $B$ are on the shifted coordinates $y-y_+$ and $y+y_+$, respectively.}
	\label{fig111}
\end{figure}

\subsubsection{Y-soliton: resonant solution}\label{sec:Y}
Here, we review the soliton resonance and Y-soliton as an exact solution of the KP equation. We first recall that each soliton, say $[i,j]$-soliton, is parametrized by a pair of the numbers $\{\kappa_{i},\kappa_{j}\}$ with (\ref{93}), i.e.,
	\begin{eqnarray*}
		{\bf K}_{[i,j]}=(\kappa_{j}-\kappa_{i},\kappa_{j}^2-\kappa_{i}^2),\qquad \Omega_{[i,j]}=\kappa_{j}^3-\kappa_{i}^3.
	\end{eqnarray*}
It is then immediate to see the following relation among Y-soliton of $[1,2]$-, $[2,3]$-, and $[1,3]$-solitons with arbitrary ordered parameters ${\kappa_{1}<\kappa_{2}<\kappa_{3}}$,
	\begin{eqnarray}\label{41}
		{\bf K}_{[1,3]}={\bf K}_{[1,2]}+{\bf K}_{[2,3]},\qquad \Omega_{[1,3]}=\Omega_{[1,2]}+\Omega_{[2,3]},
	\end{eqnarray}
which is called the \emph{three wave resonant relations}.
There are two types of resonances, and they correspond to the permutations $\pi= (312)$ and $\pi=(231)$, as shown in Figure \ref{fig111} (see \cite{KY:16}). For the case $\pi=(312)$, we have the $\tau$-function
in \eqref{26} with
\[
A=(1,1,1),\quad\text{and}\quad E=(E_1,E_2,E_3).
\]
Here, note that we choose the specific $A$ so that the intersection point is located at the origin $(0,0)$. For the case $\pi=(2,3,1)$, we have
\[
A=\begin{pmatrix}
1 & 0 &-a\\
0&1&b
\end{pmatrix},\quad\text{and}\quad
E=\begin{pmatrix}
E_1&E_2&E_3\\
\kappa_1E_1&\kappa_2E_2&\kappa_3E_3
\end{pmatrix},
\]
where $a= \frac{\kappa_1-\kappa_2}{\kappa_1-\kappa_3}$ and $b=\frac{\kappa_1-\kappa_2}{\kappa_2-\kappa_3}$, which gives the intersection point at the origin $(0,0)$.

Then the time evolution of the intersection point for both cases is given by the following lemma.
\begin{lemma}
	The intersection point $(x_{0}(t),y_{0}(t))$ of those Y-solitons is given by
		\begin{eqnarray}\label{89}
			\left\{\begin{array}{ll} 
				\displaystyle{x_{0}(t)=-(\kappa_1\kappa_2+\kappa_1\kappa_3+\kappa_2\kappa_3)t},\\[1.0ex]
				\displaystyle{y_{0}(t)=(\kappa_1+\kappa_2+\kappa_3)t}.
			\end{array} \right.
		\end{eqnarray}
\end{lemma}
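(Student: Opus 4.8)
The plan is to locate the vertex of the Y-soliton as the unique \emph{triple point} of the $\tau$-function — the point where all three dominant exponential monomials balance simultaneously — and then to read off its trajectory from the resulting linear system. Write $\theta_i=\kappa_ix+\kappa_i^2y-\kappa_i^3t$, so that $E_i=e^{\theta_i}$ and $u=2(\ln\tau)_{xx}$. In both cases $\tau$ is a sum of three exponential monomials: for $\pi=(312)$ one has $\tau=E_1+E_2+E_3$, while for $\pi=(231)$ the Cauchy--Binet expansion of $|AE^T|$ gives $\tau=E_{1,2}+b\,E_{1,3}+a\,E_{2,3}$ with $E_{i,j}=(\kappa_j-\kappa_i)E_iE_j$, whose exponents are $\theta_i+\theta_j$. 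In either case the asymptotic (dominant-balance) analysis of $\tau$ shows that the crest of each of the three constituent $[i,j]$-solitons is the locus where two of the monomials have equal magnitude, which by \eqref{crest} is a line of the form $x+(\kappa_i+\kappa_j)\,y-C_{[i,j]}\,t=p_{ij}$ for a time-independent phase constant $p_{ij}$.

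Next I would characterize $(x_0(t),y_0(t))$ as the common point of the three crests $L_{[1,2]}$, $L_{[2,3]}$, $L_{[1,3]}$, equivalently the triple point where all three monomials are comparable. For $\pi=(312)$ this is immediate: since $\tau=E_1+E_2+E_3$ has all coefficients equal to $1$, the triple point is $\theta_1=\theta_2=\theta_3$, i.e. $p_{ij}=0$ for every pair, and it sits at the origin when $t=0$ because $\theta_i(0,0,0)=0$. For $\pi=(231)$ the constants $a,b$ were chosen precisely so that the three crests again meet at the origin at $t=0$, so the vertex passes through $(0,0)$ when $t=0$ in both cases.

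Then I would solve for the trajectory by imposing two of the three crest conditions at the vertex,
\begin{align*}
x_0+(\kappa_1+\kappa_2)\,y_0-(\kappa_1^2+\kappa_1\kappa_2+\kappa_2^2)\,t &= p_{12},\\
x_0+(\kappa_2+\kappa_3)\,y_0-(\kappa_2^2+\kappa_2\kappa_3+\kappa_3^2)\,t &= p_{23}.
\end{align*}
Subtracting and using the factorization $\kappa_1^2+\kappa_1\kappa_2-\kappa_2\kappa_3-\kappa_3^2=(\kappa_1-\kappa_3)(\kappa_1+\kappa_2+\kappa_3)$ gives $(\kappa_1-\kappa_3)\,y_0=(\kappa_1-\kappa_3)(\kappa_1+\kappa_2+\kappa_3)\,t$, hence $y_0(t)=(\kappa_1+\kappa_2+\kappa_3)\,t$; back-substitution together with the identity $(\kappa_1+\kappa_2)(\kappa_1+\kappa_2+\kappa_3)-(\kappa_1^2+\kappa_1\kappa_2+\kappa_2^2)=\kappa_1\kappa_2+\kappa_1\kappa_3+\kappa_2\kappa_3$ yields $x_0(t)=-(\kappa_1\kappa_2+\kappa_1\kappa_3+\kappa_2\kappa_3)\,t$, as claimed. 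To cover the $(231)$ case uniformly I would note that $p_{12},p_{23}$ are constants, so differentiating the two conditions in $t$ removes them and leaves exactly the same linear system for the velocity $(\dot x_0,\dot y_0)$; combined with the origin normalization at $t=0$, this reproduces the identical trajectory.

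The step that genuinely needs care — the main obstacle — is justifying that the three crests of the resonant solution really pass through a \emph{single} common point, so that the two equations above are compatible with the third crest condition rather than overdetermined. This is exactly where the three-wave resonance relation \eqref{41} and the specific choice of the matrix $A$ in each case enter: they force the phase constants $p_{ij}$ to satisfy the consistency relation making the three lines concurrent, which is what distinguishes a genuine Y-soliton (triple point) from a configuration of three independent crests. Once this concurrency is established, the velocity computation is forced and purely algebraic.
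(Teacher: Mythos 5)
The paper states this lemma without any proof, so there is nothing to compare against; judged on its own, your argument is correct and is surely the computation the authors intend. Your dominant-balance setup, the factorization $\kappa_1^2+\kappa_1\kappa_2-\kappa_2\kappa_3-\kappa_3^2=(\kappa_1-\kappa_3)(\kappa_1+\kappa_2+\kappa_3)$, and the back-substitution identity all check out, and the reduction of the $(2,3,1)$ case to the same linear system for $(\dot x_0,\dot y_0)$ via differentiating in $t$ is clean. One remark: the concurrency of the three crests, which you single out as the main obstacle, is actually automatic here and needs no appeal to the resonance relation \eqref{41} or to the choice of $A$. In both cases $\tau$ is a sum of exactly three monomials $M_1,M_2,M_3$ (namely $E_1,E_2,E_3$ for $\pi=(312)$, and $E_{1,2},\,bE_{1,3},\,aE_{2,3}$ for $\pi=(231)$), each crest line is a locus $\{|M_i|=|M_j|\}$, and pairwise equality for two of the three pairs trivially implies it for the third; so the three lines meet in a single point for every $t$ by pure logic, and the role of \eqref{41} and of the constants $a,b$ is only to identify the three balance loci with genuine $[i,j]$-soliton crests and to normalize the vertex to the origin at $t=0$. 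With that observation your proof closes with no gap.
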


In the next section, we consider a perturbation problem under the assumption of an adiabatic change of the $\kappa$-parameters. 
\begin{example} Consider the Y-soliton with the parameters $(\kappa_1,\kappa_2,\kappa_3)=(-\frac{3}{2},\frac{1}{2},\frac{3}{2})$. Figure \ref{figY} shows the contour plot of the solution $u=2(\ln \tau)_{xx}$. The intersection point $C$ has the coordinates $(x_0=\frac{9}{4}t, y_0=\frac{1}{2}t)$.
\begin{figure}[htbp]
		\begin{minipage}[t]{1\linewidth}
			\centering
			\includegraphics[width=8.51cm,height=3cm]{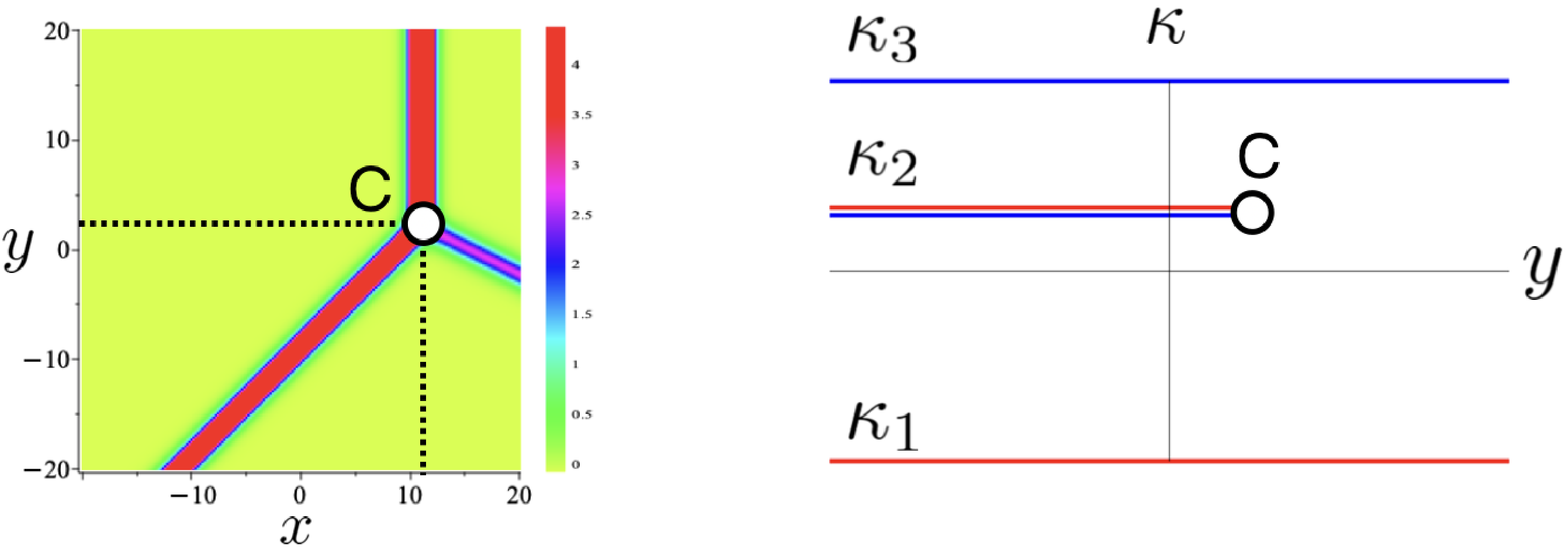}
		\end{minipage}
		\renewcommand\figurename{Figure}
		\caption{Y-soliton and the colored $\kappa$-graph with the singular point $C$.\label{figY}}
	\end{figure}
\end{example}

\section{The $\kappa$-system for soliton perturbations}\label{Sec:kappa}
The main purpose of our study is to describe the evolution of the KP soliton under certain classes of perturbations. Specifically, we assume an adiabatic change of the solution, given by the soliton $\kappa$-parameters, in the slow scales $(Y=\epsilon y, T=\epsilon t)$. {Within this framework, regular KP soliton solutions can be parametrized by chord diagrams (see Appendix \ref{A-KP}). Under perturbations, the soliton parameters evolve slowly in time, which induces an evolution of the corresponding incomplete chord diagram (see Section 6.3 in \cite{Kodama3}). We refer to this induced evolution as the \emph{chord dynamics}.}  The dynamical system of the $\kappa$-parameter has been derived in \cite{GKP:18, Ryskamp1} (also see Appendix \ref{A-kappa} for a simple derivation using a standard asymptotic perturbation theory), and it is given by \eqref{6}, i.e.,
\begin{equation}\label{kappa}
	\frac{\partial}{\partial T}
	\begin{pmatrix} \kappa_1\\ \kappa_2\end{pmatrix}~+~
	\begin{pmatrix} 2\kappa_1+\kappa_2 & 0\\ 0&\kappa_1+2\kappa_2\end{pmatrix}
	\frac{\partial}{\partial Y}\begin{pmatrix}\kappa_1\\ \kappa_2\end{pmatrix}
	~=~\begin{pmatrix}0\\0\end{pmatrix}.
\end{equation}
We remark that the system obtained in \cite{GKP:18, Ryskamp1} is given in terms of the amplitude
$A_{[1,2]}$ and the slope $\tan\Psi_{[1,2]}$, and note that the $\kappa$-parameter gives the Riemann invariants of the system. We call the system \eqref{kappa} ``$\kappa$-system'', {which is the main equation in this paper.}

{Most of our analysis is based on a simple wave case, that is, one of the parameters takes a constant value. In this case, we have the following lemma.}
\begin{lemma}\label{simple}
Assume that the system depends only on one parameter, say $\kappa_1$ and $\kappa_2=c=$constant, i.e.,
\begin{equation}\label{QLE1}
\frac{\partial\kappa_1}{\partial T}+(2\kappa_1+c)\frac{\partial \kappa_1}{\partial Y}=0.
\end{equation}
Then, if the initial data is monotonically increasing, the system has a global solution given 
in a hodograph form,
\begin{equation}\label{SQLE}
\kappa_1(Y,T)=f(Y-(2\kappa_1+c)T), \quad\text{for}\quad T>0,
\end{equation}
where $f(Y)$ is the initial data, i.e., $\kappa_1(Y,0)=f(Y)$.
\end{lemma}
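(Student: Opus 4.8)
The plan is to solve \eqref{QLE1} by the method of characteristics and then to use the monotonicity of the initial data to guarantee that the resulting implicit (hodograph) relation defines a single-valued global solution. First I would observe that \eqref{QLE1} is a scalar quasi-linear equation whose characteristic speed $2\kappa_1+c$ depends only on the unknown; indeed, setting $v=2\kappa_1+c$ turns it into the inviscid Burgers equation $v_T+vv_Y=0$. Along the characteristic emanating from $(Y_0,0)$ the quantity $\kappa_1$ is constant and equal to $f(Y_0)$, so the characteristic is the straight line $Y=Y_0+(2f(Y_0)+c)T$. Eliminating $Y_0$ produces exactly the hodograph form \eqref{SQLE}, namely $\kappa_1=f(Y-(2\kappa_1+c)T)$.

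The heart of the argument, and the step I expect to be the main obstacle, is to show that this implicit relation determines $\kappa_1(Y,T)$ \emph{uniquely} for every $(Y,T)$ with $T>0$, i.e.\ that characteristics never cross. Here I would exploit the monotonicity of $f$ in two equivalent ways. Differentiating the characteristic map $Y_0\mapsto Y=Y_0+(2f(Y_0)+c)T$ gives $\partial Y/\partial Y_0=1+2Tf'(Y_0)$, which is $\ge 1>0$ whenever $f'\ge 0$ and $T>0$; hence for each fixed $T>0$ the map $Y_0\mapsto Y$ is a strictly increasing continuous bijection of $\mathbb{R}$ (it tends to $\pm\infty$ as $Y_0\to\pm\infty$, since the linear term dominates against the monotone $f$), so every $(Y,T)$ has a unique preimage $Y_0$ and $\kappa_1=f(Y_0)$ is well defined. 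Equivalently, two characteristics starting at $Y_0<Y_0'$ could meet only if $Y_0-Y_0'=2(f(Y_0')-f(Y_0))T$, which is impossible for $T>0$ because the left side is negative while the right side is nonnegative by monotonicity. This is precisely the statement that increasing data produce a \emph{rarefaction} rather than a \emph{shock}, yielding global existence for $T>0$.

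Finally I would verify that the function defined by \eqref{SQLE} actually solves \eqref{QLE1}. Writing $G(\kappa_1,Y,T)=\kappa_1-f(Y-(2\kappa_1+c)T)$ and noting $\partial G/\partial\kappa_1=1+2Tf'\neq 0$, the implicit function theorem yields the derivatives
\[
\frac{\partial\kappa_1}{\partial T}=\frac{-(2\kappa_1+c)f'}{1+2Tf'},\qquad
\frac{\partial\kappa_1}{\partial Y}=\frac{f'}{1+2Tf'},
\]
with $f'$ evaluated at $Y-(2\kappa_1+c)T$. Substituting these into the left-hand side of \eqref{QLE1} gives $\bigl(-(2\kappa_1+c)f'+(2\kappa_1+c)f'\bigr)/(1+2Tf')=0$, confirming the claim; the same non-vanishing denominator simultaneously re-establishes the single-valuedness used in the previous paragraph. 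If one only assumes $f$ monotone but not $C^1$, the identical characteristic construction still produces a continuous generalized (weak) solution, with the derivatives interpreted distributionally where $f$ fails to be differentiable.
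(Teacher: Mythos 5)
Your proof is correct and follows essentially the same route as the paper: the method of characteristics gives $\kappa_1$ constant along the lines $Y=Y_0+(2f(Y_0)+c)T$, yielding the hodograph form \eqref{SQLE}, with monotonicity of $f$ preventing characteristic crossing. The details you add (the non-vanishing of $1+2Tf'$ and the implicit-function-theorem verification of the derivatives) are exactly what the paper records in the remark immediately following its proof, so your write-up is a slightly more complete version of the same argument.
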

\begin{proof}
The characteristic line of Eq. \eqref{QLE1} is given by
\[
\frac{dY}{dT}=2\kappa_1+c,\qquad \frac{d\kappa_1}{dT}=0.
\]
This implies that $\kappa_1$ is constant along the characteristic,
\[
Y=(2\kappa_1+c)T+Y_0, 
\]
where $Y_0$ is a constant. That is, we have $\kappa_1(Y,T)=f(Y_0)$ along the characteristic line $Y=(2f(Y_0)+c)T+Y_0$. Then the solution is given by \eqref{SQLE}, which is a rarefaction wave, i.e.,
\[
\kappa_1(Y,T)=\frac{Y-Y_0-cT}{2T}.
\]
This completes all the proofs.
\end{proof}
In the proof, one should note that the solution from \eqref{SQLE} is a general solution for \eqref{QLE1}. Then, taking the derivatives, we have
\[
\frac{\partial \kappa_1}{\partial Y}=\frac{f'(Y)}{1+2f'(Y)T},\quad\text{and}\quad
\frac{\partial\kappa_1}{\partial T}=\frac{-(2\kappa_1+c)f'(Y)}{1+2f'(Y)T},
\]
where $f'(Y)=\frac{df(Y)}{dY}$. This shows that if the initial data $f(Y)$ decreases, i.e., $f'(Y)<0$, in some region,
then the solution develops a shock wave at a finite time $T>0$, i.e., both derivatives of $\kappa_1$ become singular. Section \ref{sec:d} shows that the singularity corresponds to a resonant interaction of solitons, and it can be regularized by generating a soliton.

\subsection{The initial value problems for the {$\kappa$}-system}\label{sec:IVP}
{In this paper, we study the initial value problem of the $\kappa$-system with the following initial data}
	\begin{eqnarray}\label{45}
		u(x,y,0)=u^+_0(x,y)H(y)+u^-_0(x,y)H(-y),
	\end{eqnarray}
where $u_0^{\pm}(x,y)$ are some KP solitons (see \eqref{01}) at $t=0$, and $H(y)$ is the unit step function, 
$H(y)=1$ for $y>0$ and $H(y)=0$ for $y<0$. In particular, we consider the two cases (see \cite{Kao, Kodama3}) as shown in Figure \ref{fig96}.
 \begin{figure}[htbp]
  	\begin{minipage}[t]{1\linewidth}
  	\centering
  	\includegraphics[width=11.03cm,height=3cm]{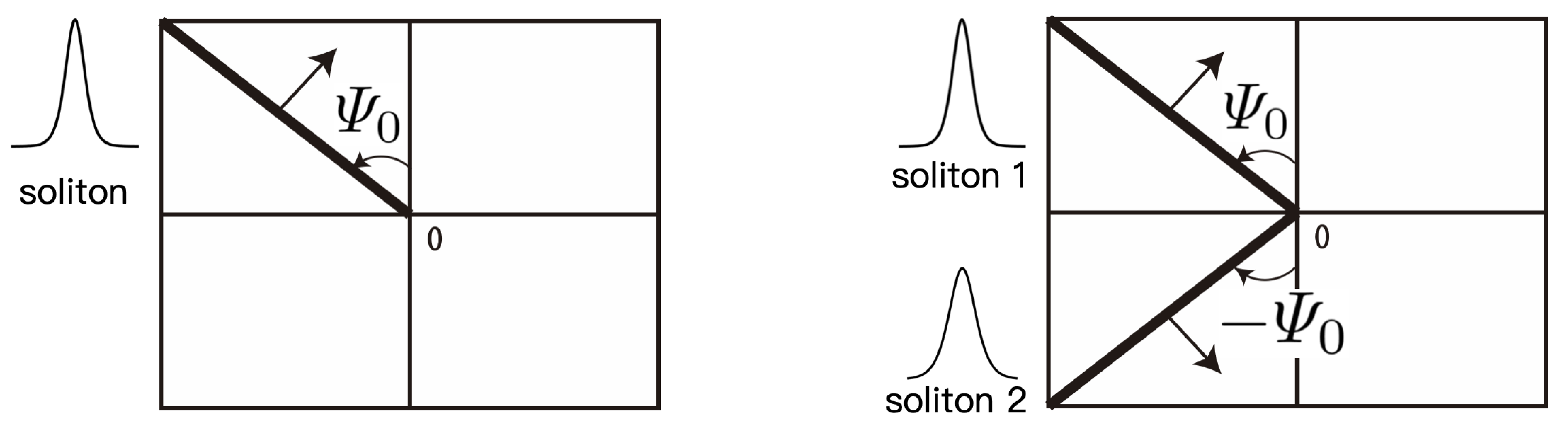}
  	\end{minipage}
  	\caption{The initial data \eqref{45}. Each bold face line shows a semi-infinite line-soliton (half-soliton). In the right panel, the amplitude of \emph{soliton 2} ($u_0^-$) is fixed to be 2, and that of \emph{soliton 1}
    ($u_0^+$) is a variable $A_0$. {The angle satisfies $-\tfrac{\pi}{2}<\Psi_{0}<\tfrac{\pi}{2}$.}} 
  	\label{fig96}
  \end{figure}
{The initial value problem \eqref{45} considered in this paper involves two types of initial data: the half-soliton initial data and the V-shaped initial data (formed by joining two different half-solitons), which are studied in Section 4 and Section 5, respectively. Although the half-soliton initial data can be regarded as a special case of the V-shaped initial data, i.e., when $u_0^{+}(x,y)=0$ or $u_0^{-}(x,y)=0$ in \eqref{45}, it serves as the fundamental building block for the analysis of the V-shaped case.}

{We also note that the ``bent soliton'' problem studied in \cite{Ryskamp1} (see (4.14) therein) requires $A_0=2$ and $0<\Psi_0<\tfrac{\pi}{2}$ in \eqref{45}. In contrast, the initial condition \eqref{45} allows more general V-shaped configurations, i.e., $A_0$ is arbitrary and $-\tfrac{\pi}{2}<\Psi_{0}<\tfrac{\pi}{2}$.}

{In the following section, we also perform numerical simulations of the KP equation \eqref{0}. These simulations were originally implemented in \cite{Kao}, using a Fourier discretization in $x$ together with a super-Gaussian window function to handle the non-periodic data in $y$. The same numerical scheme was also adopted in \cite{McDowell,Ryskamp1,Yuan2}.}

\section{The initial value problems with a half line-soliton initial data}\label{sec:H}
We first consider the initial data consisting of a half line-soliton for $y>0$,
\begin{equation}\label{ID-H}
u(x,y,0)=u_0(x,y)H(y),
\end{equation}
where $u_0(x,y)$ is a line-soliton with the parameters $\{\kappa_1^0,\kappa_2^0\}$, i.e.,
\begin{equation}\label{u0}
u_0(x,y)=A_0\sech^2\sqrt{\frac{A_0}{2}}(x+\tan\Psi_0 y)=\frac{(\kappa^0_1-\kappa^0_2)^2}{2}\sech^2\frac{\kappa^0_2-\kappa^0_1}{2}(x+(\kappa^0_1+\kappa^0_2)y).
\end{equation}
{This initial data \eqref{ID-H} corresponding to the case in the left panel of Figure \ref{fig96}, i.e., $u_0^{+}:=u_0$ and $u_0^{-}=0$ in \eqref{45}.} We show an example in Figure \ref{fig12}, in which we also show the initial
data for the $\kappa$-system and the ``incomplete'' chord diagram. What we mean by ``incomplete'' is that the half-soliton for $y>0$ represents just the upper part of the chord diagram of the ``full'' line-soliton corresponding to the permutation $\pi=(2,1)$. Also note in this figure, the coordinates are the slow scales $(X,Y)$.
\begin{figure}[H]
	\begin{minipage}[t]{1\linewidth}
		\centering
		\includegraphics[width=11.5cm,height=3.44cm]{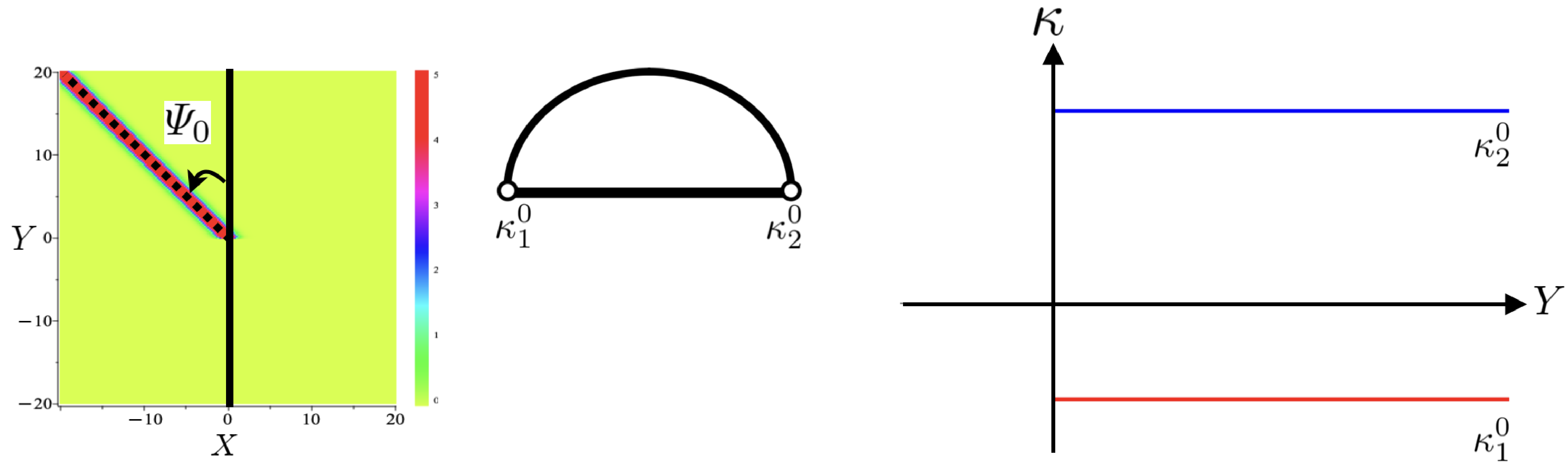}
	\end{minipage}%
	\caption{Example of a half-soliton initial data, incomplete chord diagram, and the $\kappa$-graph. In the right panel, the red line and the blue line represent the initial values of $\kappa_1$ and $\kappa_2$, i.e., $\kappa_1(Y,0)=\kappa_{1}^{0}$ and $\kappa_2(Y,0)=\kappa_{2}^{0}$ for $Y>0$.
	Here we take $\kappa_1^0=-1$ and $\kappa_2^0=2$, so that we have $A_{0}=\tfrac{9}{2}$ and $\tan\Psi_0=1$.}
	\label{fig12}
\end{figure} 

Extending the $\kappa$-graph for $Y<0$ as in Figure \ref{fig113}, we consider the initial value problem for $\kappa_2$ with $\kappa_1=\kappa_1^0=$constant, i.e.,
\begin{equation}\label{kappa2}
\frac{\partial \kappa_2}{\partial T}+(\kappa_1^0+2\kappa_2)\frac{\partial \kappa_2}{\partial Y}=0\qquad\text{with}\quad \kappa_2(Y,0)=
\left\{\begin{array}{ll}
\kappa_1^0,&\quad Y<0,\\
\kappa_2^0,&\quad Y>0.
\end{array}\right.
\end{equation}
\vskip -0.2cm
\begin{figure}[htbp]
	\begin{minipage}[htb]{1\linewidth}
	\centering
	\includegraphics[width=5.03cm,height=3cm]{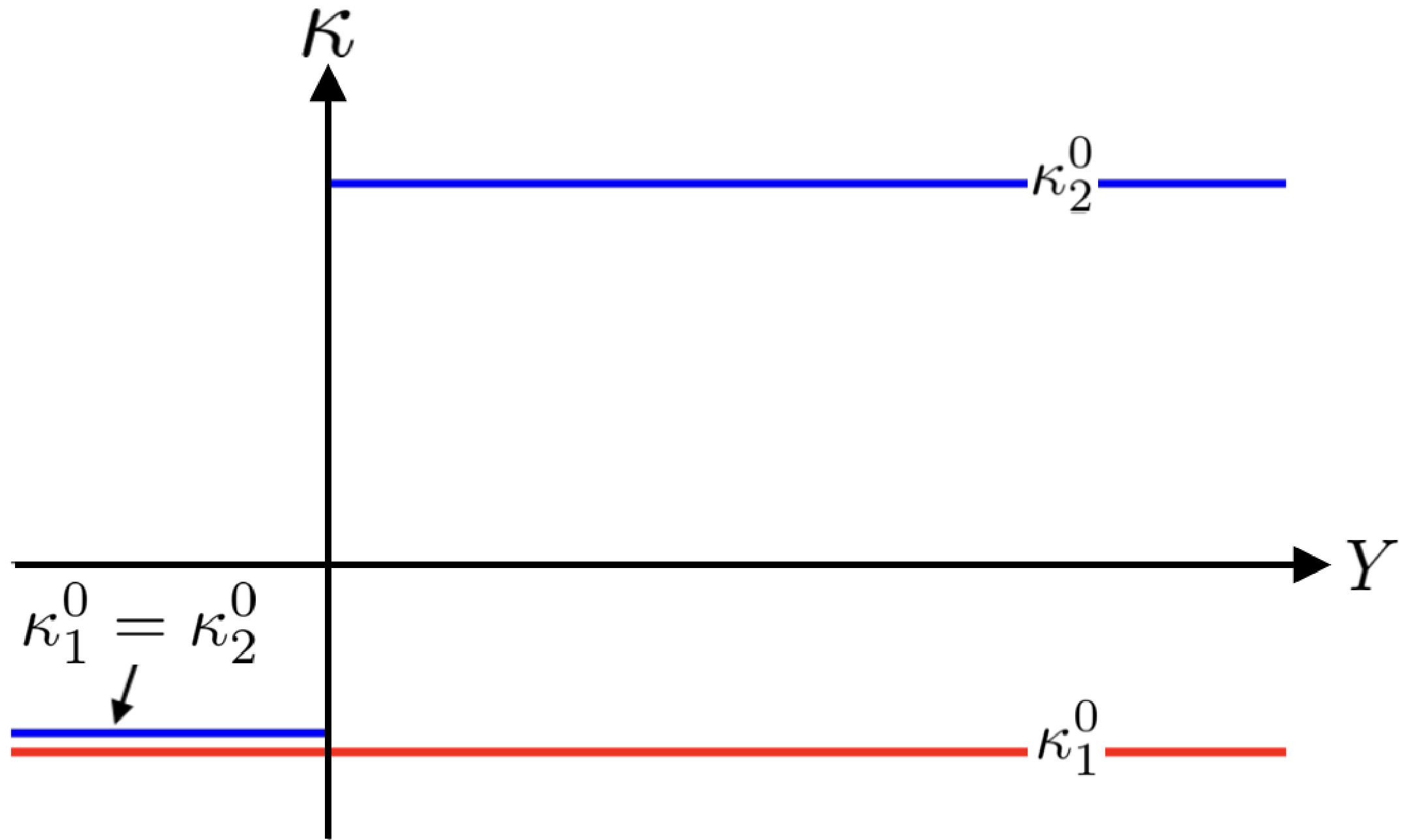}
	\end{minipage}%
 \caption {The Initial data corresponding to the half-line soliton in Figure \ref{fig12}.}
    \label{fig113}
\end{figure} 
{We remark that the extension should be consistent with the given initial data, and that the initial value problem with the extended initial data should be well-posed (admitting a global solution for $T>0$).} Indeed, we have the following proposition.
\begin{proposition}\label{p2}
	The initial value problem of the $\kappa$-system \eqref{kappa2} has a unique global solution
\begin{eqnarray}\label{12}
\kappa_{2}(Y,T)=
\left\{\begin{array}{ll}
\kappa_{1}^{0},\qquad &\text{for}\quad Y<Y_{b}(T),\\[1.0ex]
\displaystyle{\kappa_{1}^{0}+\frac{\kappa_{2}^{0}-\kappa_{1}^{0}}{Y_{a}(T)-Y_{b}(T)}(Y-Y_{b}(T))},\qquad &\text{for}\quad Y_{b}(T)<Y<Y_{a}(T),\\[2.0ex]
\kappa_{2}^{0},\qquad &\text{for}\quad Y_{a}(T)<Y,
\end{array} \right.
\end{eqnarray}
where $Y_{a}(T)=(\kappa_{1}^{0}+2\kappa_{2}^{0})T$ and $Y_{b}(T)=3\kappa_{1}^{0}T$  (note that $Y_a-Y_b=2(\kappa_2^0-\kappa_1^0)T>0$ for $T>0$).
\end{proposition}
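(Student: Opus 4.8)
The plan is to read \eqref{kappa2} as a scalar quasi-linear conservation law with convex flux and to construct its unique global solution as a centered rarefaction fan, which is precisely the situation covered by Lemma~\ref{simple}. Since $\kappa_1^0$ is constant, the equation can be written in conservation form $\partial_T\kappa_2+\partial_Y F(\kappa_2)=0$ with $F(\kappa_2)=\kappa_1^0\kappa_2+\kappa_2^2$, so that $F'(\kappa_2)=\kappa_1^0+2\kappa_2$ and $F''=2>0$. Because the soliton ordering forces $\kappa_1^0<\kappa_2^0$, the step data jumps \emph{upward}, so the characteristic speed of the right state exceeds that of the left state; the characteristics spread apart rather than cross, and no shock forms. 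This is exactly the monotonically increasing case of Lemma~\ref{simple}, which already guarantees a global solution for $T>0$ through the hodograph form \eqref{SQLE}.

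First I would account for the two constant pieces by tracing characteristics. Those issued from $Y_0<0$ carry the value $\kappa_2=\kappa_1^0$ at speed $\kappa_1^0+2\kappa_1^0=3\kappa_1^0$, filling the region $Y<3\kappa_1^0 T=Y_b(T)$, while those from $Y_0>0$ carry $\kappa_2=\kappa_2^0$ at speed $\kappa_1^0+2\kappa_2^0$, filling $Y>(\kappa_1^0+2\kappa_2^0)T=Y_a(T)$. The intermediate wedge $Y_b(T)<Y<Y_a(T)$ is swept by the centered fan from the origin. On each ray I would seek a self-similar value $\kappa_2=g(Y/T)$; substituting into \eqref{kappa2} yields $g'\,(\kappa_1^0+2g-Y/T)=0$, so in the fan $\kappa_1^0+2g=Y/T$, i.e.
\begin{equation*}
\kappa_2(Y,T)=\frac{Y-\kappa_1^0 T}{2T}.
\end{equation*}
A one-line computation shows this agrees with the linear interpolant in \eqref{12}: since $Y_a-Y_b=2(\kappa_2^0-\kappa_1^0)T$, the slope $\frac{\kappa_2^0-\kappa_1^0}{Y_a-Y_b}=\frac{1}{2T}$, and the interpolant reduces to $\frac{Y-\kappa_1^0 T}{2T}$.

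It then remains to confirm that the three-piece formula \eqref{12} is a genuine global solution and to settle uniqueness. In the two constant regions \eqref{kappa2} holds trivially; in the fan the self-similar expression satisfies it by the substitution above (equivalently, it is the hodograph solution \eqref{SQLE}). Continuity across $Y=Y_b(T)$ and $Y=Y_a(T)$ holds because the fan value equals $\kappa_1^0$ at $Y/T=3\kappa_1^0$ and $\kappa_2^0$ at $Y/T=\kappa_1^0+2\kappa_2^0$, with $Y_a-Y_b>0$ keeping the wedge nonempty for $T>0$; letting $T\to0^+$ collapses the wedge and recovers the step data. The main obstacle is uniqueness, since Lemma~\ref{simple} is stated for smooth increasing data whereas here the data is a discontinuous step: the genuine content is to rule out a traveling discontinuity. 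The decisive structural fact is the convexity $F''=2>0$ together with the upward jump $\kappa_1^0<\kappa_2^0$, for which the Lax--Oleinik entropy condition forbids an increasing jump as an admissible shock and forces the rarefaction. I would close the argument by exhibiting \eqref{12} as the $\delta\to0$ limit of the hodograph solution \eqref{SQLE} for a smooth increasing regularization of the step; the limit is independent of the regularization, which delivers both existence and uniqueness.
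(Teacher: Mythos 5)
Your construction is correct and follows the same basic route as the paper: its proof likewise traces the characteristics of the two constant states to obtain the rays $Y_b(T)=3\kappa_1^0T$ and $Y_a(T)=(\kappa_1^0+2\kappa_2^0)T$ from the one-sided limits $f(0-)=\kappa_1^0$ and $f(0+)=\kappa_2^0$, and then fills the wedge with the linear rarefaction profile of Lemma~\ref{simple}. Where you genuinely go beyond the paper is on the two points its proof leaves implicit: (i) Lemma~\ref{simple} is stated for differentiable increasing data, whereas the data in \eqref{kappa2} is a discontinuous step, and (ii) uniqueness of the resulting weak solution. You address both by recasting the equation in conservation form with the convex flux $F(\kappa_2)=\kappa_1^0\kappa_2+\kappa_2^2$, invoking the Lax--Oleinik admissibility criterion to exclude a persisting upward jump, and realizing \eqref{12} as the regularization-independent limit of smooth hodograph solutions \eqref{SQLE}; the paper simply asserts uniqueness by citing Lemma~\ref{simple}, so your entropy argument is what actually earns the word ``unique'' in the statement. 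All of your computations --- the self-similar fan value $(Y-\kappa_1^0T)/(2T)$, the slope $\frac{\kappa_2^0-\kappa_1^0}{Y_a-Y_b}=\frac{1}{2T}$, and the continuity checks at $Y_a$ and $Y_b$ --- agree with \eqref{12}, so the proposal stands as a more complete version of the paper's own argument rather than a different one.
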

\begin{proof}
As shown in Lemma \ref{simple}, the initial value problem has a global solution,
\[
\kappa_2(Y,T)=f(Y-(\kappa_1^0+2\kappa_2)T).
\]
Since $f(0-):=\lim_{Y\uparrow 0}f(Y)=\kappa_1^0$ and $f(0+):=\lim_{Y\downarrow 0}f(Y)=\kappa_2^0$, we have
\[
Y=Y_b(T)=3\kappa_1^0T,\quad\text{and}\quad Y=Y_a(T)=(\kappa_1^0+2\kappa_2^0)T.
\]
The solution above implies that $\kappa_2$ in the region $Y_b(T)<Y<Y_a(T)$ is linear in $Y$ for fixed $T$ (a rarefaction wave, see Lemma \ref{simple}). Then using the boundary conditions $\kappa_2(Y_b,T)=\kappa_1^0$
and $\kappa_2(Y_a,T)=\kappa_2^0$, we have the result.
\end{proof}

Now we compute the peak trajectory of the perturbed soliton in the $XY$-plane using \eqref{dX},
i.e.,
\[
\frac{\partial X}{\partial Y}=-\tan\Psi_{[1,2]}=-(\kappa_1+\kappa_2).
\]
Integrating this equation for fixed $T$, we have
	\begin{align}\label{1423}
	X(Y,T)&=\displaystyle\int_{Y}^{Y_{a}(T)}(\kappa_{1}^{0}+\kappa_{2}(\eta,T))d\eta+X_{a}(T)\\
	&=-\frac{1}{4T}(Y-Y_a(T))^2-\frac{\kappa_1^0}{2}(Y-Y_a(T))+X_a(T),\nonumber
	\end{align}
where $X_a(T)$ is the edge of the half-soliton at $Y=Y_a(T)$, i.e., from \eqref{Trajectory},
\begin{align*}
X_{a}(T)&=-\tan\Psi_{[1,2]}^0Y_a(T)+C^0_{[1,2]}T\\
&=-(\kappa_{1}^{0}+\kappa_{2}^{0})Y_{a}(T)+((\kappa_{1}^{0})^{2}+\kappa_{1}^{0}\kappa_{2}^{0}+(\kappa_{2}^{0})^2)T. 
\end{align*}
Then from Proposition \ref{p2}, we obtain
	\begin{eqnarray}\label{13}
X=\left\{\begin{array}{ll}
\displaystyle{-\frac{1}{4T}(Y+\kappa_1^0T)^2+(\kappa_1^0)^2T},\qquad & \text{for}\quad Y_{b} < Y < Y_{a},\\[2.0ex]
\displaystyle{-\tan\Psi^0_{[1,2]}Y+C^0_{[1,2]}T},\quad &\text{for}\quad Y_{a}<Y.\\
	\end{array} \right.
	\end{eqnarray}
Note here that there is no trajectory in the region $Y<Y_b$, since the amplitude of the soliton is zero in this region (i.e., $\kappa_2=\kappa_1^0$). Thus, the peak trajectory forms a parabola in the region $Y_a<Y<Y_b$, and we note that the latus rectum increases in time (i.e., the opening is getting wider in time), and  the position of the parabola depends only on $\kappa_1^0$. We call this parabolic part of a ``quasi''-soliton \emph{parabolic}-soliton, and describe it as \emph{parabolic $[1]$-soliton} or simply \emph{$[1]$-soliton} {emphasizing that it depends only on one parameter $\kappa_{1}^{0}$}. 
In the table below, we show the soliton structure, which consists of line-solitons and parabolic-soliton.
		\begin{table}[h]
			\centering
			\begin{tabular}{c|c|c|c}
				\hline
				{Interval}  & $(-\infty, Y_{b})$ & $(Y_{b}, Y_{a})$ & $(Y_{a},+\infty)$ \\  
				\hline 
				Line-soliton & & & ${[1,2]}$ \\ 
				\hline  
				Parabolic-soliton &  & ${[1]}$  & \\   
				\hline
			\end{tabular}
		\end{table}
\noindent
As we will show, in general the solution $u(x,y,t)$ consists of segments of line-solitons and parabolic-solitons. Figure \ref{fig14} shows the results of a numerical simulation of the KP equation, which are in good agreement with the results of \eqref{13}.
\begin{figure}[htbp]
	\begin{minipage}[htp]{1\linewidth}
		\centering
		\includegraphics[width=14.77cm,height=3.5cm]{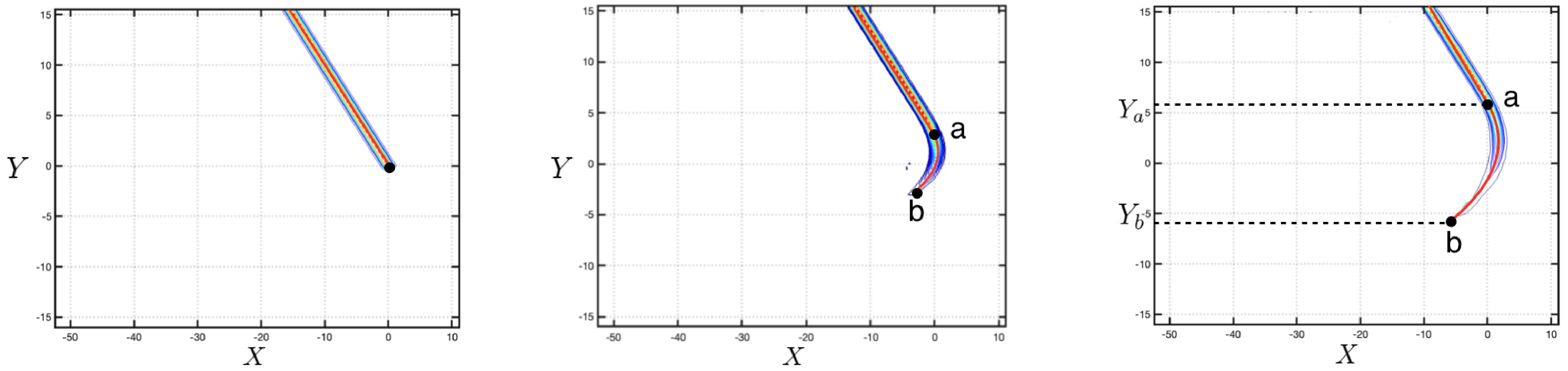}
	\end{minipage}%
	\caption{Numerical simulation and theoretical prediction: the contour plots of the numerical simulation, and the red line represents the theoretical result (\ref{13}) for  $T=0$, $T=1$, and $T=2$. {Here a and b represent the points $(X_{a},Y_{a})$ and $(X_{b},Y_{b})$, respectively.}}
	\label{fig14}
\end{figure} 

In a similar way, we can solve the initial value problem with a half-line initial soliton for $y<0$, i.e.,
the initial data is given by $u(x,y,0)=u_0(x,y)H(-y)$ with $u_0$ in \eqref{u0}.
The initial $\kappa$-parameters are given by
\begin{eqnarray}\label{10}
			\kappa_{1}=\left\{\begin{array}{ll}
				\kappa^{0}_{1},\quad\text{for}\quad &\quad Y<0,\\[1.0ex]
				\kappa^{0}_{2},\quad \text{for}\quad  &\quad Y>0,
			\end{array} \right.\qquad\text{and}\qquad 
			\kappa_{2}=\kappa^{0}_{2},\quad\text{for}\quad  Y\in\mathbb{R}.
\end{eqnarray}
We show in Figure \ref{fig84} an example of the initial data and the corresponding $\kappa$-graphs.
\begin{figure}[H]
	\begin{minipage}[t]{1\linewidth}
		\centering
		\includegraphics[width=11.9cm,height=3.5cm]{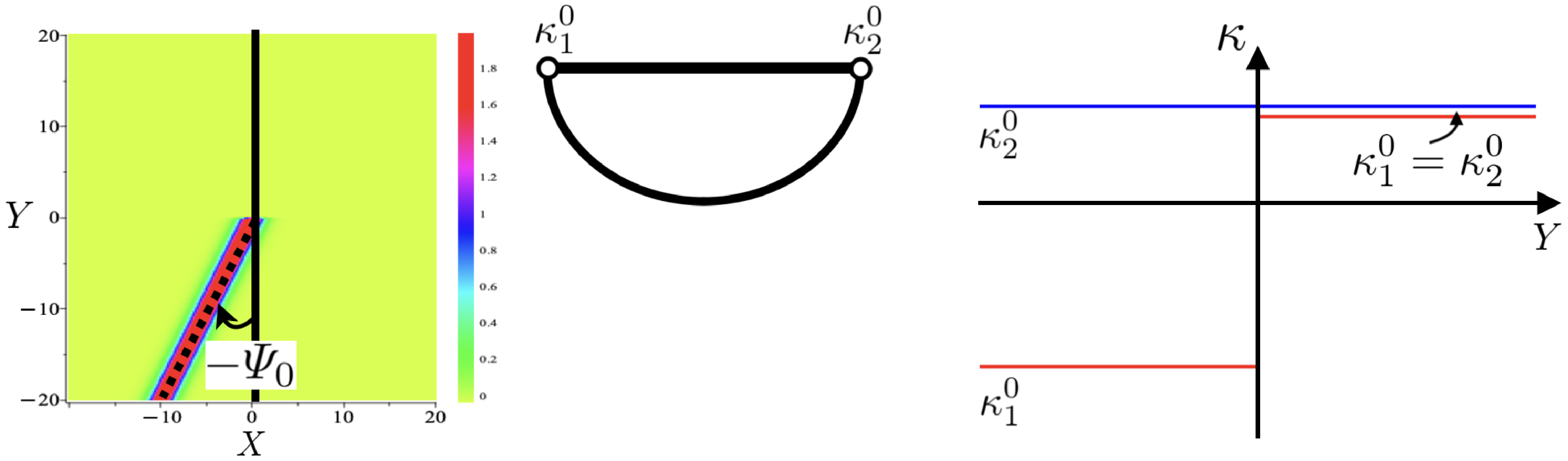}
	\end{minipage}
	\caption{Example of lower semi $(1,2)$-soliton solution, incomplete chord diagram, and the $\kappa$-graph.}
	\label{fig84}
\end{figure}

Figure \ref{fig15} shows the numerical simulation for the KP equation with the initial data \eqref{10} and the solution of the $\kappa$-system with \eqref{10}. Using \eqref{dX}, we can obtain the peak trajectory in
a similar way as before, which is
	\begin{eqnarray}\label{S-HL}
		X=
		\left\{\begin{array}{ll}
\displaystyle{-\frac{1}{4T}(Y+\kappa_2^0T)^2+(\kappa_2^0)^2T},\qquad & \text{for}\quad Y_{a} < Y < Y_{b},\\[2.0ex]
\displaystyle{-\tan\Psi_{[1,2]}^0Y+C^0_{[1,2]}T},\quad & \text{for}\quad Y<Y_{a},\\
		\end{array} \right.
	\end{eqnarray}
where $Y_a=(2\kappa_1^0+\kappa_2^0)T$ and $Y_b=3\kappa_2^0T$. Thus we have a parabolic $[2]$-soliton in the region $Y_a<Y<Y_b$ depending only on $\kappa_2^0$. Moreover, there is no soliton in the region above $Y=Y_b$.
\begin{figure}[htbp]
	\begin{minipage}[htp]{1\linewidth}
		\centering
		\includegraphics[height=3.5cm,width=12.29cm]{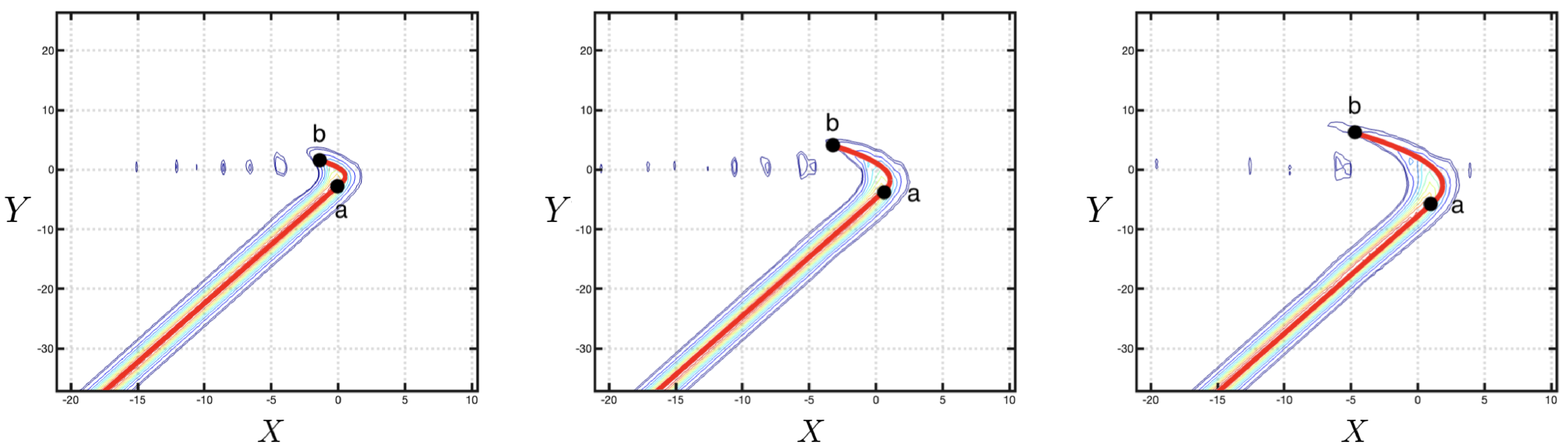}
	\end{minipage}%
	\caption{Numerical simulation and theoretical comparison: the main part is the numerical simulation results, and the red line represents the peak trajectory function. We take $(\kappa_1^0,\kappa_2^0)=(-\frac{5}{4},\frac{3}{4})$,
	and the figures are taken at $T=1,2$ and $3$.}
	\label{fig15}
\end{figure}
{Note that these results are obtained by $\kappa_{1}^{0}\leftrightarrow \kappa_{2}^{0}$ in the previous case as the half-soliton for $Y>0$.}

Before closing this section, we give the following definition for the initial $\kappa$-parameters.
\begin{definition}\label{def:fix}
For the initial $\kappa$-parameters, we define the following:
\begin{itemize}
\item[(a)]
An initial parameter $\kappa_i^0$ is a \emph{fixed} point, if $\kappa_i=\kappa_i^0=$constant
for all $T>0$.
\item[(b)]
An initial parameter $\kappa_i^0$ is a \emph{free} point, if $\kappa_i$ changes in $Y$ for $T>0$.
\end{itemize}
\end{definition}
As will be shown in the following sections, the notion of ``fixed'' and ``free'' will be useful to
describe the evolution of the chord diagram. In particular, we note that a parabola in the peak trajectory of the adiabatic soliton depends only on the fixed point.  

\begin{example} Consider an example of half-soliton with $Y<0$. Figure \ref{fig888} illustrates
the fixed and free points in the initial data.
	\begin{figure}[htbp]
		\begin{minipage}[htp]{1\linewidth}
			\centering
			\includegraphics[width=8.48cm,height=3.5cm]{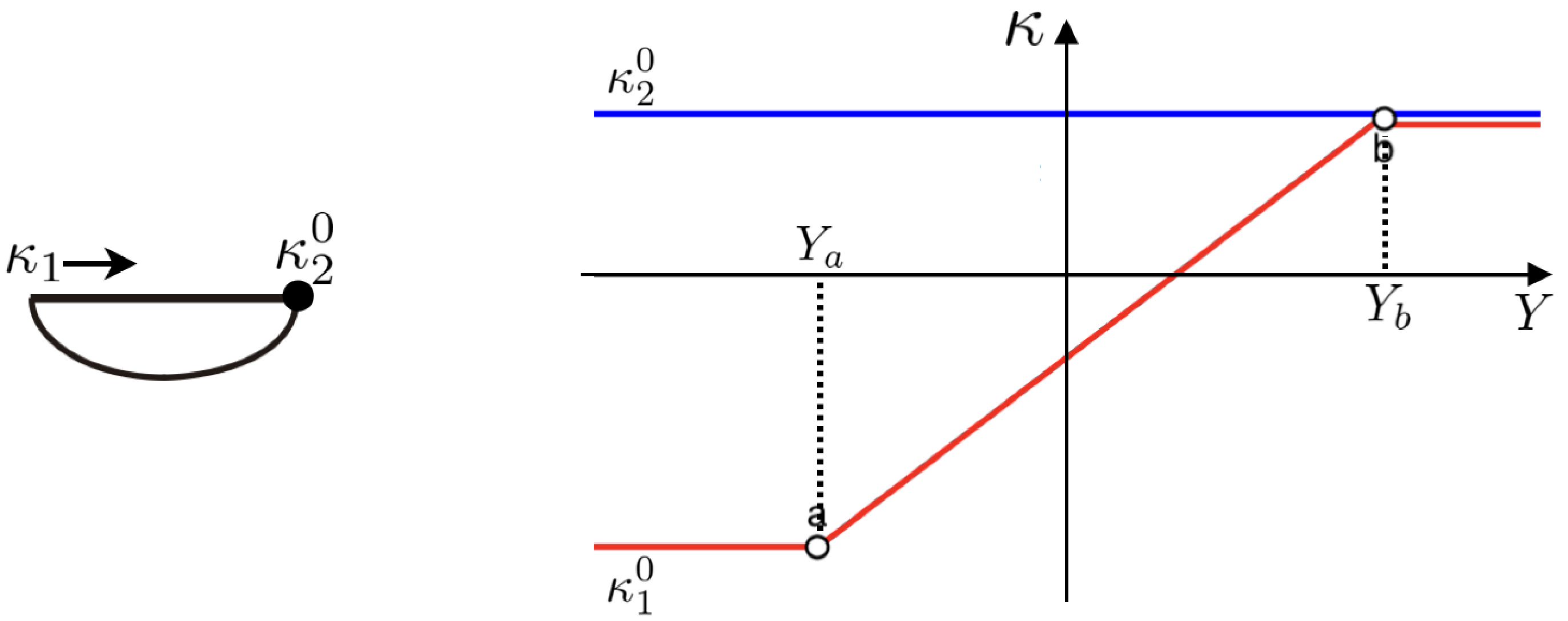}
		\end{minipage}
		\caption{The $\kappa$-graph for an initial half-soliton in $Y<0$. The left panel shows the incomplete chord diagram for the initial value problem, where the $\kappa_2=\kappa_{2}^{0}$ marked by $\bullet$ represents the fixed point. The right panel shows the $\kappa$-graph at $T>0$.} \label{fig888}
	\end{figure}
\end{example}
\section{The initial value problems with V-shaped initial data}\label{Sec:IVP-V}
In this section, {we consider the initial data \eqref{45} corresponds to the right panel of Figure \ref{fig96}}, i.e., $u(x,y,0)=u_0^+(x,y)H(y)+u_0^-(x,y)H(-y)$,
with
\begin{eqnarray}\label{5.1}
	\left\{\begin{array}{ll}
			u^{+}_{0}(x,y)=A_{0}\sech^2 \big(\sqrt{\frac{A_{0}}{2}}(x-y\tan\varPsi_{0})\big),\\[1.5ex]
			u^{-}_{0}(x,y)=2\sech^2(x+y\tan\varPsi_{0}),
		\end{array} \right.
	\end{eqnarray}
where $A_0$ and $\tan\Psi_0$ are free parameters  \cite{Kao, Kodama3}.

Using the parameters $\sqrt{2A_0}$ and $\tan\Psi_0$ {in \eqref{5.1}}, we consider all possibles of the V-shaped initial data as shown in Figure \ref{fig18} (see Chapter 6 in \cite{Kodama3}, also \cite{J:14,Kao}). In this figure,  each region is parametrized by an \emph{incomplete} chord diagram, in which the upper chord represents a half line-soliton in $Y>0$, and the lower chord represents another half line-soliton in $Y<0$. We label the edge points of the chords with
$(\kappa^0_1,\kappa^0_2,\kappa^0_3,\kappa^0_4)$. The solid lines show the cases with $\kappa^0_1=\kappa^0_2$ and $\kappa^0_3=\kappa^0_4$. Note that the case with $\kappa_1^0=\kappa_2^0$ gives the line $2\tan\Psi_0+\sqrt{2A_0}=2$, and the case $\kappa_3^0=\kappa_4^0$ gives $-2\tan\Psi_0+\sqrt{2A_0}=2$ as shown in Figure \ref{fig18}.
Also note that the dashed lines are $\kappa^0_2=\kappa^0_3$, which do not correspond to any permutation, i.e., there is no corresponding soliton solution.
\begin{figure}[H]
	\begin{minipage}[htbp]{1\linewidth}
		\centering
		\includegraphics[width=10.46cm,height=5cm]{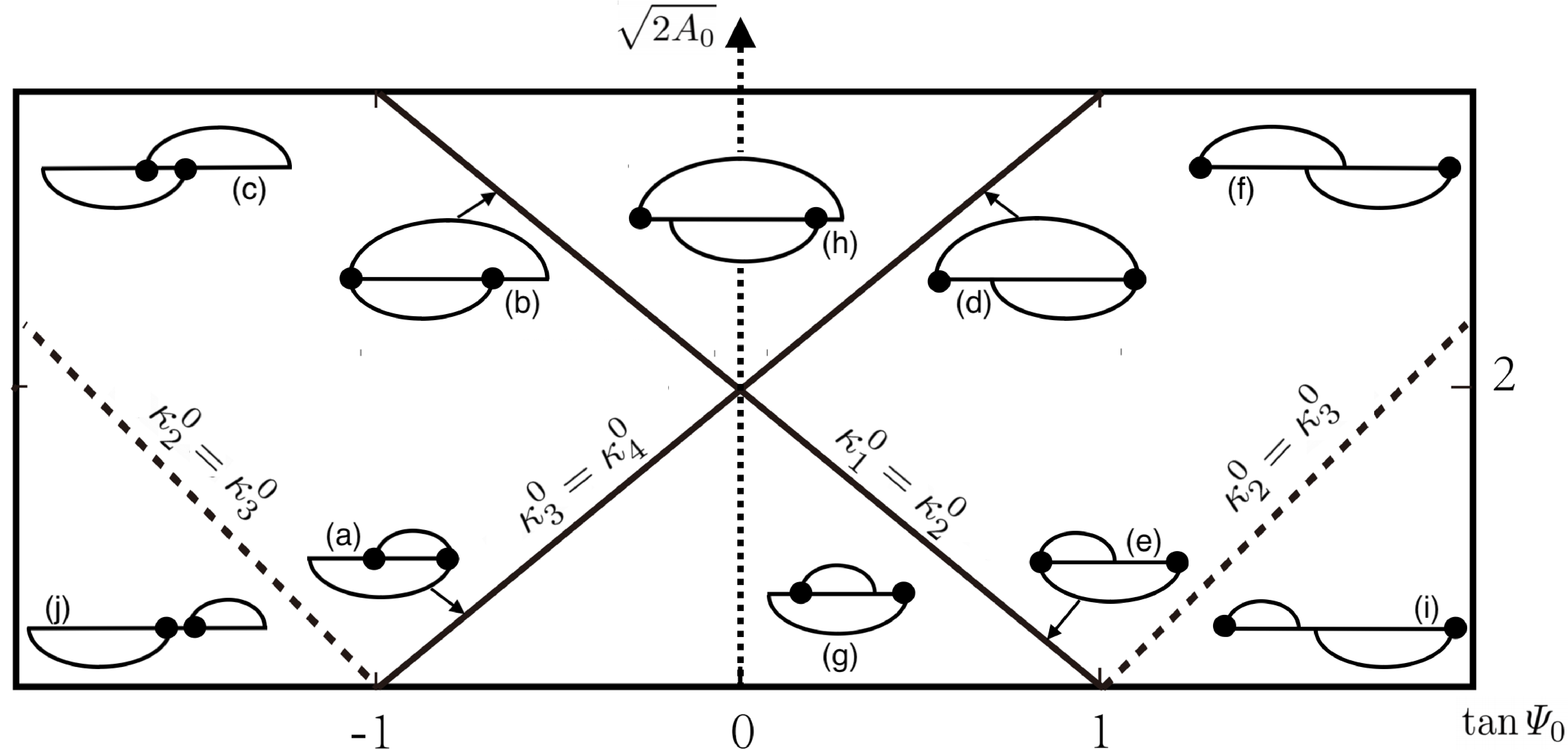}
	\end{minipage}
	\caption{All possible $V$-shaped initial data. Each region is parametrized by a unique incomplete chord diagram. Each $\bullet$ in the diagram marks the fixed point. The length (amplitude) of the lower chord is fixed as 2.} 
	\label{fig18}
\end{figure}

{The table below summarizes the relation between the parameters in \eqref{5.1} and Figure \ref{fig18}. In specific examples, it suffices that $(\kappa_1^0,\kappa_2^0,\kappa_3^0,\kappa_4^0)$ satisfy the parameter relations listed in the table; any values meeting these conditions may be chosen. In the table below, $[i,j]$ denotes the soliton determined by the spectral parameters $\kappa_i^0$ and $\kappa_j^0$. The corresponding information for $u_0^{+}$ and $u_0^{-}$ can be obtained from \eqref{01}, e.g., for the case (c), we set $u_0^{+}=u_{[2,4]}$ and $u_0^{-}=u_{[1,3]}$, where $2=\frac{(\kappa_{1}^0-\kappa_{3}^0)^2}{2}$, $A_{0}=\frac{(\kappa_{2}^0-\kappa_{4}^0)^2}{2}$, and $\tan \Psi_{0}=\kappa_{2}^{0}+\kappa_{4}^{0}=-(\kappa_{1}^{0}+\kappa_{3}^{0})$. In addition, the notes indicate the cases in which two endpoints coincide in the incomplete chord diagram.}
\vskip -0.5cm
{\begin{table}[H]
		\centering
		\begin{tabular}{c|c|c|c|c|c|c|c|c|c|c}
			\hline
			{Case}  & (a) & (b) &(c) &(d) & (e) & (f) & (g) & (h) & (i) & (j) \\  
			\hline 
			$u^{+}_{0}$ & $[2,3]$ & $[1,4]$ & $[2,4]$ & $[1,3]$& $[1,3]$& $[1,3]$ & $[2,3]$ & $[1,4]$ & $[1,2]$ &  $[3,4]$\\ 
			\hline  
			$u^{-}_{0}$ &$[1,3]$& $[1,3]$ & $[1,3]$ & $[2,3]$ & $[2,4]$ & $[2,4]$  & $[1,4]$ & $[2,3]$ & $[3,4]$ & $[1,2]$ \\ 
			\hline  
			{Note}  & $\kappa_3^0=\kappa_4^0$ & $\kappa^0_1=\kappa_2^0$ &  &$\kappa^0_3=\kappa_4^0$ & $\kappa^0_1=\kappa_2^0$ &   &  &  &  &  \\ 
			\hline
		\end{tabular}
\end{table}}

These cases in Figure \ref{fig18} have been \emph{numerically studied} in \cite{Kao}, and 
the authors predict a convergence to some exact soliton solution. The main purpose of the present paper is to study \emph{analytically} these examples, and show the convergence of the solution 
using the $\kappa$-system. In the following sections, we study all the cases in Figure \ref{fig18},
which are labeled in (a) through (j), and find the asymptotic solutions for all the cases.

\subsection{The cases (a) and (b)}
The case (a) corresponds to the case with $\kappa_3^0=\kappa_4^0$ and $\sqrt{2A_0}>2$, in which the half-line solitons
are $[2,3]$-soliton in $Y>0$ and $[1,3]$-soliton in $Y<0$. The initial data for the $\kappa$-system 
\eqref{kappa} is given by (see Figure \ref{fig251})
\begin{equation}\label{a}
\kappa_1(Y,0)=\left\{
\begin{array}{ll}
\kappa_1^0,\quad Y<0,\\
\kappa_2^0,\quad Y>0,
\end{array}\right.
\qquad\text{and}\qquad
\kappa_2(Y,0)=\kappa_3^0.
\end{equation}
\begin{figure}[htbp]
	\begin{minipage}[htp]{1\linewidth}
		\centering
		\includegraphics[height=3cm,width=8.4cm]{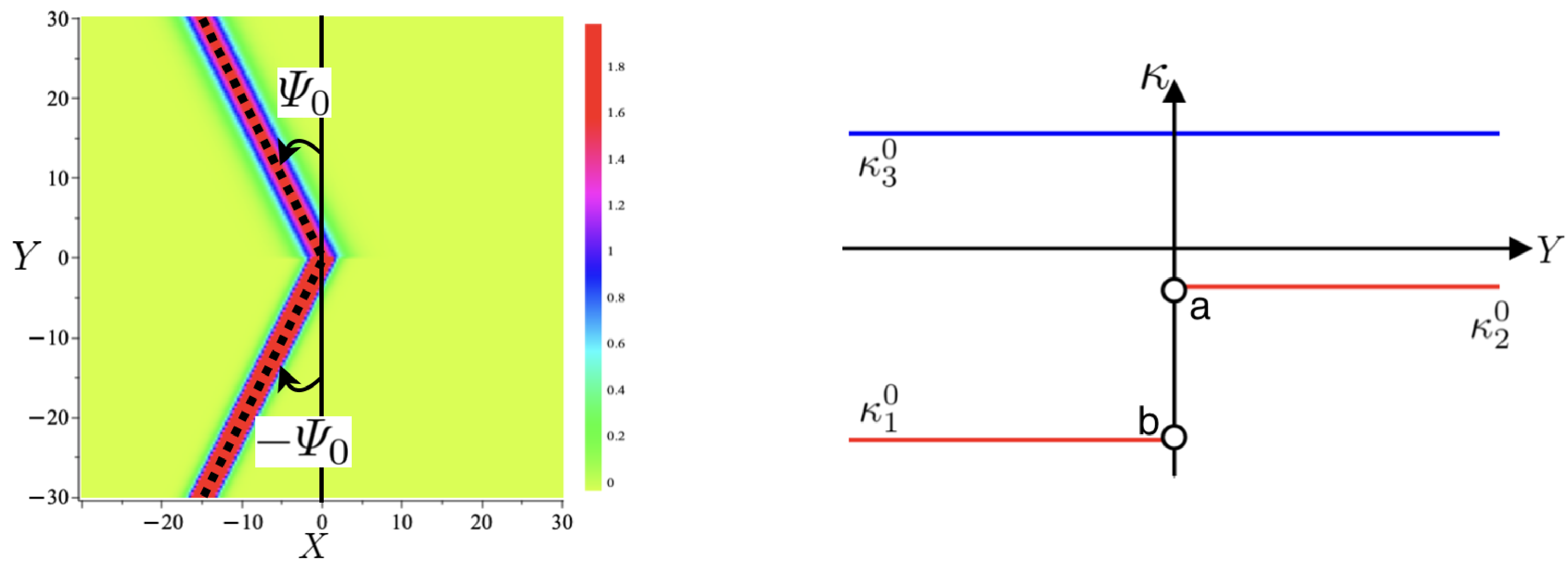}
	\end{minipage}
	\caption {$V$-shaped soliton of the case (a).}
	\label{fig251}
\end{figure}

Note that $\kappa_3^0$ is a fixed point (i.e., constant for all $Y$ at $T>0$). The system gives a simple wave,
which depends only on $\kappa_1(Y,T)$,
\[
\frac{\partial \kappa_1}{\partial T}+(2\kappa_1+\kappa_3^0)\frac{\partial\kappa_1}{\partial Y}=0.
\]
The characteristic velocity is then given by $V(\kappa_1,\kappa_2)=2\kappa_1+\kappa_3^0$.
Since the initial data of $\kappa_1$ increases in $Y$, we have a global solution,
\begin{equation}\label{Sa}
\kappa_1(Y,T)=\left\{\begin{array}{lll}
\kappa_1^0,\qquad &Y<Y_b(T),\\
\displaystyle{\kappa_1^0+\frac{\kappa_2^0-\kappa_1^0}{Y_a-Y_b}(Y-Y_b)},\qquad & Y_b(T)<Y<Y_a(T),\\
\kappa_2^0,\qquad &Y>Y_a(T),
\end{array}\right.
\end{equation}
where $Y_a(T)=(2\kappa^0_2+\kappa_3^0)T$ and $Y_b(T)=(2\kappa^0_1+\kappa_3^0)T$.
Note that $Y_a-Y_b=2(\kappa^0_2-\kappa_1^0)T$, that is, the slope of the $\kappa_1$ in the region $Y_b<Y<Y_a$ is $(2T)^{-1}$, and  $\kappa_1(Y,T)$ in the region $(Y_b,Y_a)$ is expressed by the form depending only on the fixed point $\kappa_3^0$,
\[
\kappa_1(Y,T)=\frac{1}{2T}(Y-\kappa_3^0T).
\]

The peak trajectory can be computed by integrating the slope equation $\frac{\partial X}{\partial Y}=-\tan\Psi_{[i,j]}$ as shown in the previous section, and we obtain
\begin{equation}\label{Ta}
X(Y)=\left\{
\begin{array}{lll}
-\tan\Psi_{[1,3]}^0Y+C_{[1,3]}^0T,\qquad & Y<Y_b(T),\\[1.0ex]
\displaystyle{-\frac{1}{4T}(Y+\kappa_3^0T)^2+(\kappa_3^0)^2T},\qquad & Y_b(T)<Y<Y_a(T),\\[2.0ex]
-\tan\Psi_{[2,3]}^0 Y+C^0_{[2,3]}T,\qquad & Y>Y_a(T).
\end{array}
\right.
\end{equation}
Thus, the half-solitons in $Y>Y_a$ and $Y<Y_b$ are connected through the parabolic-soliton depending  only on $\kappa_3^0$, that is, $[1,3]$-soliton and $[2,3]$-soliton are connected by $[3]$-soliton, that is, we have the following table:
		\begin{table}[h]
			\centering
			\begin{tabular}{c|c|c|c}
				\hline
				{Interval}  & $(-\infty, Y_{b})$ & $(Y_{b}, Y_{a})$ & $(Y_{a},+\infty)$ \\  
				\hline 
				Line-soliton   & $[1,3]$& & ${[2,3]}$ \\ 
				\hline  
				Parabolic-soliton &  & ${[3]}$  & \\   
				\hline
			\end{tabular}
		\end{table}

\noindent
The numerical simulation with the peak trajectory is shown in Figure \ref{fig26}.
\begin{figure}[htbp]
	\begin{minipage}[htb]{1\linewidth}
		\centering
		\includegraphics[height=3cm,width=10.36cm]{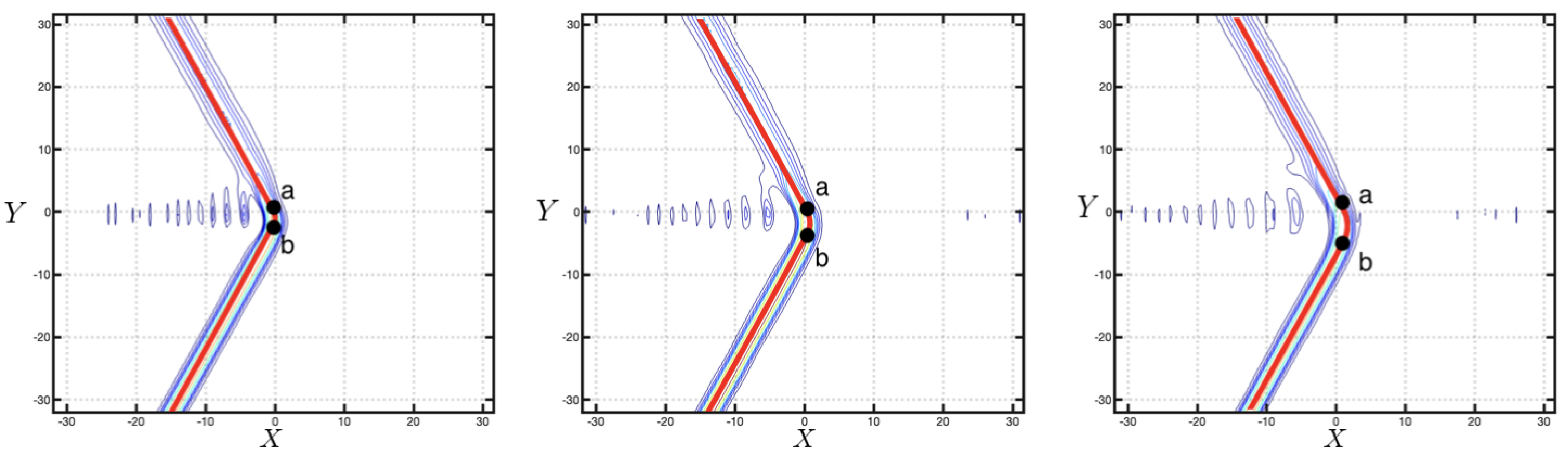}
	\end{minipage}%
	\caption {Numerical simulation for the case (a).  We take $(\kappa_1^0,\kappa_2^0,\kappa_3^0)=(-\frac{5}{4},-\frac{1}{4},\frac{3}{4})$,
	and the figures are taken at $T=1,2$ and $3$. The peak trajectories are shown as the solid curves, and the curve between points a and b is a parabola connecting upper and lower solitons.
	 }
	\label{fig26}
\end{figure}

\bigskip

 The case (b) corresponds to that of $\kappa^0_1=\kappa_2^0$ and $\sqrt{2A_0}>2$. The line-soliton of V-shape initial wave are $[1,4]$-soliton in $Y>0$ and $[1,3]$-soliton in $Y<0$. 
  We take the initial data of this case as
	\begin{eqnarray*}
			\kappa_{1}=\kappa^{0}_{1},\quad\text{for}\quad Y\in\mathbb{R},\qquad\text{and}\qquad 
			\kappa_{2}=\left\{\begin{array}{ll}
				\kappa^{0}_{3}, \quad&\text{for}\quad Y<0,\\
				\kappa^{0}_{4},&\text{for}\quad Y>0.
			\end{array} \right.
	\end{eqnarray*}
	Since $\kappa_1=\kappa_1^0$ for all $Y$, the $\kappa$-system gives a simple wave solution.
Note that the initial data $\kappa_{2}$ increases in $Y$, and the characteristic speed is $V_{2}(0+)>V_{2}(0-)$ form $\kappa_{3}^{0}<\kappa_{4}^{0}$ (i.e., $V_{2}=\kappa_1^0+2\kappa_2$). From Lemma \ref{simple}, the $\kappa$-system admits a global solution with the initial data. 
This is similar to the case ($a$), and we omit the details of the initial value problem.

\subsection{The case (c)}
The V-shape initial waves are $[2,4]$-soliton in $Y>0$ and $[1,3]$-soliton in $Y<0$ (see Figure \ref{fig132}).
\begin{figure}[htbp]
	\begin{minipage}[htp]{1\linewidth}
		\centering
		\includegraphics[height=3.5cm,width=9.71cm]{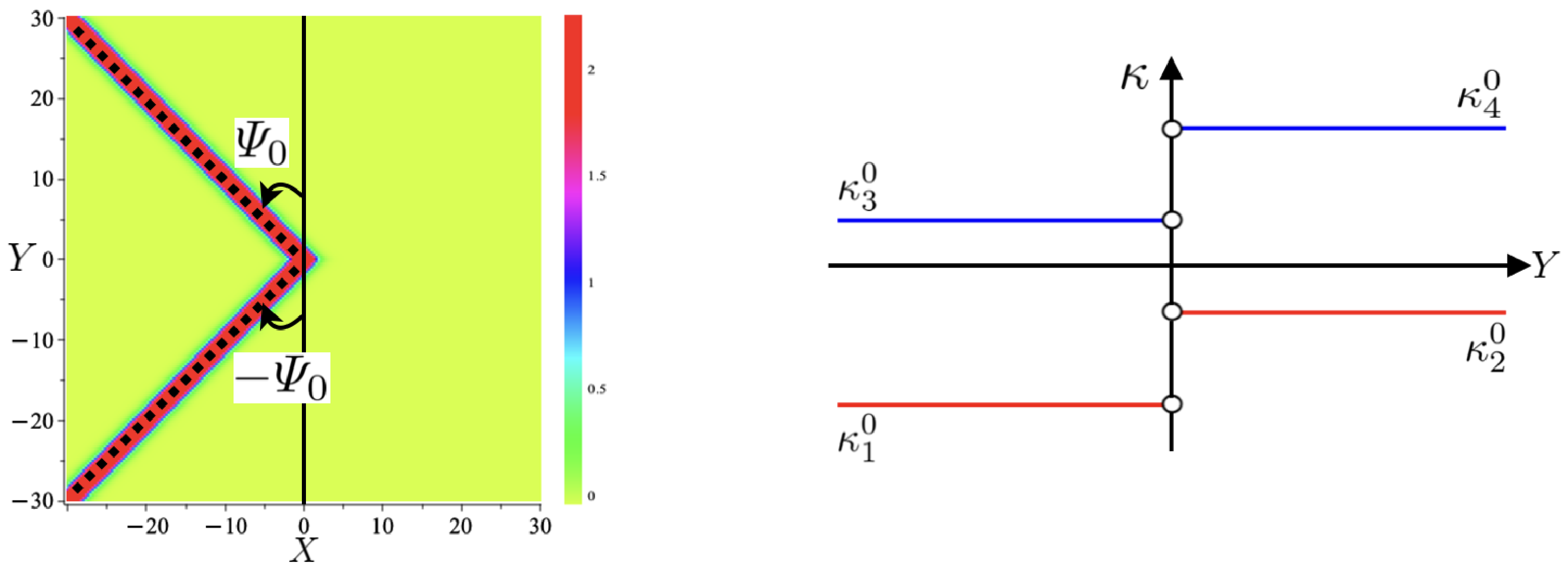}
	\end{minipage}
	\caption{The $V$-shaped initial data of the case (c). The right panel is the
	$\kappa$-graph. Here we take $(\kappa_{1}^{0},\kappa_{2}^{0},\kappa_{3}^{0},\kappa_{4}^{0})=(-\frac{3}{2},-\frac{1}{2},\frac{1}{2},\frac{3}{2})$.}
	\label{fig132}
\end{figure}
Note first that the initial data for the $\kappa$-system shown in Figure \ref{fig132} is not well-defined at $Y=0$. Then we consider the following \emph{regularization} of the initial data with a parameter
$0<\varepsilon\ll1$,
\begin{equation}\label{R-IDc}
\kappa_1=\left\{ \begin{array}{lll}
\kappa_1^0,\quad &\text{for}\quad  Y<-\varepsilon,\\
\kappa_2^0, &\text{for}\quad Y>-\varepsilon,
\end{array}\right.
\qquad \text{and}\qquad \kappa_2=\left\{ \begin{array}{lll}
\kappa_3^0,\quad &\text{for}\quad  Y<\varepsilon,\\
\kappa_4^0, &\text{for}\quad Y>\varepsilon,
\end{array}\right.
\end{equation}
(see Figure \ref{fig312}).
\begin{figure}[htbp]
	\begin{minipage}[htp]{1\linewidth}
		\centering
		\includegraphics[height=3cm,width=5.31cm]{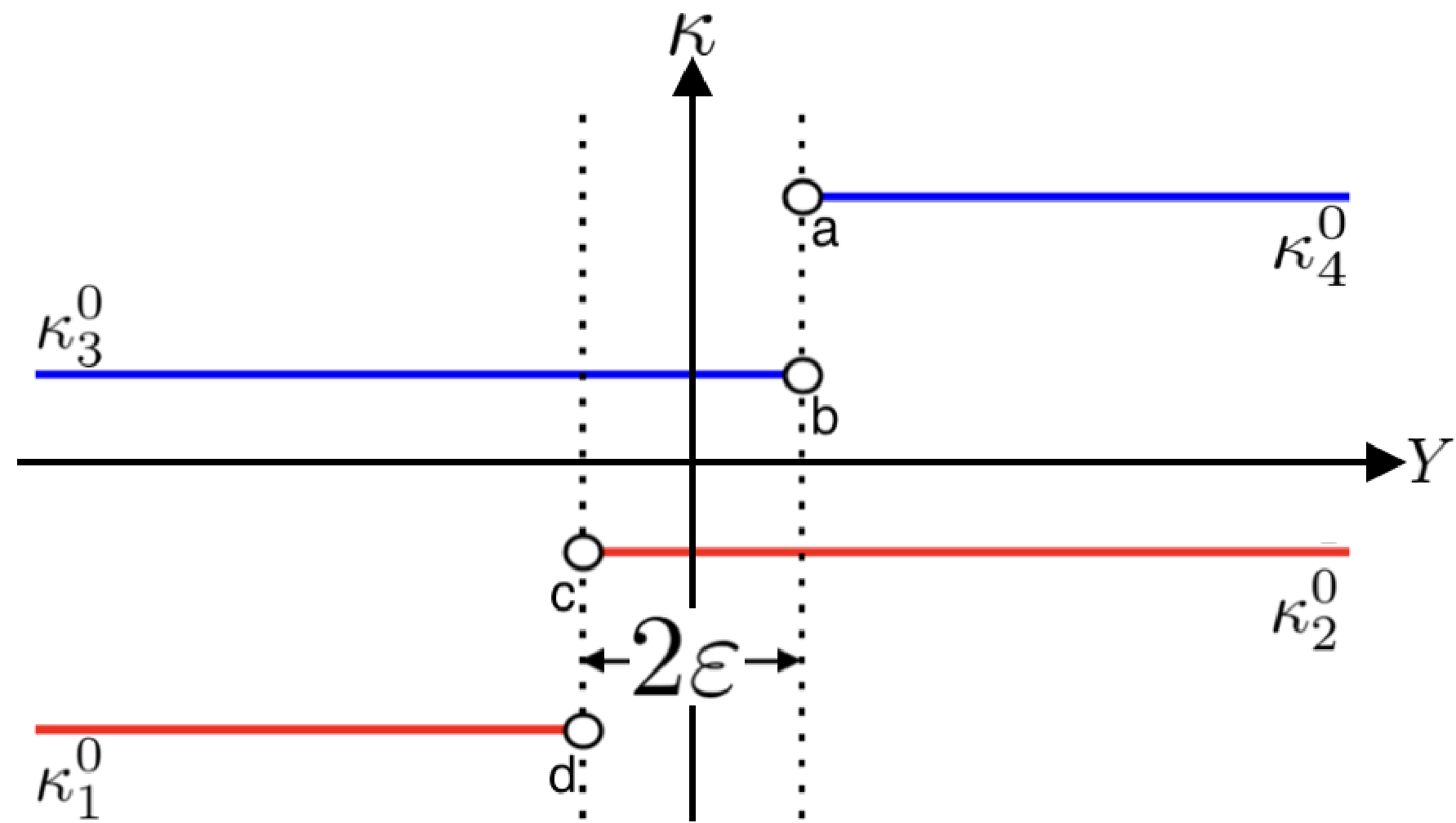}
	\end{minipage}
	\caption{Regularized initial data for the case (c). The right panel shows the initial conditions corresponding to the $\kappa$-graph.}
	\label{fig312}
\end{figure}
One should note that this regularization is to add a small soliton of $[2,3]$-type around $Y=0$, i.e., 
\[
u(x,y,0)=u_{[2,4]}^0(x,y)H(y-\delta)+u_{[1,3]}H(-(y+\delta))+u_{[2,3]}^0(x,y)\left(H(y+\delta)-H(y-\delta)\right),
\]
where $u_{[i,j]}^0(x,y)$ is the $[i,j]$-soliton at $t=0$, and a small number $\delta=\varepsilon\epsilon^{-1}\ll 1$.  Then we compute the characteristic velocities $V_1=2\kappa_1+\kappa_2$ and $V_2=\kappa_1+2\kappa_2$ at the points $Y=\pm\varepsilon$, and we obtain 
\begin{align}\label{Vc}
&V_d:=\lim_{Y\uparrow -\varepsilon}V_{1}(Y)=2\kappa_1^0+\kappa_3^0\quad <\quad V_c:=\lim_{Y\downarrow -\epsilon}V_{1}(Y)=2\kappa_2^0+\kappa_3^0
 \\
&\quad <\quad V_b:=\lim_{Y\uparrow \varepsilon}V_{2}(Y)=2\kappa_3^0+\kappa_2^0\quad <\quad V_a:=\lim_{Y\downarrow \varepsilon}V_{2}(Y)=2\kappa_4^0+\kappa_2^0.\nonumber
\end{align}
This implies that the initial value problem of the $\kappa$-system \eqref{kappa} with the initial data \eqref{R-IDc} admits the global solution (rarefaction wave). The $\kappa$-graphs and the numerical simulations are shown in Figure \ref{fig17}.
\begin{figure}[H]
	\begin{minipage}[htp]{1\linewidth}
		\centering
		\includegraphics[height=5cm,width=15.5cm]{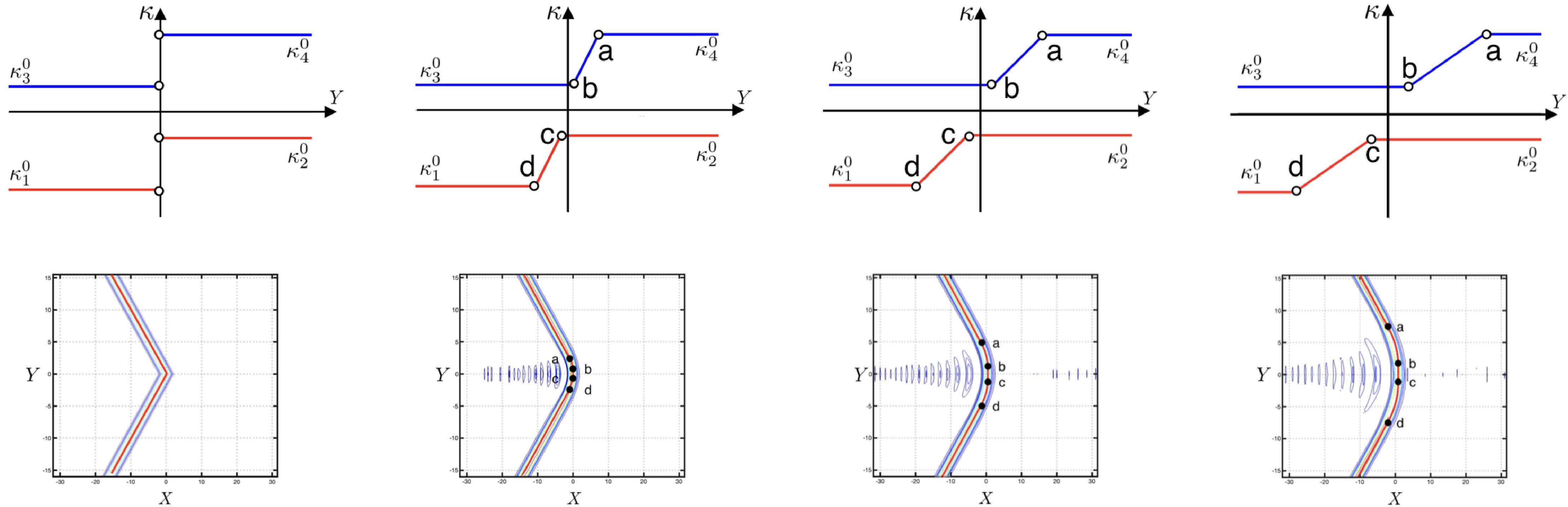}
	\end{minipage}%
	\caption{The $\kappa$-graphs and the numerical simulations for the case (c). The counter plots are obtained at $T=0,1,2,3$. The peak trajectories in the intervals  $(Y_b,Y_a)$ and $(Y_d,Y_c)$ are parabolas. }
	\label{fig17}
\end{figure}
One should note that the solution of this initial value problem depends on the $\varepsilon$,
say $\kappa_i(Y,T;\varepsilon)$, 
and the limit $\varepsilon\downarrow 0$ of the solution is well-defined. An explicit form of the peak trajectory is
given by
\begin{equation}\label{Sc}
X(Y)=\left\{\begin{array}{lll}
-\tan\Psi_{[1,3]}^0Y+C_{[1,3]}^0T,\qquad &\text{for}\quad Y<Y_d(T),\\[1.0ex]
\displaystyle{-\frac{1}{4T}(Y+\kappa_3^0 T)^2+(\kappa_3^0)^2T},  &\text{for}\quad Y_d(T)<Y<Y_c(T),\\[2.0ex]
-\tan\Psi_{[2,3]}^0Y+C^0_{[2,3]}T,  &\text{for}\quad Y_c(T)<Y<Y_b(T),\\[1.0ex]
\displaystyle{-\frac{1}{4T}(Y+\kappa_2^0T)^2+(\kappa_2^0)^2T},  &\text{for}\quad Y_b(T)<Y<Y_a(T),\\[2.0ex]
-\tan\Psi_{[2,4]}^0Y+C^0_{[2,4]}T, &\text{for}\quad Y>Y_a(T),
\end{array}\right.
\end{equation}
where $Y_\alpha(T)=V_\alpha T$ for $\alpha=a,b,c,d$.
{Note here that there are two parabolic solitons $[2]$ and $[3]$, and each parabolic-soliton tangentially connects one half-soliton to another half-soliton.} We summarize the solutions in the table:
		\begin{table}[H]
			\centering
			\begin{tabular}{c|c|c|c|c|c}
				\hline
				{Interval}  & $(-\infty, Y_{d})$ & $(Y_{d}, Y_{c})$ &$(Y_c, Y_b)$ &$(Y_b,Y_a)$ & $(Y_{a},+\infty)$ \\  
				\hline 
				Line-soliton   & $[1,3]$ &  & $[2,3]$ &   & ${[2,4]}$ \\ 
				\hline  
				Parabola-soliton &  & ${[3]}$  & & $[2]$ & \\   
				\hline
			\end{tabular}
		\end{table}
\noindent
One should also note that the $[2,3]$-soliton in the middle remains for $t\to\infty$, that is, we have
$[2,3]$-soliton as the asymptotic solution $u_0(x,y,t)$ in the sense of \eqref{localstability}.

\subsection{The cases (d) and (e)} \label{sec:d}
 The case (d) is a critical case with $\kappa_{3}^{0}=\kappa_{4}^{0}$, $\sqrt{2A_{0}}>2$. 
 The V-shape soliton and $\kappa$-graph corresponding to this initial value are shown in Fig.~\ref{fig19}.
\begin{figure}[htbp]
	\begin{minipage}[htb]{1\linewidth}
		\centering
		\includegraphics[height=3cm,width=10.32cm]{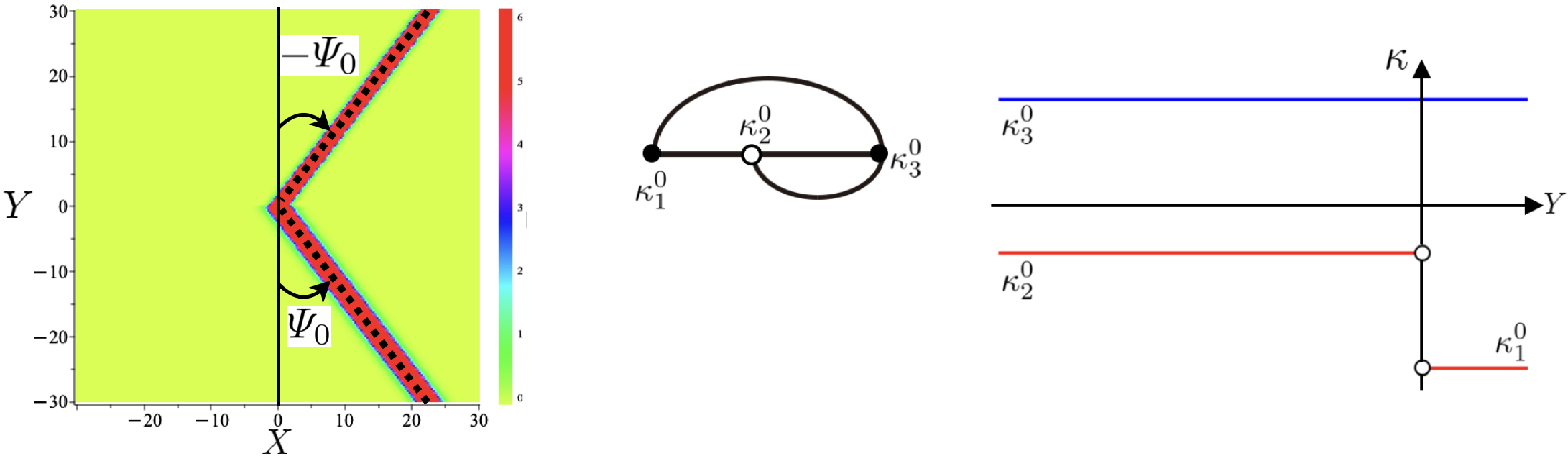}
	\end{minipage}
	\caption{The $V$-shaped initial data of the case (d). The middle panel shows the corresponding incomplete chord diagram with fixed points marked by $\bullet$.}
	\label{fig19}
\end{figure}
Since $\kappa_2^0$ is constant for all $Y$, we have a simple wave system for $\kappa_1$,
\begin{equation}\label{d}
\frac{\partial \kappa_1}{\partial Y}+(2\kappa_1+\kappa_2^0)\frac{\partial\kappa_1}{\partial Y}=0,\qquad\text{with}\qquad \kappa_1(Y,0)=\left\{\begin{array}{ll}
\kappa_2^0,\quad Y<0,\\
\kappa_1^0,\quad Y>0.
\end{array}\right.
\end{equation}
This initial value problem is not well-posed, because the initial data is not increasing (see Lemma \ref{simple}) and the characteristic velocities $V_1$ at $Y=0{\pm}$ satisfy
\[
V_1(0-)=2\kappa_2^0+\kappa_3^0\quad >\quad V_1(0+)=2\kappa_1^0+\kappa_3^0.
\]
Then the solution develops a singularity, i.e., a shock wave (see Lemma \ref{simple}).
To resolve the singularity, we propose a \emph{regularization} to the initial data as shown in Figure
\ref{fig210}.
\begin{figure}[htbp]
	\begin{minipage}[H]{1\linewidth}
	\centering
	\includegraphics[height=3.2cm,width=10.96cm]{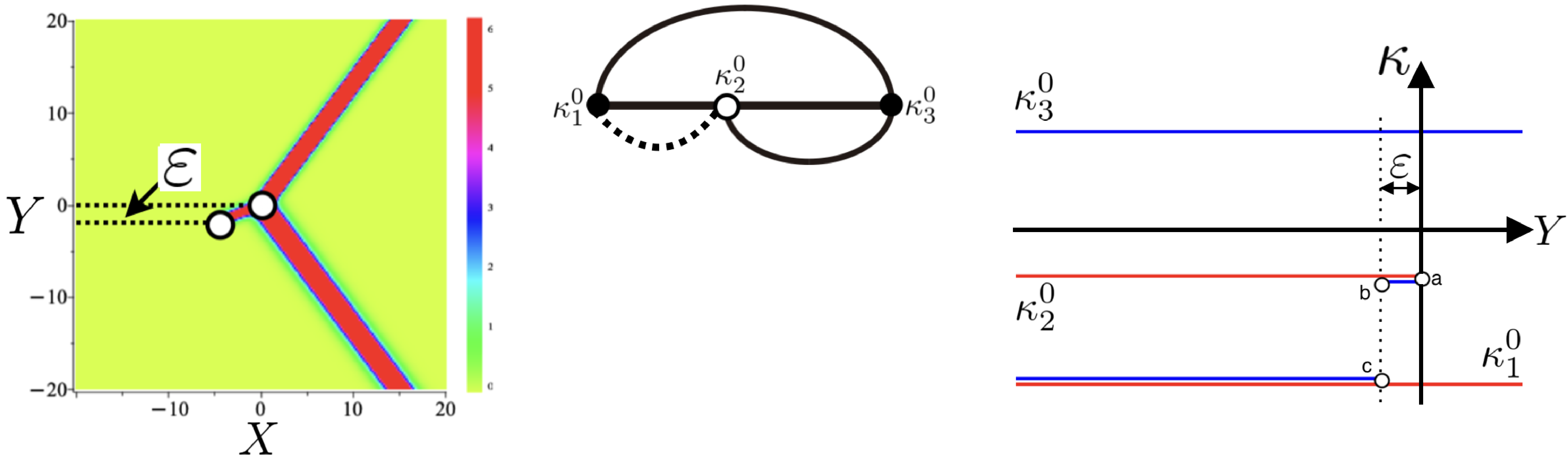}
	\end{minipage}%
	\caption{Regularization of the initial data \eqref{d}.}
	\label{fig210}
\end{figure}
The main idea of the regularization is to \emph{add} a small piece of soliton so that the intersection point forms a resonant Y-soliton. To show that this regularized initial value problem has a global solution, we first recall that the intersection point propagates with the speed $C_a=\kappa_1^0+\kappa_2^0+\kappa_3^0$ (see \eqref{89} with the slow scales, i.e., $Y_a(T)=C_aT$).
For the evolution of the small soliton with this initial data, we consider the following initial value problem for $(\tilde\kappa_1,\tilde\kappa_2)$,
\begin{equation}\label{small}
\frac{\partial\tilde\kappa_2}{\partial T}+(\kappa_1^0+2\tilde\kappa_2)\frac{\partial \tilde\kappa_2}{\partial Y}=0,\qquad -\infty<Y<Y_a(T)=C_aT,
\end{equation}
with the initial data,
\begin{equation}\label{smallID}
\tilde\kappa_1=\kappa_1^0,\quad\text{for}\quad Y<0,\qquad\text{and}\qquad
\tilde\kappa_2=\left\{\begin{array}{ll}
\kappa_1^0,\quad &\text{for}\quad Y<-\varepsilon,\\
\kappa_2^0,&\text{for}\quad -\varepsilon<Y<0.
\end{array}\right.
\end{equation}
Evaluating the characteristic velocities at the points $b$ and $c$, we can see that this has a global solution,
\begin{equation}\label{Sd}
\tilde\kappa_2=\left\{\begin{array}{ll}
\kappa_1^0,\qquad &\text{for}\quad Y<Y_c(T),\\[1.0ex]
\displaystyle{\kappa_1^0+\frac{1}{2T}(Y-Y_c(T)),} &\text{for}\quad Y_c(T)<Y<Y_b(T),\\[2.0ex]
\kappa_2^0, &\text{for}\quad Y_b(T)<Y<Y_a(T).
\end{array}\right.
\end{equation}
Here we have
\[
Y_c(T)=3\kappa_1^0T-\varepsilon\quad<\quad Y_b(T)=(2\kappa_2^0+\kappa_1^0)T-\varepsilon\quad<\quad Y_a(T)=(\kappa_1^0+\kappa_2^0+\kappa_3^0)T.
\]
It is obvious that the limit $\varepsilon\to 0$ gives the solution of the original problem \eqref{d}.
Figure \ref{fig21} shows the $\kappa$-graph and the numerical simulation for $T>0$.
One should note that the Y-soliton of type $\pi=(3,2,1)$ appears as a resonance at the point a in the region $Y>Y_b(T)$,
and this Y-soliton gives the asymptotic solution $u_0(x,y,t)$ in the sense of \eqref{localstability} (see
Figure \ref{figY}). 
    \begin{figure}[htbp]
	\begin{minipage}[htp]{1\linewidth}
		\centering
		\includegraphics[width=9cm,height=3.2cm]{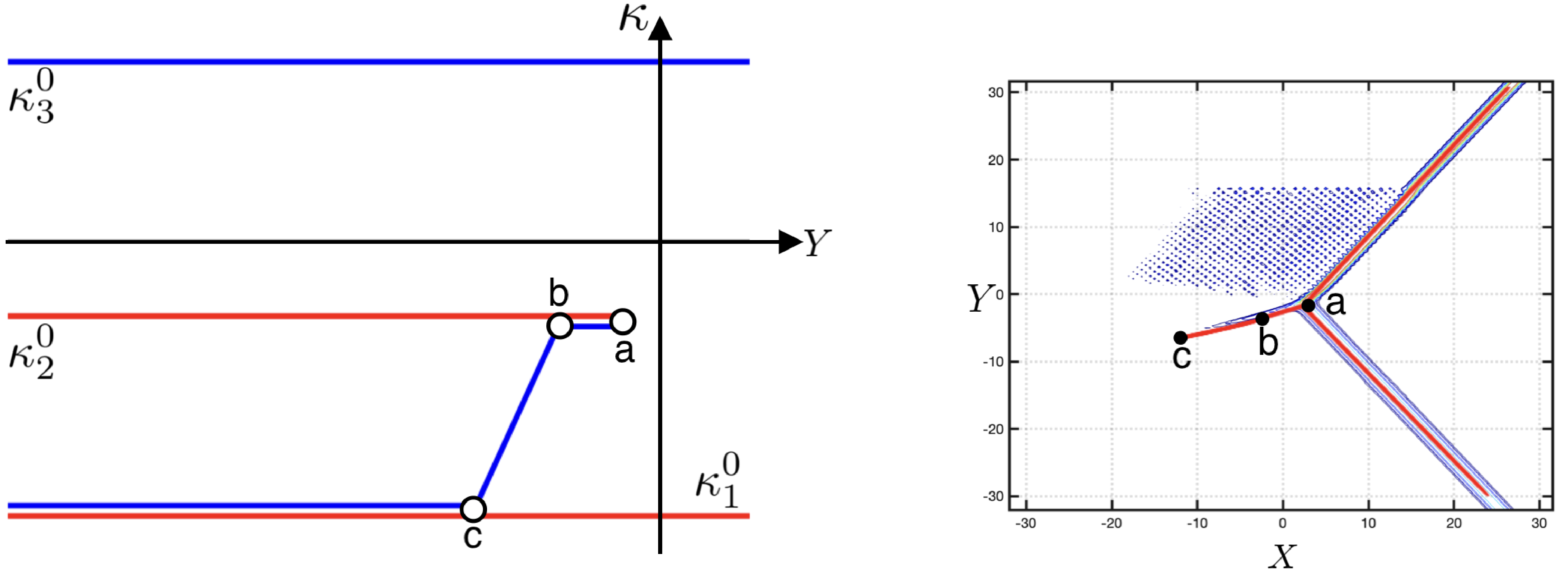}
	\end{minipage}
	\caption{The $\kappa$-graph and the numerical simulation for the case (d).
We take $(\kappa_1,\kappa_2,\kappa_3)=(-\frac{17}{8},-\frac{5}{8},\frac{11}{8})$.
The counter plot at the right panel is at $T=1$. We have $V_c<V_b<V_a$.}
	\label{fig21}
    \end{figure}

The peak trajectory in the region between points b and c is given by
a parabola,
\[
X(Y)=-\frac{1}{4T}(Y+\kappa_1^0T)^2+(\kappa_1^0)^2T.
\]
Note here that $\kappa_1^0$ is a fixed point.

Before closing the case (d), we remark that the incomplete chord diagram in Figure \ref{fig210} has two fixed points (see Definition \ref{def:fix}). Then we show that the global solution contains an additional (resonant) soliton which has the parameter $\kappa_2^0$. This point is another type of the initial point, and we define the following.
\begin{definition}
We define another type of initial points in addition to the points in Definition \ref{def:fix}:
\begin{itemize}
\item[(c)]
A point $\kappa_j^0$ is ``singular'', if there exists a half $[i,k]$-soliton with $\kappa^0_i<\kappa^0_j<\kappa^0_k$.
\end{itemize}
\end{definition}
Note that the singular point is a resonant point, which appears as a \emph{cusp} point of the complete chord diagram.

\begin{remark}
Our regularization is similar to the dispersive regularization used in the Whitham theory for slowly modulated solutions in nonlinear dispersive wave equations (see \cite{BK:94, Kodama5} and also Appendix \ref{A-KW} for the details). 
\end{remark}

The case (e) is also the degenerate case with $\kappa_{1}^{0}=\kappa_{2}^{0}$ and $\sqrt{2A_0}<2$.  The initial half-solitons are $[1,3]$-soliton in $Y>0$ and $[1,4]$-soliton in $Y<0$. 
This is similar to the case (d), and the $\kappa$-system \eqref{kappa} is reduced to a simple wave system for the $\kappa_2$. The initial value problem for $\kappa_2$ is then given by
\begin{equation}\label{e}
\frac{\partial \kappa_2}{\partial T}+(2\kappa_1^0+\kappa_2)\frac{\partial \kappa_2}{\partial Y}=0,
\quad\text{with}\quad \kappa_2(Y,0)=\left\{\begin{array}{ll}
\kappa_4^0,\quad &\text{for}\quad Y<0,\\
\kappa_3^0, &\text{for}\quad Y>0.
\end{array}\right.
\end{equation}
Since the initial data of $\kappa_2$ is decreasing, the system develops a shock singularity. 
The regularization can be done in a similar way as in the case (d) (see Figure \ref{fig22}).
\begin{figure}[htbp]
	\begin{minipage}[t]{1\linewidth}
	\centering
	\includegraphics[height=3cm,width=9.9cm]{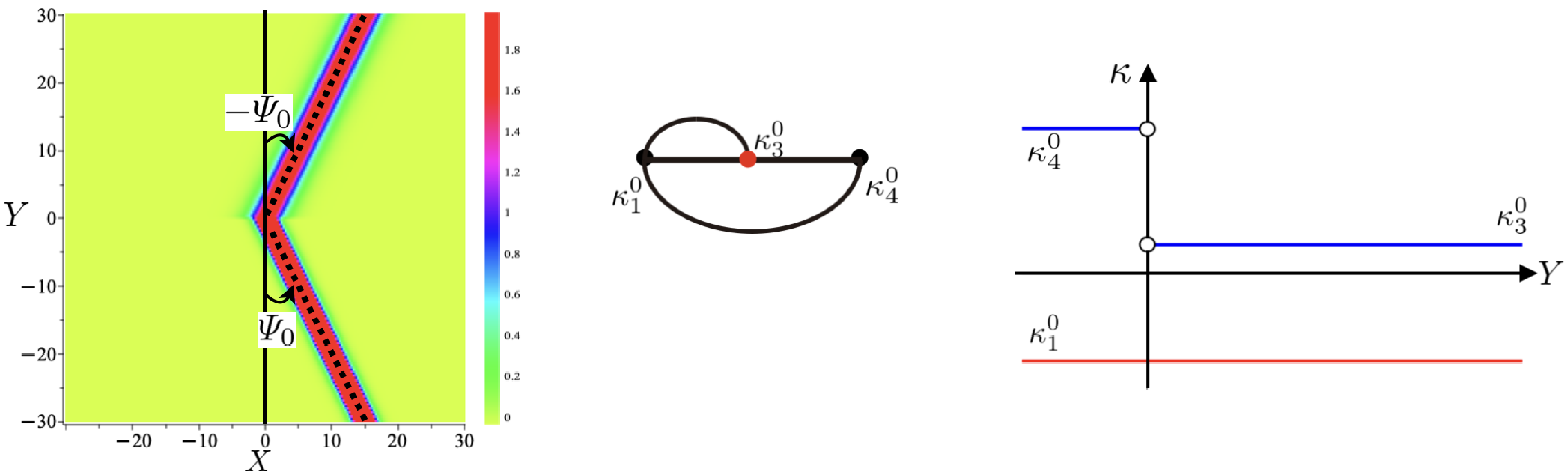}
	\end{minipage}%
	\caption{The $V$-shaped initial data of the case (e). In the incomplete chord diagram (middle panel), black dots are fixed points, while the red dot is a singular point.}
	\label{fig22}
\end{figure}
Figure \ref{fig23} shows the global solution of the $\kappa$-system \eqref{e} and the numerical simulation.
\begin{figure}[htbp]
	\begin{minipage}[htp]{1\linewidth}
		\centering
		\includegraphics[width=11.9cm,height=3cm]{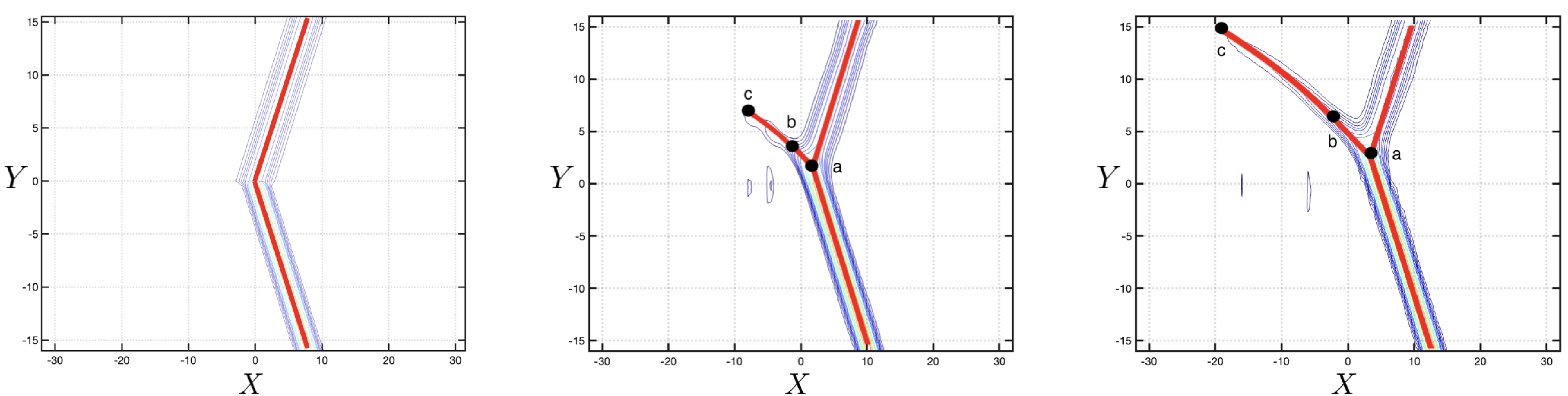}
	\end{minipage}
	\caption{The global solution of the $\kappa$-system \eqref{e} and the numerical simulation with the peak trajectory. We take $(\kappa_1^0=\kappa_2^0,\kappa_3^0,\kappa_4^0)=
	(-\frac{3}{4},\frac{1}{4},\frac{5}{4})$, and the numerical result are at $T=0$, $T=2$ and $T=4$.}
	\label{fig23}
\end{figure} 

Here the asymptotic solution $u_0(x,y,t)$ in \eqref{localstability} is given by the Y-soliton of type $\pi=(2,3,1)$ (see Section \ref{sec:Y}).

\subsection{The case (f)}
The initial half-solitons of this case are $[1,3]$-soliton for $Y>0$ and $[2,4]$-soliton for $Y<0$, i.e.,
\begin{equation}\label{IDf}
\kappa_1=\left\{\begin{array}{ll}
\kappa_2^0,\quad &\text{for}\quad Y<0,\\
\kappa_1^0, &\text{for}\quad Y>0,
\end{array}\right.\qquad\text{and}\qquad
\kappa_2=\left\{\begin{array}{ll}
\kappa_4^0,\quad &\text{for}\quad Y<0,\\
\kappa_3^0, &\text{for}\quad Y>0.
\end{array}\right.
\end{equation}
Note that both $\kappa_1^0$ and $\kappa_2^0$ are decreasing, and the $\kappa$-system 
\eqref{kappa} develops shock waves. 
The initial profile and the regularized initial data are shown in Figure \ref{fig35}.

\begin{figure}[htbp]
	\begin{minipage}[htp]{1\linewidth}
		\centering
		\includegraphics[height=3.3cm,width=11.55cm]{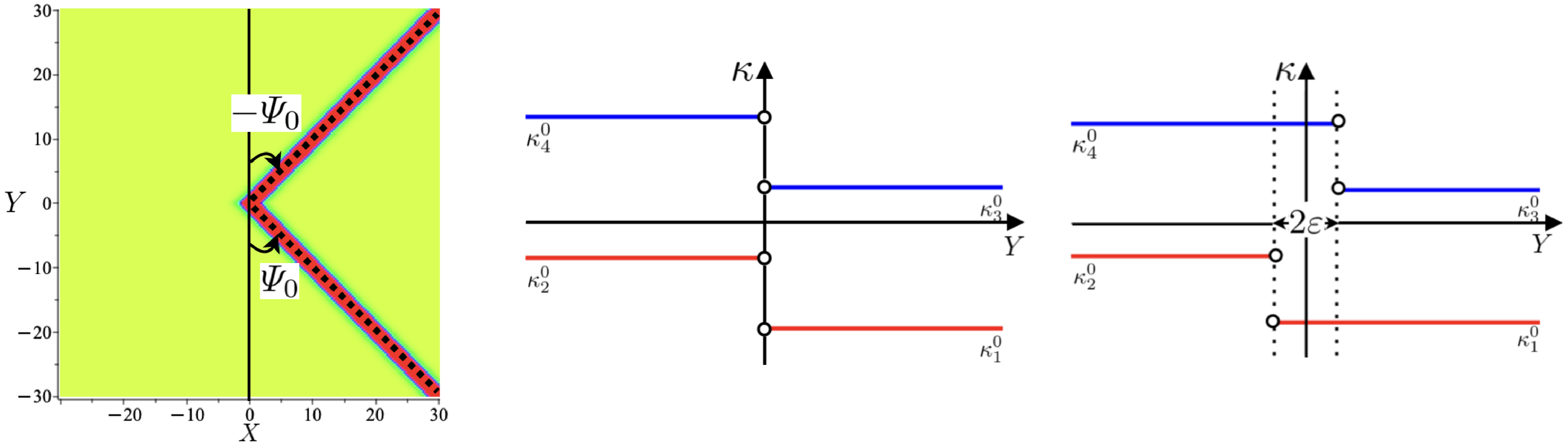}
	\end{minipage}
	\caption{The initial profile and the regularized initial data for the case (f).}
	\label{fig35}
\end{figure}

First note that $\kappa_1^0$ and $\kappa_4^0$ are fixed points, and $\kappa_2^0$ and $\kappa_3^0$ are singular points.
Then, following the arguments in the cases (d) and (e), we obtain a global solution of the $\kappa$-system. 
{This solution has the following structure consisting of multiple linear solitons and parabolic solitons:}
\begin{table}[htbp]
\begin{center}
		\begin{tabular}{c |c |c |c| c| c |c |c}
			\hline
			{Interval}& $(-\infty, Y_{f})$ & $(Y_{f}, Y_{e})$  & $(Y_{e}, Y_{d})$ & $(Y_{d}, Y_{c})$  & $(Y_{c}, Y_{b})$ & $(Y_{b}, Y_{a})$ & $(Y_{a},+\infty)$ \\[1.0ex]
			\hline
			Line-soliton & {${[2,4]}$} &{${[2,4]}$}& {${[1,2]}$, ${[2,4]}$} &{${[1,4]}$} & {${[1,3]}$, ${[3,4]}$} & {${[1,3]}$} & {${[1,3]}$}\\[1.0ex]
			\hline
			Parabolic-soliton & & {${[1]}$}& & & & {${[4]}$} & \\[1.0ex]
			\hline
		\end{tabular}
\end{center}		
\end{table}

Here $Y_\alpha(T)$ for the points $\alpha=a,b,\ldots,f$ are given by
\begin{align*}
Y_a=3\kappa^0_4T\quad &>\quad Y_b=(2\kappa_3^0+\kappa_4^0)T\quad>\quad Y_c=(\kappa_1^0+\kappa_3^0+\kappa_4^0)T\quad>\\\
&>\quad Y_d=(\kappa^0_1+\kappa_2^0+\kappa_4^0)T \quad >\quad 
Y_e=(2\kappa_2^0+\kappa_1^0)T\quad>\quad Y_f=3\kappa_1^0T.
\end{align*}
Figure \ref{fig37} shows the evolution of the $\kappa$-graph and the solution $u(x,y,t)$ of the numerical simulation.
\begin{figure}[H]
	\begin{minipage}[htb]{1\linewidth}
		\centering
		\includegraphics[height=5.5cm,width=11.5cm]{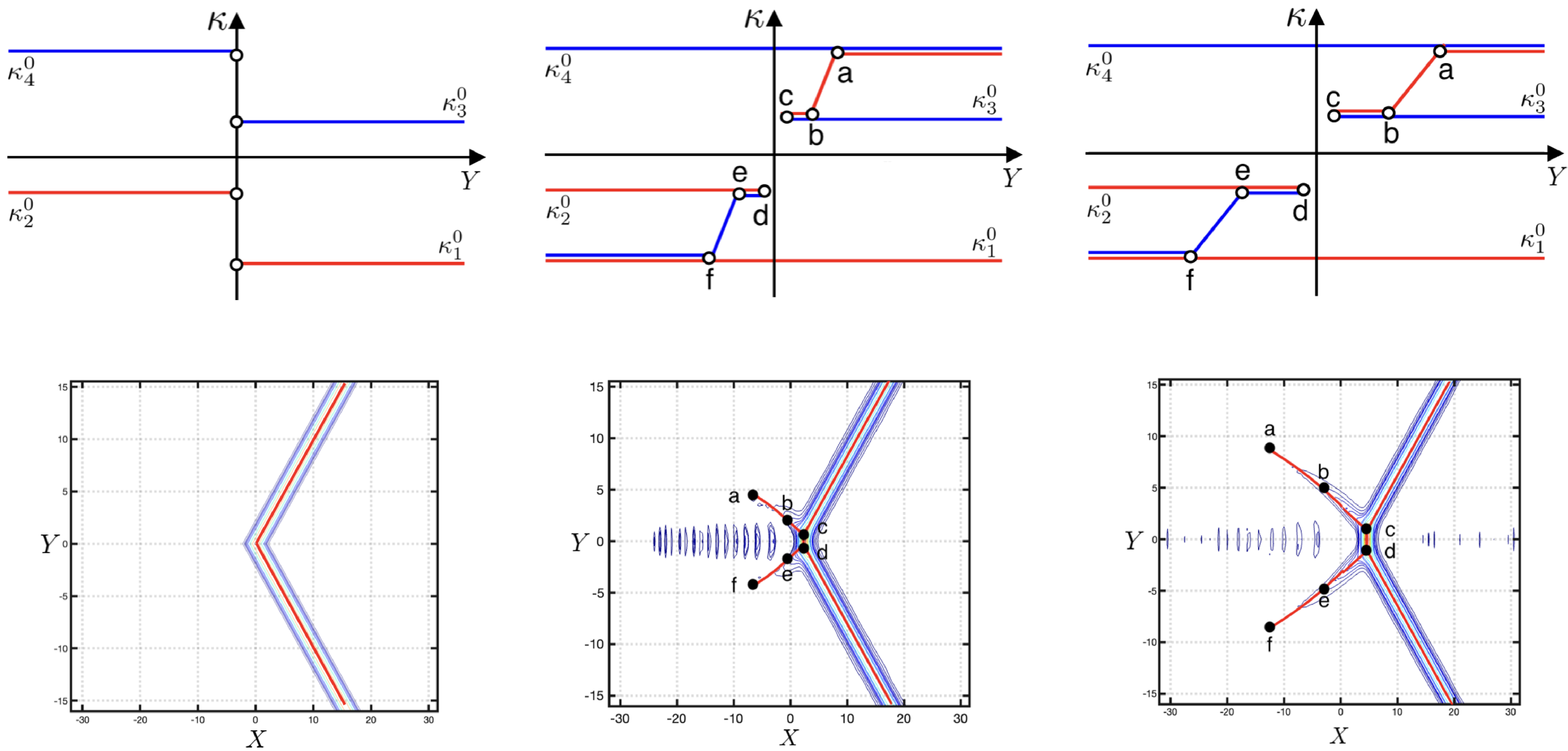}
	\end{minipage}%
	\caption {The $\kappa$-graphs and the numerical simulation for the case (f).
We take $(\kappa_1^0,\kappa_2^0,\kappa_3^0,\kappa_4^0)=(-\frac{3}{2},-\frac{1}{2},\frac{1}{2},\frac{3}{2})$. The contour plots of the numerical solution are at $T=0$, $T=1$, and $T=2$.}
	\label{fig37}
\end{figure}

The asymptotic solution $u_0(x,y,t)$ in the sense of local stability \eqref{localstability}
is given by the KP soliton of type $\pi=(3,1,4,2)$, whose $\tau$-function is given by
$\tau=|AE^T|$ with
\[
A=\begin{pmatrix}
1&a&0&-c\\
0&0&1&b
\end{pmatrix}, \qquad \text{and}\qquad E=\begin{pmatrix}
E_1 &E_2&E_3 &E_4\\
\kappa_1E_1&\kappa_2E_2&\kappa_3E_3&\kappa_4E_4
\end{pmatrix},
\]
where the constants $a$, $b$, $c$ are given by
\[
a=\frac{\kappa_1-\kappa_3}{\kappa_2-\kappa_3},\quad b=\frac{\kappa_1-\kappa_3}{\kappa_1-\kappa_4},\quad \text{and} \quad c=\frac{\kappa_1-\kappa_3}{\kappa_3-\kappa_4}.
\]

\subsection{The cases (g) and (h)}
{Since the case (h) is just the upside-down version of the case (g), we discuss only the case (g).}

The initial line-solitons are $[2,3]$-soliton for $Y>0$ and $[1,4]$-soliton for $Y<0$.
The initial data for the $\kappa$-system \eqref{kappa} is then given by
\begin{equation}\label{IDg}
\kappa_1(Y,0)=\left\{\begin{array}{ll}
\kappa_1^0,\quad &\text{for}\quad Y<0,\\
\kappa_2^0, &\text{for}\quad Y>0,
\end{array}\right. \quad\text{and}\quad
\kappa_2(Y,0)=\left\{\begin{array}{ll}
\kappa_4^0,\quad&\text{for}\quad Y<0,\\
\kappa_3^0, &\text{for}\quad Y>0.
\end{array}\right.
\end{equation}
Figure \ref{fig25} shows the initial data for a symmetric choice of the parameters
$(\kappa_1^0,\kappa_2^0,\kappa_3^0,\kappa_4^0)=(-1,-\frac{1}{2},\frac{1}{2},1)$.
\begin{figure}[htbp]
	\begin{minipage}[t]{1\linewidth}
		\centering
		\includegraphics[height=3.5cm,width=13cm]{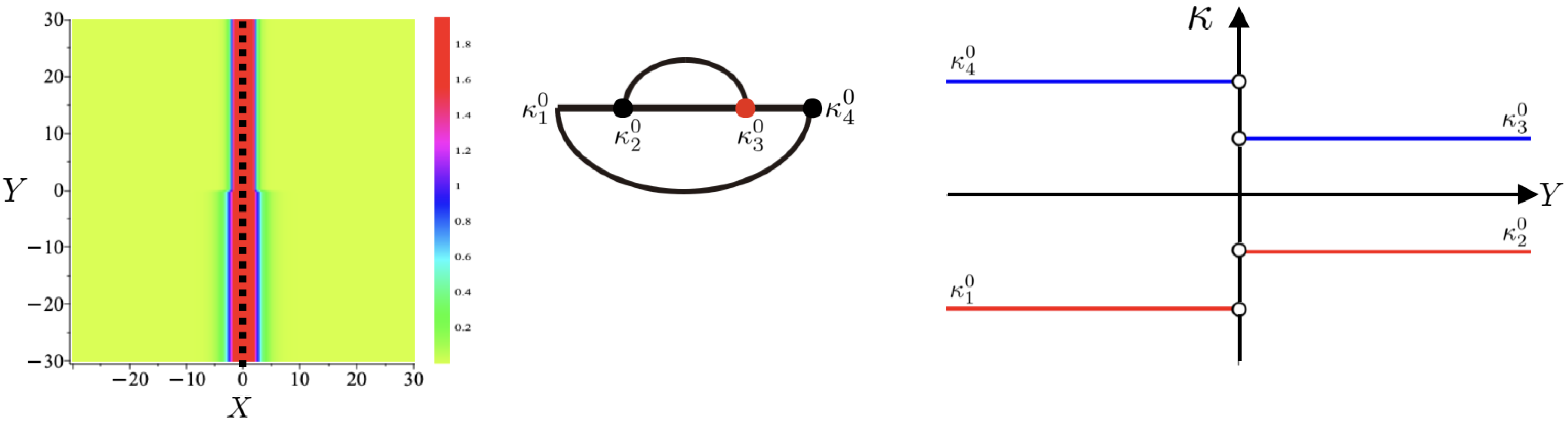}
	\end{minipage}
\caption{The initial data for the case (g). The parameters $\kappa_2^0$ and $\kappa_4^0$ are fixed points, and
$\kappa_3^0$ is a singular point. The $\kappa_1^0$ is a free point.
}
\label{fig25}
\end{figure}
Note that the $\kappa_1$ is increasing, but $\kappa_2$ is decreasing. This implies that 
the $\kappa_1$ is a rarefaction wave, and the $\kappa_2$ is a shock wave. The regularization of the initial data is then given in Figure \ref{fig77}.
\begin{figure}[H]
	\begin{minipage}[t]{1\linewidth}
		\centering
		\includegraphics[height=3.5cm,width=10cm]{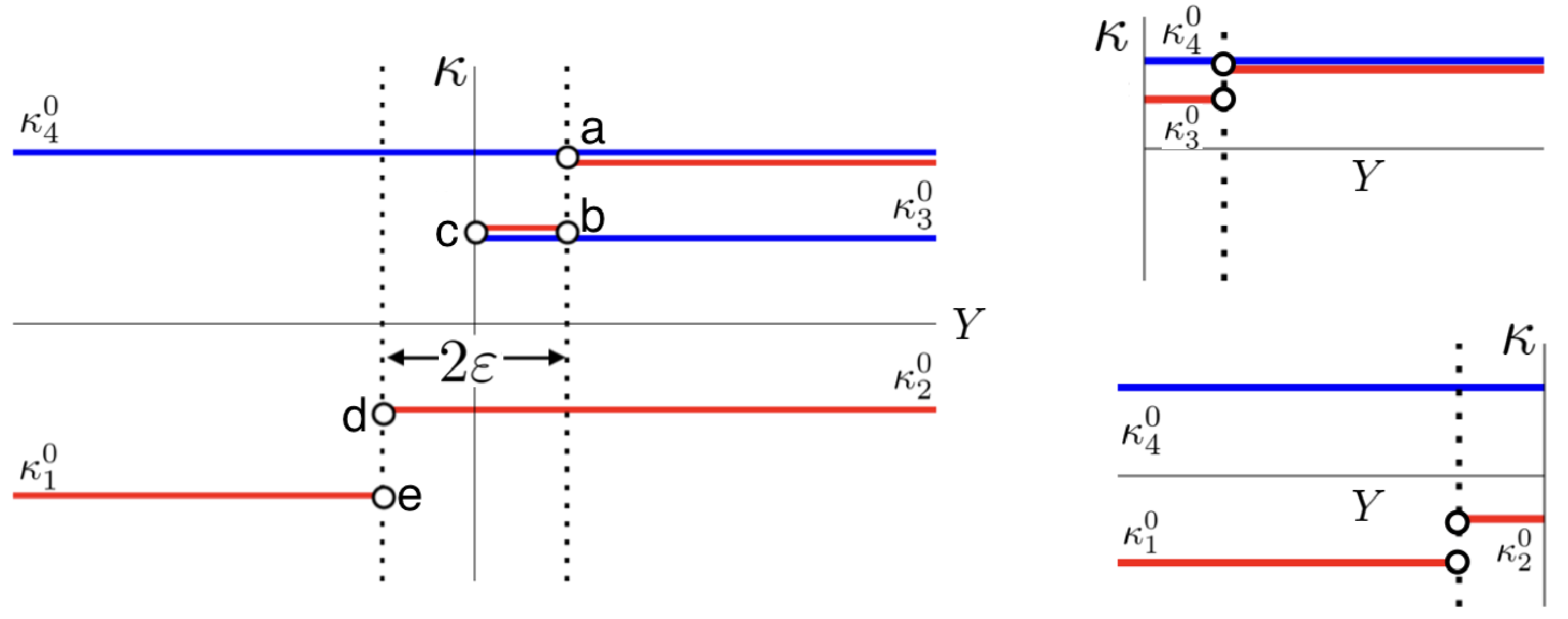}
	\end{minipage}
	\renewcommand\figurename{Figure}
	\caption{The regularized initial data for the case (g). }
	\label{fig77}
\end{figure}
The characteristic velocities at the points from a through e are given by
\begin{align}\label{Vg}
V_a=3\kappa_4^0\quad &>\quad V_b=2\kappa_3^0+\kappa_4^0\quad >\quad V_c=\kappa_2^0+\kappa_3^0+\kappa_4^0\quad>\\
&>\quad V_d=2\kappa_2^0+\kappa_4^0\quad >\quad V_e=2\kappa_1^0+\kappa_4^0.\nonumber
\end{align}
Thus, all points are separated with increasing distances between them. This implies that 
the system with this regularized initial data has a global solution (see Figure \ref{fig27}).
The numerical simulations are shown in Figure \ref{fig27}.
\begin{figure}[htbp]
	\begin{minipage}[t]{1\linewidth}
		\centering
		\includegraphics[height=5.5cm,width=13cm]{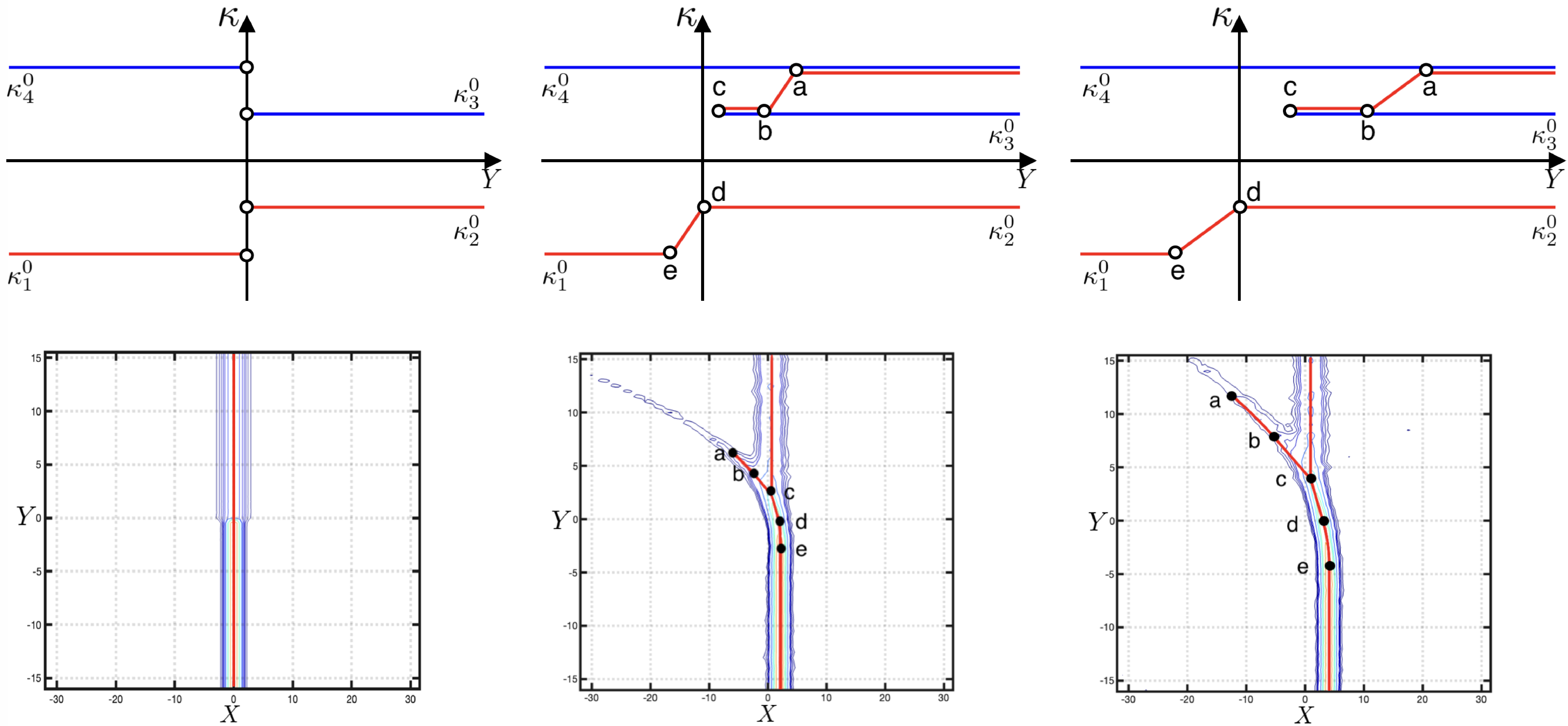}
	\end{minipage}%
	\caption {The $\kappa$-graphs and numerical simulation for the case (g). 
We take $(\kappa_1^0,\kappa_2^0,\kappa_3^0,\kappa_4^0)=(-1,-\frac{1}{2},\frac{1}{2},1)$,
and the $\kappa$-graphs are obtained at $T=0$, $T=2$, and $T=4$.}
	\label{fig27}
\end{figure}
The solution $u(x,y,t)$ consists of line-solitons and parabolic-solitons listed in the table below.
In the table, $Y_\alpha$ for the points $\alpha=a,b,\ldots, e$ are given by $Y_\alpha(T)=V_{\alpha}T$
with $V_\alpha$ in \eqref{Vg}. In particular, $Y_c$ corresponds to the resonant interaction point. It is also interesting to note that two parabolic-solitons
in $(Y_e,Y_d)$ and $(Y_b,Y_a)$ are the same $[4]$-soliton, and the part of the parabola is replaced by two line-solitons, $[2,4]$ and $[3,4]$ in the Y-soliton.

\medskip
\begin{center}
		\begin{tabular}{c |c |c |c| c| c |c }
			\hline
			{Interval}& $(-\infty, Y_{e})$  & $(Y_{e}, Y_{d})$ & $(Y_{d}, Y_{c})$  & $(Y_{c}, Y_{b})$ & $(Y_{b}, Y_{a})$ & $(Y_{a},+\infty)$ \\[1.0ex]
			\hline
			Line-soliton & {${[1,4]}$}&  &{${[2,4]}$} & {${[2,3]}$, ${[3,4]}$} & {${[2,3]}$} & {${[2,3]}$}\\[1.0ex]
			\hline
			Parabolic-soliton & & {${[4]}$}&  &  & {${[4]}$} & \\[1.0ex]
			\hline
		\end{tabular}
\end{center}		
Noting that $Y_e(T)\to-\infty$ and $Y_b(T)\to+\infty$ as $T\to\infty$, we can see that the asymptotic solution $u_0(x,y,t)$ in \eqref{localstability} is given by the Y-soliton of
type $\pi=(2,3,1)$ (see Section \ref{sec:Y}).

\subsection{The case (i)}
The initial half-solitons are $[1,2]$-soliton for $Y>0$ and $[3,4]$-soliton for $Y<0$.
We first note that $\kappa_1^0$ and $\kappa_4^0$ are fixed points. Then 
we take the regularized initial data as shown in Figure \ref{fig28}.
 \begin{figure}[htbp]
	\begin{minipage}[htb]{1\linewidth}
		\centering
		\includegraphics[height=3.2cm,width=5cm]{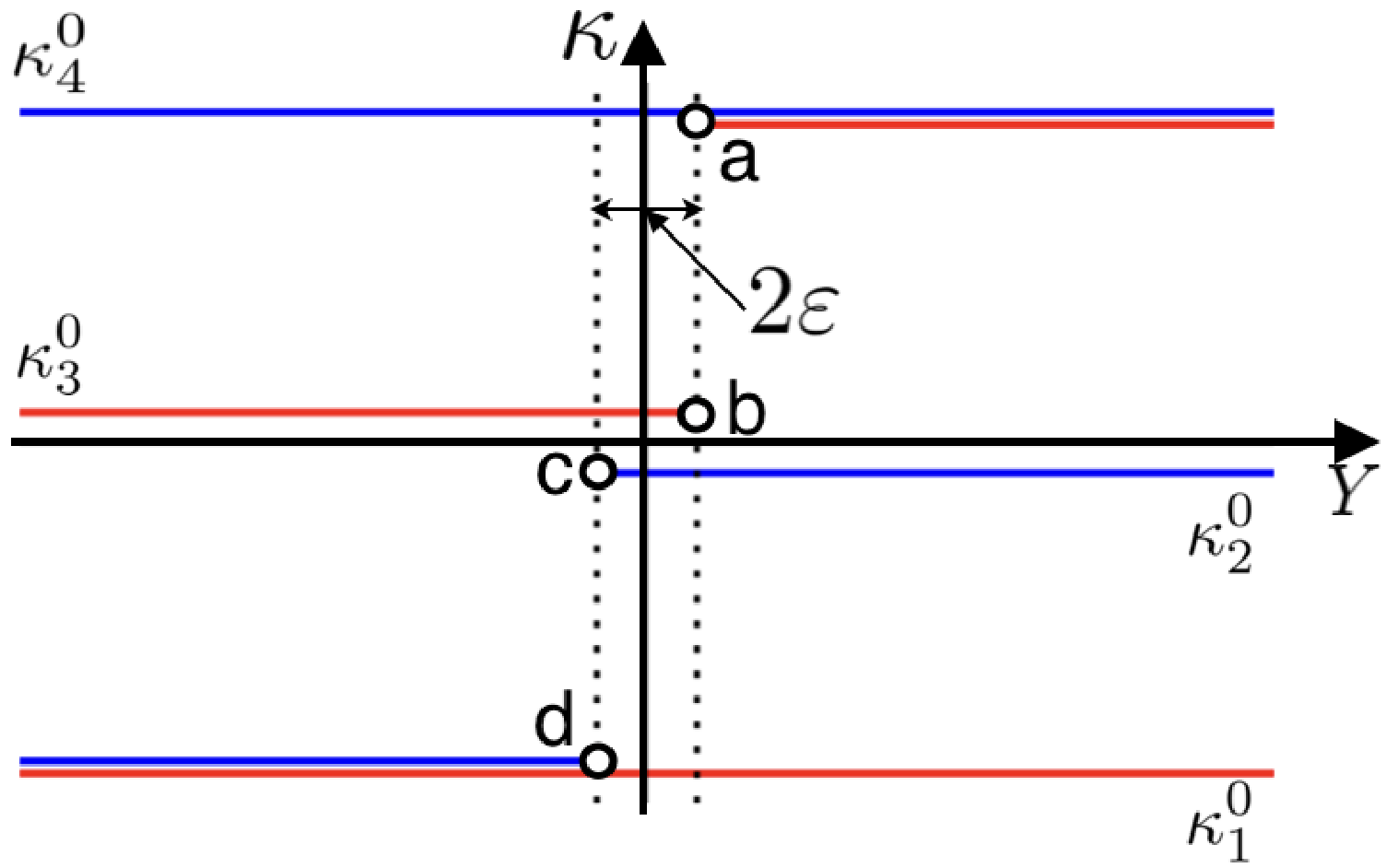}
	\end{minipage}%
	\caption{The regularized initial data for the case (i).}
	\label{fig28}
\end{figure}
That is, the regularization is to add a piece of O-type soliton in a small region $(-\varepsilon,\varepsilon)$
for $0<\varepsilon\ll 1$.
These two line-solitons in the O-type soliton do not have resonance, and they do not have any 
interaction in this region, that is, we can consider them as independent half-solitons (see section \ref{sec:OY}).
We also remark that the phase shift in the O-type soliton is ignored in the slow scales.
Then using the results of the section \ref{sec:H}, we find that the characteristic velocities at the points $a, b, c$ and $d$ are given by
\begin{equation}\label{Vi}
V_a=3\kappa_4^0\quad >\quad V_b=2\kappa_3^0+\kappa_4^0\quad>\quad 
V_c=2\kappa_2^0+\kappa_1^0\quad>\quad V_d=3\kappa_1^0.
\end{equation}
{Thus, all points are separated by increasing distances as $T$ increases.}
This implies that the $\kappa$-system has a global solution for $T>0$. Figure \ref{fig29}
shows the global solution of the $\kappa$-system and the numerical simulation.
 \begin{figure}[htbp]
	\begin{minipage}[htb]{1\linewidth}
		\centering
		\includegraphics[height=5.5cm,width=12.35cm]{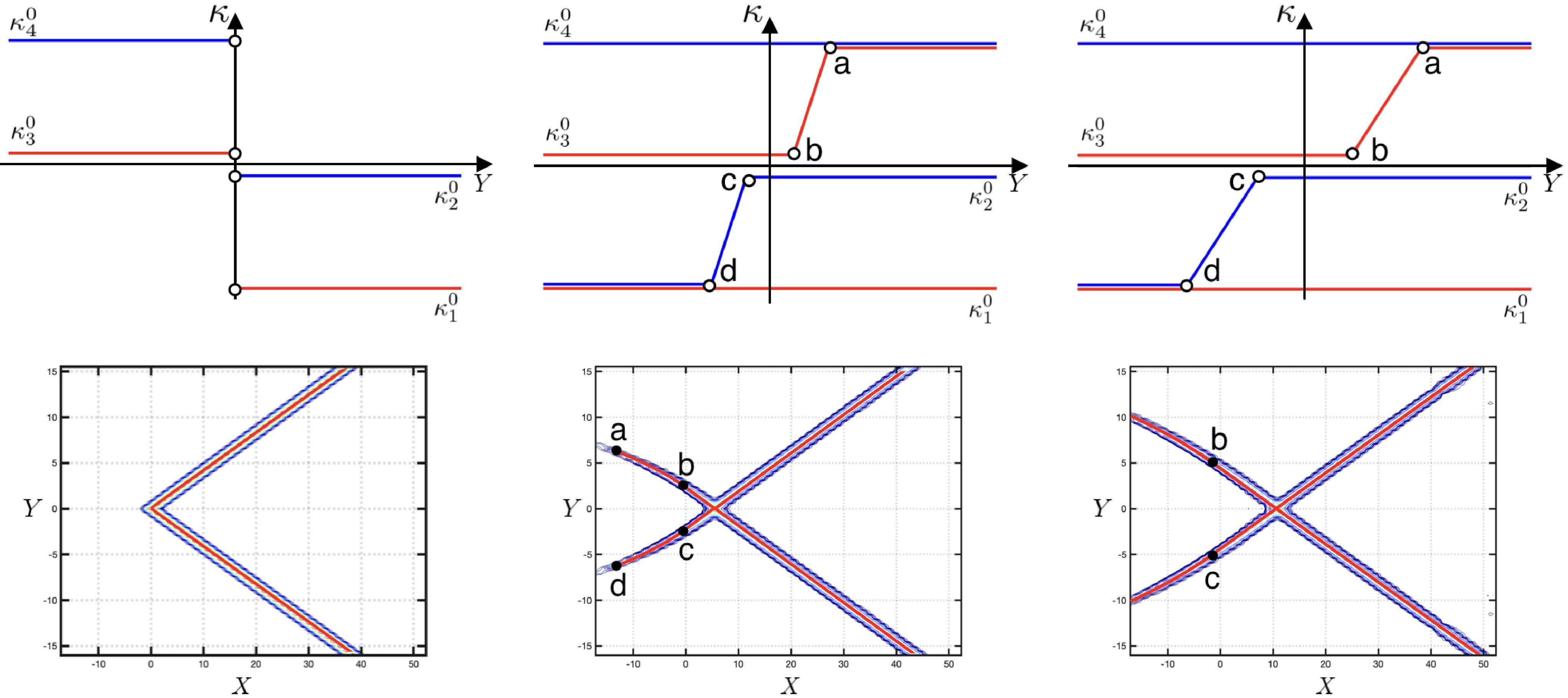}
	\end{minipage}%
	\caption {The $\kappa$-graphs and numerical simulation for the case (i).
	We take $(\kappa_1^0,\kappa_2^0,\kappa_3^0,\kappa_4^0)=(-\frac{11}{5},-\frac{1}{5},\frac{1}{5},\frac{11}{5})$, and the simulations are at $T=0,1$ and $T=2$. The endpoints a and d in the left panel cannot be displayed in this numerical simulation because the values of $X_a$ and $X_d$ are out of range.}
	\label{fig29}
\end{figure}

In the figure, $Y_\alpha(T)$ for the points $\alpha=a, b, c, d$ are given by $Y_\alpha(T)=V_\alpha T$
with $V_\alpha$ given in \eqref{Vi}, and we have parabolic $[4]$-soliton in the region $(Y_b,Y_a)$ and
parabolic $[1]$-soliton in the region $(Y_d,Y_c)$, see the table below.
\begin{table}[H]
	\centering
	\begin{tabular}{c|c|c|c|c|c}
		\hline
	{Interval}& $(-\infty, Y_{d})$ & $(Y_{d}, Y_{c})$  & $(Y_{c},Y_{b})$ & $(Y_{b}, Y_{a})$ & $(Y_{a},+\infty)$ \\  
		\hline 
		Line-soliton & $[3,4]$ & $[3,4]$  &  ${[1,2]}$, ${[3,4]}$ & ${[1,2]}$ & ${[1,2]}$  \\ 
		\hline  
		Parabolic-soliton &  & $[1]$ &  &$[4]$ & \\  
		\hline
	\end{tabular}
\end{table}

\noindent
The asymptotic solution $u_0(x,y,t)$ in \eqref{localstability} is given by O-type (see Section \ref{sec:O}).
\subsection{The case (j)}
The initial half-solitons are $[1,2]$-soliton for $Y<0$ and $[3,4]$-soliton for $Y>0$. In this case, $\kappa_2^0$ and $\kappa_3^0$ are the fixed points. The other two points are free points. We then have the regularized initial data shown in Figure \ref{fig31}.
\begin{figure}[htbp]
	\begin{minipage}[htb]{1\linewidth}
		\centering
		\includegraphics[height=3cm,width=5cm]{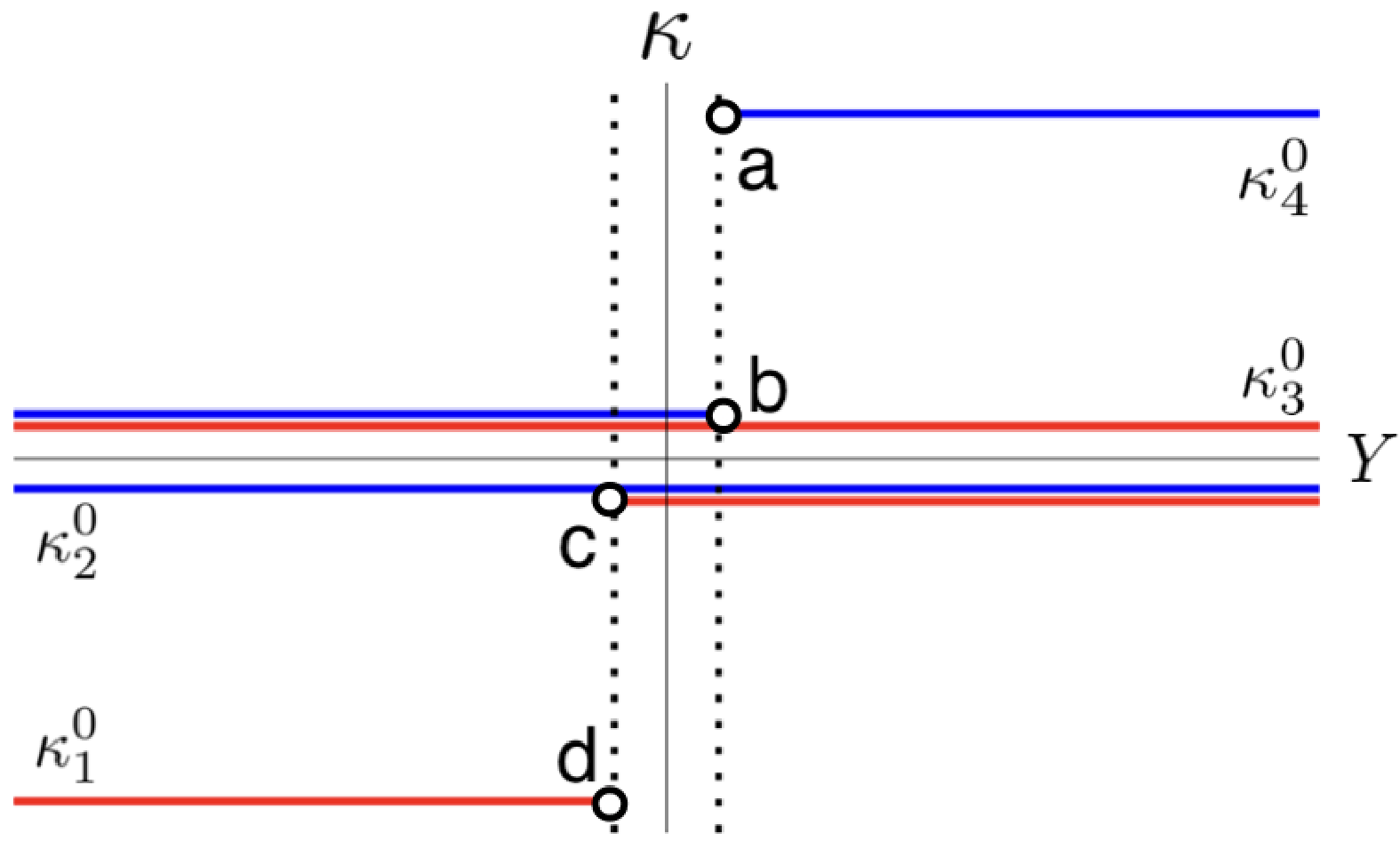}
	\end{minipage}%
	\caption{The regularized initial data for the case (j).}
	\label{fig31}
\end{figure}
In a case similar to case (i), we find that the characteristic velocities at the
points a, b, c, and d are given by
\begin{equation}\label{Vj}
V_a=2\kappa_4^0+\kappa_3^0\quad >\quad V_b=3\kappa_3^0\quad>\quad 
V_c=3\kappa_2^0\quad>\quad V_d=2\kappa_1^0+\kappa_2^0.
\end{equation}
Again, we note that all points are separated with increasing distances between them.
The $\kappa$-graphs and the results of numerical simulation are shown in Figure \ref{fig45}.
\begin{figure}[htbp]
	\begin{minipage}[htb]{1\linewidth}
		\centering
		\includegraphics[height=6cm,width=13.4cm]{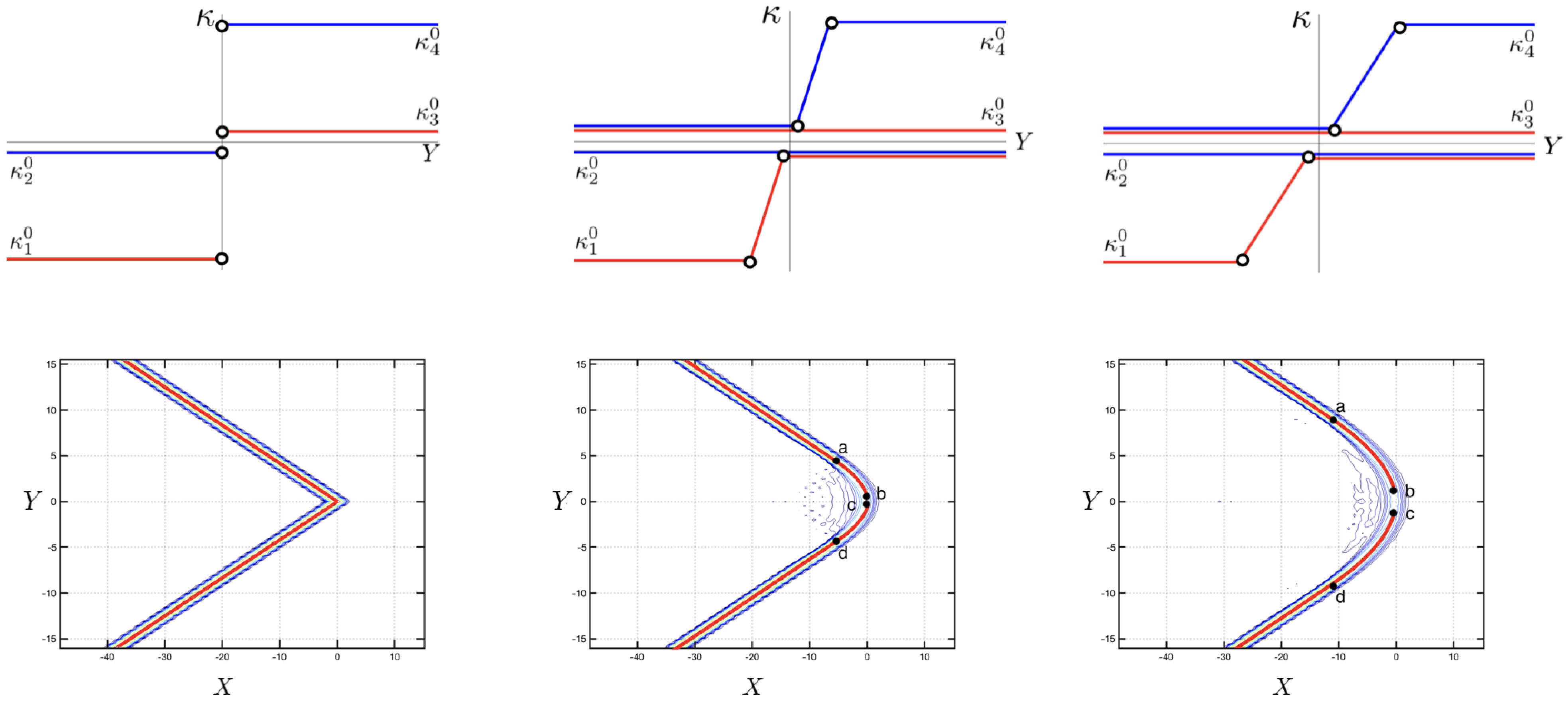}
	\end{minipage}%
	\caption{The $\kappa$-graphs and the  numerical simulation for the case (j).
	We take $(\kappa_1^0,\kappa_2^0,\kappa_3^0,\kappa_4^0)=(-\frac{11}{5},-\frac{1}{5},\frac{1}{5},\frac{11}{5})$, and the simulations are at $T=0,1$ and $T=2$. }
	\label{fig45}
\end{figure}
In the figure, $Y_\alpha$ for the points $\alpha=a, b, c$ and d are given by $Y_\alpha(T)=V_\alpha T$ with $V_\alpha$ in \eqref{Vj}, and the peak trajectories in the regions $(Y_b,Y_a)$ and $(Y_d,Y_c)$ are 
parabolic $[3]$-soliton and $[2]$-soliton, respectively. We summarize the result in the table:
\begin{table}[H]
	\centering
	\begin{tabular}{c|c|c|c|c|c}
		\hline
	{Interval}& $(-\infty, Y_{d})$ & $(Y_{d}, Y_{c})$  & $(Y_{c}, Y_{b})$ & $(Y_{b}, Y_{a})$ & $(Y_{a},+\infty)$ \\  
		\hline 
		Line-soliton & $[1,2]$ &   &  &  & ${[3,4]}$  \\ 
		\hline  
		Parabolic-soliton &  & $[2]$ & &$[3]$ & \\  
		\hline
	\end{tabular}
\end{table}

\noindent
Notice that the asymptotic solution $u_0(x,y,t)$ in the sense of \eqref{localstability} is just zero,
since $Y_c\to-\infty$ and $Y_b\to+\infty$ as $T\to\infty$.
\subsection{Summary}\label{summary}
In this section, we studied the initial value problem of the $\kappa$-system 
with V-shape initial data. {We summarize our main results and discuss possible directions for future research as follows.}
For given V-shape initial data,  we started with the following  steps:
\begin{itemize}
\item[(1)] Draw an incomplete chord diagram for each initial data with the soliton parameters $(\kappa_1^0,\kappa_2^0,\kappa_3^0,\kappa_4^0)$.
\item[(2)] Identify the type for each parameter $\kappa^0_i$ according to the results of the half-soliton problem in Section \ref{sec:H}.
\item[(3)] Based on the types of the parameters, give a regularization for the initial data, so that
the initial value problem for the $\kappa$-system admits a global solution.
\end{itemize}
Then we obtained the following theorem as a summary of our results.
\begin{theorem}\label{main}
For the initial value problem with V-shape initial data, the asymptotic solution can be characterized as follows. In the incomplete diagram given by the initial data, 
\begin{itemize}
\item[(a)] each singular point corresponds to a shock singularity, which generates an additional soliton,
\item[(b)] each free point corresponds to a parabolic-soliton,
\item[(c)] each fixed point gives a parameter of the parabolic-soliton.
\end{itemize}
Then there exists regularized initial data, and the (asymptotic) solution consists of line-solitons and parabolic-solitons. In particular, between two line-solitons, there is a parabolic-soliton, which connects tangentially to these line-solitons.
\end{theorem}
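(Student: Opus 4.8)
The plan is to reduce Theorem \ref{main} to the classification of the four edge parameters of the incomplete chord diagram, combined with the piecewise solution of the $\kappa$-system \eqref{kappa} furnished by Lemma \ref{simple}. The key structural observation is that $(\kappa_1,\kappa_2)$ are Riemann invariants, so \eqref{kappa} is already diagonal: each component evolves independently with characteristic velocity $V_1=2\kappa_1+\kappa_2$ or $V_2=\kappa_1+2\kappa_2$. For each V-shape datum one of the two invariants is frozen by a fixed point, so the problem collapses to a single scalar simple-wave equation of exactly the type handled in Lemma \ref{simple}.

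First I would classify each of $\kappa_1^0,\ldots,\kappa_4^0$ as fixed, free, or singular according to Definition \ref{def:fix} and the notion of a singular point introduced in Section \ref{sec:d}, and treat the three types separately. A free point produces initial data that is increasing in $Y$; Lemma \ref{simple} then gives a global rarefaction wave, and integrating the slope relation \eqref{dX}, $\partial X/\partial Y=-(\kappa_1+\kappa_2)$, over the fan region yields a parabolic arc in the $XY$-plane, which is item (b). Since the rarefaction fan has slope $(2T)^{-1}$ in the invariant, the integration constant in the parabola is exactly the frozen value supplied by the adjacent fixed point, for instance $X=-\tfrac{1}{4T}(Y+\kappa^0 T)^2+(\kappa^0)^2T$ with $\kappa^0$ the fixed parameter, which is item (c). A singular point, by contrast, produces decreasing initial data, so the scalar equation develops a shock in finite time, again by Lemma \ref{simple}; I would resolve it by the regularization of Section \ref{sec:d}, inserting a small soliton over a width of order $\varepsilon$ so that the break becomes a three-wave resonant triad generating an additional line-soliton, which is item (a).

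To guarantee that the regularized problem is globally well-posed, I would verify in each of the cases (a)--(j) that the characteristic velocities at every transition point $Y_\alpha$ are strictly increasing from left to right, exactly as computed in \eqref{Vc}, \eqref{Vg}, \eqref{Vi}, and \eqref{Vj}. This ordering forces the characteristics to fan outward, so no further shock forms and the piecewise linear-plus-parabolic profile is a genuine global solution for all $T>0$; taking $\varepsilon\downarrow 0$ then recovers the solution of the original step-function problem. The tangency assertion follows because $X(Y,T)$ is obtained by integrating the continuous slope $-(\kappa_1+\kappa_2)$ with $X$ continuous across each $Y_\alpha$: since the integrand matches the constant slope of the neighboring line-soliton at the endpoint, the parabola meets each adjacent line-soliton with the same first derivative, that is, tangentially.

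The main obstacle is the combinatorial bookkeeping: the entire content of the theorem is that the finitely many V-shape types each reduce to ordered simple-wave pieces, and the crux is checking, uniformly across the ten cases, that the regularization can always be chosen to make the transition velocities monotone. A secondary difficulty, lying outside the $\kappa$-system itself, is justifying that the modulation solution actually captures the asymptotic KP field in the local-stability sense \eqref{localstability}; this step is underwritten by the numerical agreement rather than by the quasi-linear analysis alone, and I would state it as the interpretation connecting the $\kappa$-graph to the exact soliton solution.
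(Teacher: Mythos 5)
Your proposal follows essentially the same route as the paper, which proves Theorem \ref{main} not by a single unified argument but by the case-by-case verification of Section \ref{Sec:IVP-V}: classify each $\kappa_i^0$ as fixed, free, or singular; apply Lemma \ref{simple} to the resulting simple waves; integrate $\partial X/\partial Y=-(\kappa_1+\kappa_2)$ to get the parabolas with vertex data determined by the fixed point; regularize each singular point by inserting a small resonant soliton; and confirm global existence by checking that the transition velocities $V_\alpha$ are ordered as in \eqref{Vc}, \eqref{Vg}, \eqref{Vi}, \eqref{Vj}. One small overstatement: in cases such as (c), (f), (i), (j) neither Riemann invariant is frozen globally, so the problem does not literally collapse to a single scalar equation; what saves the argument (and what the paper relies on) is that $V_1=2\kappa_1+\kappa_2<\kappa_1+2\kappa_2=V_2$ forces the two rarefaction/shock fans to separate immediately, after which each invariant sees the other as locally constant — your velocity-ordering check covers this, but the reduction should be phrased that way.
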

 Figure \ref{fig32} shows the asymptotic solitons and the corresponding complete chord diagrams for the initial data given in Figure \ref{fig18}.
\begin{figure}[htbp]
	\begin{minipage}[htb]{1\linewidth}
		\centering
		\includegraphics[width=11cm,height=5cm]{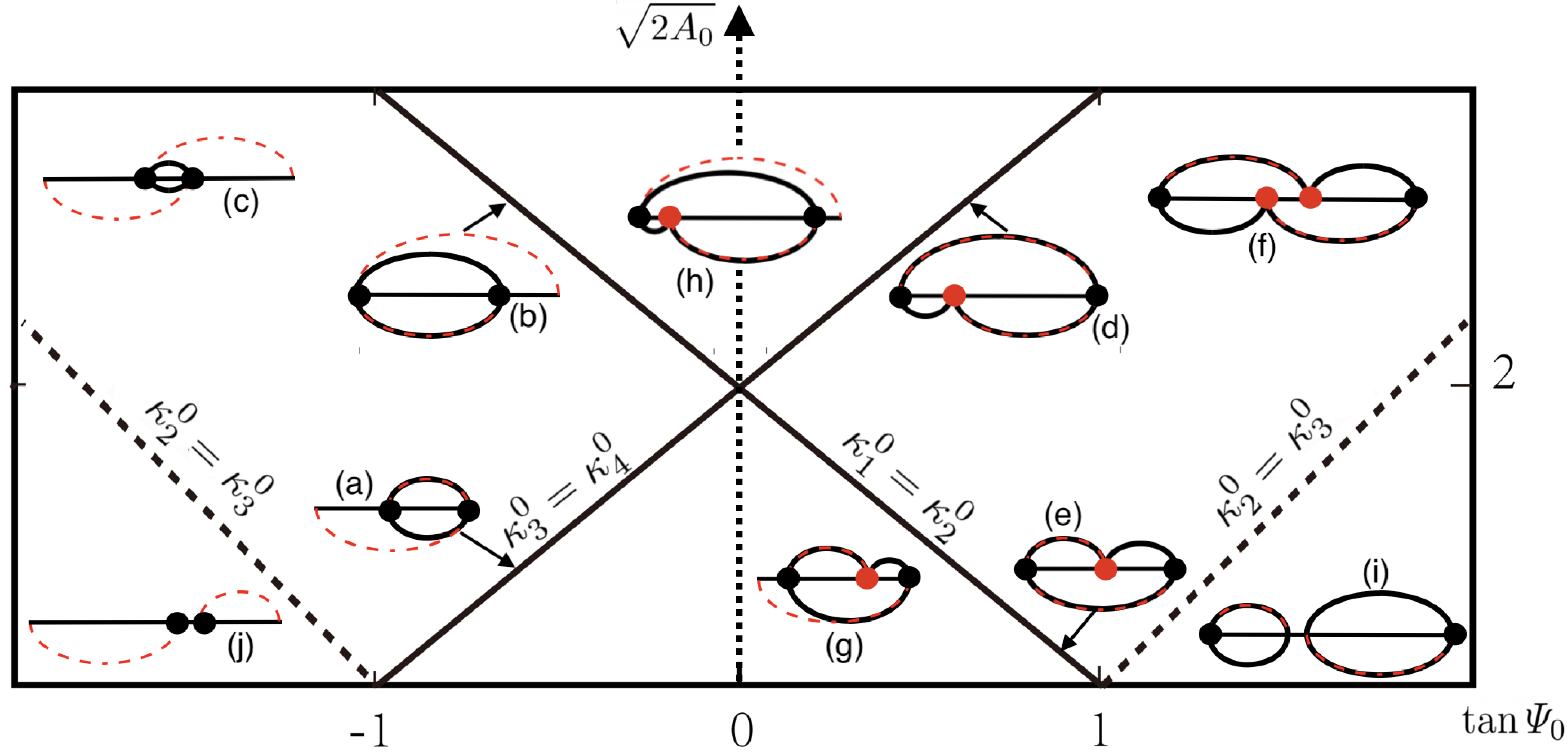}
	\end{minipage}
	\caption{The complete chord diagrams for the initial value problem of the $\kappa$-system with the V-shape initial data in Figure \ref{fig18}. This shows the asymptotic solutions $u_0(x,y,t)$ in the sense of local stability \eqref{localstability}. The dashed curves in the chord diagrams show the initial chords.}
	\label{fig32}
\end{figure}

As a final remark, we add the following result of the initial value problem of \eqref{kappa} for the initial data \eqref{45} given by
\begin{equation*}
\left\{\begin{array}{lll}
u^+_0(x,y)=A_0\sech^2\sqrt{\frac{A_0}{2}}(x+\tan\Psi_0^+y),\\
u^-_0(x,y)=A_0\sech^2\sqrt{\frac{A_0}{2}}(x+\tan\Psi_0^-y).
\end{array}\right.
\end{equation*}
{Note that the above initial data can be viewed as an extension of the bent soliton initial data studied in \cite{Ryskamp1}, allowing an asymmetric V-shaped configuration: the amplitudes of $u^{+}$ and $u^{-}$ are the same, while their slopes may differ. As shown in Figure \ref{fig33}, once the structure of the incomplete chord diagram is determined, the corresponding global solution for the asymmetric V-shaped initial data can be obtained from the known symmetric cases (see Figures \ref{fig96}, \ref{fig18}, and \ref{fig32}), yielding the appropriate combination of line and parabolic solitons, e.g., the case (c) corresponds to the left panel of Figure \ref{fig33}, while the case (f) corresponds to the right panel.}  There are four different asymptotic solutions $u_0(x,y,t)$:
\begin{itemize}
\item[(a)] Above the line $\kappa_2^0=\kappa_3^0$ crossing $\tan\Psi_0^+=4$,  $u_0(x,y,t)=0$ (cf. the case (j)).
\item[(b)] Between the line in (a) and $\tan\Psi_0^+=\tan\Psi_0^-$, $u_0(x,y,t)$ is a line-soliton with the amplitude $0<A<2$ (cf. the case (c)).
\item[(c)] Between the line $\tan\Psi_0^+=\tan\Psi_0^-$ and the line $\kappa_2^0=\kappa_3^0$ crossing $\tan\Psi_0^-=4$, $u_0(x,y,t)$ is of $(3,1,4,2)$-type (cf. the case (f)).
\item[(d)] Below the line $\kappa_2^0=\kappa_3^0$ crossing $\tan\Psi_0^-=4$,  $u_0(x,y,t)$ is O-type (cf. the case (i)).
\end{itemize}

\begin{figure}[htbp]
	\begin{minipage}[t]{1\linewidth}
		\centering
		\includegraphics[height=6.56cm,width=15cm]{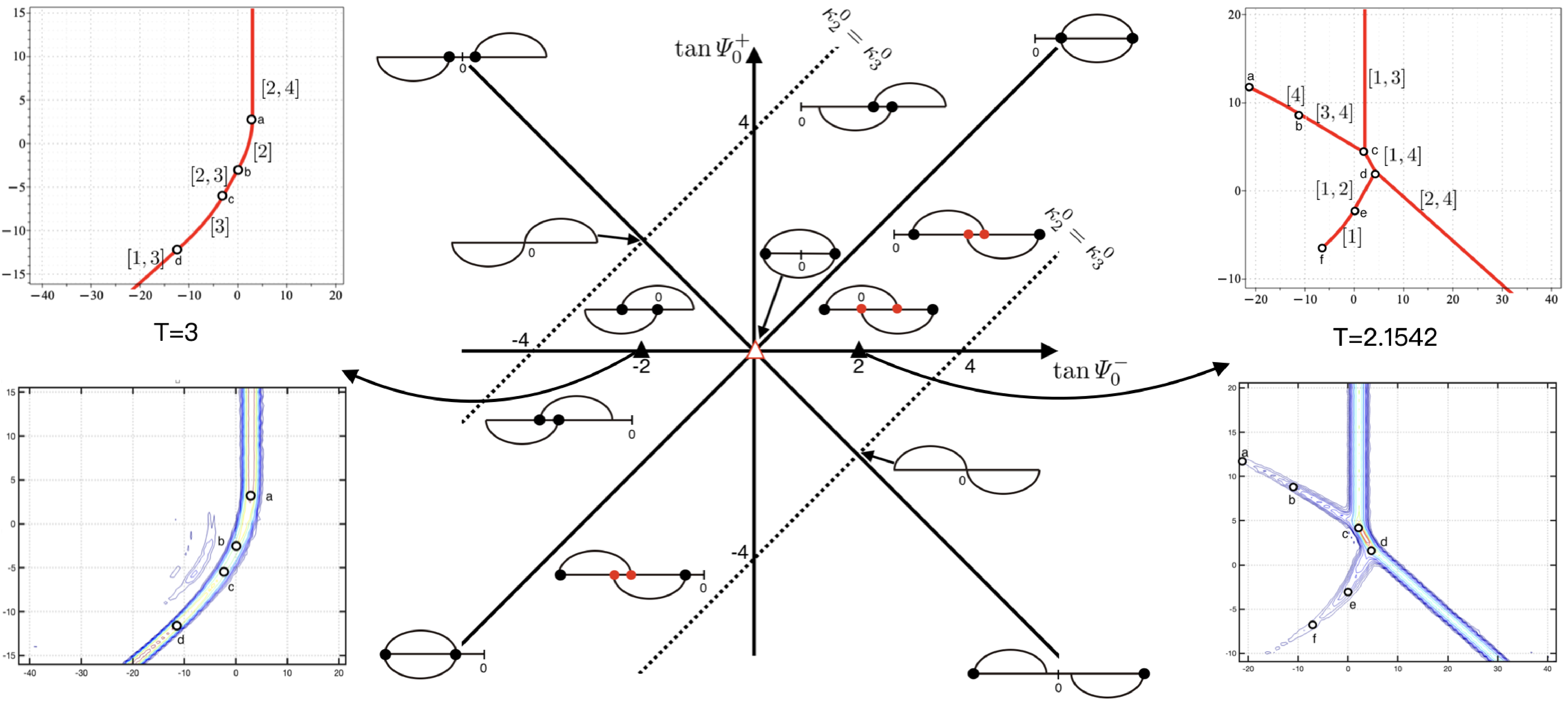}
	\end{minipage}%
	\caption{The solutions of the initial value problems with the initial data consisting of two
	half-solitons with the same amplitude $A_0$ but the different angles $\tan\Psi_0^{\pm}$.}
	\label{fig33}
\end{figure}

{Before concluding, we briefly summarize the main contributions of this work and discuss several possible directions for future research.}

{In this work, we employ an asymptotic perturbation approach to develop a precise, tractable analytical framework for describing the evolution of the V-shaped line-soliton initial-value problem. This approach provides a systematic, concise analytical classification of the phenomena observed in previous numerical studies \cite{Kao,McDowell}.	Although our discussion focuses on the evolution of line solitons in the KP equation, the method can be naturally extended to other nonlinear wave equations with similar structures. An important advantage of the asymptotic perturbation theory is that it does not rely strongly on the integrability of the equation, as shown in Appendix \ref{A-kappa}.}

{The present approach may apply to a broader class of systems, e.g., the Benney-Luke equation \cite{Luo:25}. Another interesting research direction concerns more complicated initial-value problems consisting of multiple line solitons, such as the configuration shown in Figure 6.15 of \cite{Kodama3}. We will further discuss these problems in future work.}



\bigskip
\noindent
{\bf Acknowledgements.} 
The authors would like to thank Harry Yeh for critical reading of the manuscript. They also appreciate a research fund from Shandong University of Science and Technology. One of the authors (C.L) is supported by National Natural Science Foundation of China (Grant No. 12071237).


\appendix
\section{The KP solitons}\label{A-KP}
{In this appendix, we briefly review KP solitons. Most of the material presented here can be found in \cite{Kodama3}.} The solution of the KP equation is commonly expressed in the form 
	\begin{eqnarray}\label{120}
		u(x,y,t)=2\left(\ln\tau(x,y,t)\right)_{xx},
	\end{eqnarray}
where the function $\tau(x,y,t)$ is called the \emph{tau function}. For the soliton solutions, the $\tau$ function is written in the following form
	\begin{eqnarray}\label{7}
		\tau(x,y,t) =\text{Wr}(f_{1},\cdots,f_{N}),
	\end{eqnarray}
where $\text{Wr}(f_{1},\cdots,f_{N})$ is the Wronskian of the functions $\{f_{i}(x,y,t):i=1,\cdots,N\}$ with respect to the $x$-variable, and the functions $f_{i}$'s are given by 
	\begin{eqnarray*}
		f_{i}(x,y,t)=\sum_{j=1}^{M} a_{i,j} E_{j}(x,y,t),\qquad\text{with}\quad E_{j}(x,y,t)=\exp(\kappa_j x+\kappa_j^2y-\kappa_j^3 t),
	\end{eqnarray*}
in which $A:=(a_{i,j})$ is an $N\times M$ matrix with $N<M$. Note that if $N=M$, the solution becomes trivial $u=0$. We assume that t parameters $\{\kappa_j:j=1,\ldots,M\}$ are ordered as
\begin{equation}\label{order}
\kappa_1~<~\kappa_2~<~\cdots~<~\kappa_M.
\end{equation}
Then the $\tau$-function \eqref{7} can be written in the determinant form $\tau=|AE^T|$ with the $N\times M$ matrix defined by
\[
E(x,y,t):=\begin{pmatrix}
E_1 & E_2 &\cdots & E_M \\
\kappa_1 E_1&\kappa_2 E_2&\cdots &\kappa_M E_M\\
\vdots &\vdots &\ddots &\vdots \\
\kappa_1^{N-1}E_1&\kappa_2^{N-1}E_2 &\cdots &\kappa_M^{N-1}E_M
\end{pmatrix},
\]
where $E^T$ denotes the transpose of the matrix $E$. Using the Binet-Cauchy lemma for the determinant, the solution (\ref{7}) can be expressed in the form
	\begin{eqnarray}\label{26}
		\tau(x,y,t)=|AE(x,y,t)^T|=\sum_{1\le i_{1}<\cdots<i_{N}\le M} \Delta_{i_{1},\cdots,i_{N}}(A)E_{i_{1},\cdots,i_{N}}(x,y,t),
	\end{eqnarray}
in which $\Delta_{i_{1},\cdots,i_{N}}(A)$ is the $N\times N$ minor of the matrix $A$ whose columns are labeled by the index set $I=\{i_{1}<\cdots<i_{N}\}$, and
	\begin{eqnarray*}
		E_{i_{1},\cdots,i_{N}}(x,y,t):=\Wr(E_{i_1},E_{i_2},\ldots,E_{i_N})=\prod _{l<m} (\kappa_{i_m}-\kappa_{i_l})E_{i_1}\cdots E_{i_N}.
	\end{eqnarray*}
With the ordering \eqref{order}, the exponential functions are positive definite, i.e., $E_{i_1,\ldots,i_N}(x,y,t)>0$. It was then shown in \cite{KW:13,KW:14} that the $\tau$-function \eqref{7} is positive definite if and only if the matrix $A$ is totally nonnegative (TNN). This implies that the soliton solution generated by the $\tau$-function above is regular, if and only if all the minors $\Delta_{i_1,\ldots,i_N}(A)\ge 0$. The real and regular soliton solution of the KP equation is referred to as the KP soliton. All the KP solitons are classified in terms of the parameters \eqref{order} and the TNN matrix $A$ (see \cite{Kodama3} and the references therein). To state the classification theorem, we first assume that the matrix $A$ of rank$(A)=N$ is \emph{irreducible}, meaning that the row reduced echelon form (RREF) of $A$ has no zero column and no row containing only pivot.
\begin{theorem}[\cite{Chakravarty,Kodama3}]\label{classification}
Let $\{i_1,i_2,\ldots,i_N\}$ be the pivot set, and $\{j_1,j_2,\ldots,j_{M-N}\}$ be the non-pivot set of the irreducible TNN matrix $A$ in RREF. Then the solution generated by the $\tau$-function \eqref{7} 
can be parametrized by a unique permutation $\pi\in S_M$, the symmetric group $S_M$ of $M$ numbers, in the sense that the KP soliton has the following asymptotic structure.
\begin{itemize}
\item[(a)] For $y\gg0$, there exist $N$ line-solitons of $[i_k,\pi(i_k)]$-type for some $i_k<\pi(i_k)\le M$ and $k=1,\ldots,N$.
\item[(b)] For $y\ll 0$, there exist $(M-N)$ line solitons of $[\pi(j_l),j_l]$-type for some $1\le \pi(j_l)<j_l$ and $l=1,\ldots,M-N$.
\end{itemize}
{(see also \cite{B:06} for asymptotic solitons).}
\end{theorem}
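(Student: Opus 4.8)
The plan is to carry out a \emph{dominant-balance} (tropical) analysis of the explicit $\tau$-function and to read off the asymptotic line-solitons from the exponential terms that balance as $y\to\pm\infty$. Starting from the Binet--Cauchy expansion \eqref{26}, I would write
\[
\tau(x,y,t)=\sum_{I}c_I\,\exp\Big(\sum_{k\in I}\theta_k\Big),\qquad \theta_k=\kappa_k x+\kappa_k^2 y-\kappa_k^3 t,
\]
where $I=\{i_1<\cdots<i_N\}$ ranges over $N$-element subsets of $\{1,\dots,M\}$ and $c_I=\Delta_I(A)\prod_{l<m}(\kappa_{i_m}-\kappa_{i_l})\ge 0$, with $c_I>0$ precisely when $\Delta_I(A)>0$. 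Total nonnegativity makes every surviving coefficient strictly positive, so $\tau>0$ and $u=2(\ln\tau)_{xx}$ is globally regular; moreover $u$ is exponentially small except near the loci where two dominant exponentials are comparable, which is where the localized line-solitons live.

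The key reduction is to study the convex, piecewise-linear phase envelope
\[
\Phi(x,y,t)=\max_{I:\,c_I>0}\sum_{k\in I}\theta_k .
\]
Away from its corners a single term strictly dominates and $u\to 0$; the line-solitons are exactly the walls of $\Phi$. Along a wall separating index sets $I$ and $I'$ the emergent wave is a genuine $[i,j]$-soliton precisely when $I$ and $I'$ differ in a single element $\{i\}\leftrightarrow\{j\}$, since then $\sum_{k\in I}\theta_k-\sum_{k\in I'}\theta_k=\theta_i-\theta_j$ is linear and its zero locus is the crest $x+(\kappa_i+\kappa_j)y-(\kappa_i^2+\kappa_i\kappa_j+\kappa_j^2)t=\text{const}$, matching \eqref{90}.

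Next I would determine the dominant index set along a ray $s=x/y$ for $y$ of fixed large sign. On such a ray $\sum_{k\in I}\theta_k\sim y\sum_{k\in I}g_s(\kappa_k)$ with $g_s(\kappa)=\kappa^2+s\kappa$, so for $y\gg0$ the dominant $I$ maximizes $\sum_{k\in I}g_s(\kappa_k)$ over $\{I:c_I>0\}$, and for $y\ll0$ it minimizes it. Since $g_s(\kappa)-g_s(\kappa')=(\kappa-\kappa')(\kappa+\kappa'+s)$, the ranking of the $\kappa_k$ changes by an adjacent transposition exactly at $s=-(\kappa_i+\kappa_j)$, the slope of the $[i,j]$-crest. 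Sweeping $s$ from $-\infty$ to $+\infty$ then yields a chain of dominant sets whose successive differences are the asymptotic solitons: for $y\gg0$ one gets $N$ transitions, each pairing a pivot $i_k$ with an index $\pi(i_k)>i_k$ (statement (a)), while for $y\ll0$ one gets $M-N$ transitions pairing each non-pivot $j_l$ with $\pi(j_l)<j_l$ (statement (b)). The RREF data pins down the extreme dominant sets: at $s\to-\infty$ the lexicographically minimal nonzero minor, namely the pivot set itself, is selected, and at $s\to+\infty$ the lexicographically maximal one; the full pattern of nonvanishing minors constrains every intermediate transition.

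The hard part is the purely combinatorial step: showing that the pattern of nonzero maximal minors of the irreducible TNN matrix $A$ (its positroid) determines both pairings unambiguously and that they glue into a \emph{single} permutation $\pi\in S_M$ sending pivots upward and non-pivots downward. This is the combinatorial core of \cite{CK:09, KW:14}, equivalent to Postnikov's correspondence between irreducible positroid cells and decorated permutations, from which uniqueness of $\pi$ also follows. I would invoke that theory to verify that the matchings extracted from the $y\gg0$ and $y\ll0$ analyses are the two readings of one decorated permutation, and that irreducibility forbids fixed points, so that $\pi$ is a genuine permutation of $\{1,\dots,M\}$ as claimed.
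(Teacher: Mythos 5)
First, a point of comparison: the paper does not prove this theorem at all --- it is quoted as background in Appendix \ref{A-KP} with attribution to \cite{CK:09, KW:14} --- so there is no in-paper argument to measure you against. Your dominant-balance strategy is indeed the one used in those references: expand $\tau$ by Binet--Cauchy as in \eqref{26}, use total nonnegativity to make every surviving coefficient strictly positive, and locate the asymptotic line-solitons on the walls where two dominant exponentials exchange leadership; your transition value $s=-(\kappa_i+\kappa_j)$ correctly recovers the crest slope in \eqref{90}, and your identification of the extreme dominant sets (the pivot set as $s\to-\infty$, the colexicographically maximal base as $s\to+\infty$) is also correct for a matrix in RREF.

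The gap is that essentially everything the theorem asserts is concentrated in the steps you assert or defer. (i) You need that two \emph{adjacent} maximizers on the upper envelope differ in exactly one index; a priori two bases $I,I'$ with $|I\setminus I'|\ge 2$ can tie and be adjacent on the envelope (e.g.\ if only $\Delta_{\{1,2\}}$ and $\Delta_{\{3,4\}}$ were nonzero the single wall would not be an $[i,j]$-soliton), and ruling this out uses the exchange/Pl\"ucker relations of the positroid of $A$ together with irreducibility, not merely positivity of the coefficients. (ii) You need that the $y\gg 0$ sweep has exactly $N$ breakpoints, each retiring a distinct pivot $i_k$ in favor of some $\pi(i_k)>i_k$ (and dually $M-N$ breakpoints for $y\ll 0$), i.e.\ that no index is swapped out and later swapped back in. (iii) You need the two half-matchings to assemble into a single fixed-point-free permutation. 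You correctly flag these as ``the combinatorial core'' but then invoke \cite{CK:09, KW:14} and the positroid--decorated-permutation correspondence to supply them --- which is precisely the content of the theorem being proved, so as a self-contained argument the proposal is circular at its decisive step. It is a faithful and accurate sketch of the known proof, but the hard part remains outsourced.
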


We define the \emph{chord} diagram associated with the permutation $\pi\in S_M$ (see \cite{Kodama3} and the references therein).
\begin{definition}
Consider a line segment on the real line with $M$ marked points labeled by the $\kappa$-parameters
$\{\kappa_1<\kappa_2<\cdots<\kappa_M\}$. The chord diagram associated with the permutation 
(derangement) $\pi\in S_M$ is defined by
\begin{itemize}
\item[(a)] if $i<\pi(i)$ (pivot index), then draw a chord joining $\kappa_i$ and $\kappa_{\pi(i)}$ on the upper part of the line, and
\item[(b)] if $j> \pi(j)$ (non-pivot index), then draw a chord joining $\kappa_j$ and $\kappa_{\pi(j)}$ on the lower part of the line.
\end{itemize}
With the classification theorem \ref{classification}, we let $\pi(A)$ denote the corresponding permutation of the TNN matrix $A$.
\end{definition}

\begin{example}[Example 5.6 in \cite{K:10}]
Consider the following $3\times 6$ TNN matrix,
\[
A=\begin{pmatrix}
1&0&-a&-b&0&c\\
0&1&d&e&0&-f\\
0&0&0&0&1&g
\end{pmatrix},
\]
where all the parameters $a, b,\ldots, g>0$ with $bd-ae>0$ and $cd-fa>0$.
Then we have 
\[
\pi(A)=(4,5,1,2,6,3),
\]
which implies that the corresponding KP soliton has asymptotically
\begin{itemize}
\item[(a)] for $y\gg0$, three solitons of $[1,4]$-, $[2,5]$-, and $[5,6]$-type, 
\item[(b)] for $y\ll 0$, three solitons of $[1,3]$-, $[2,4]$-, and $[3,6]$-type.
\end{itemize}
The KP soliton and the corresponding chord diagram are shown in Figure \ref{fig135}.
 \begin{figure}[htbp]
				\begin{minipage}[t]{1\linewidth}
					\centering
					\includegraphics[height=3.5cm,width=11.5cm]{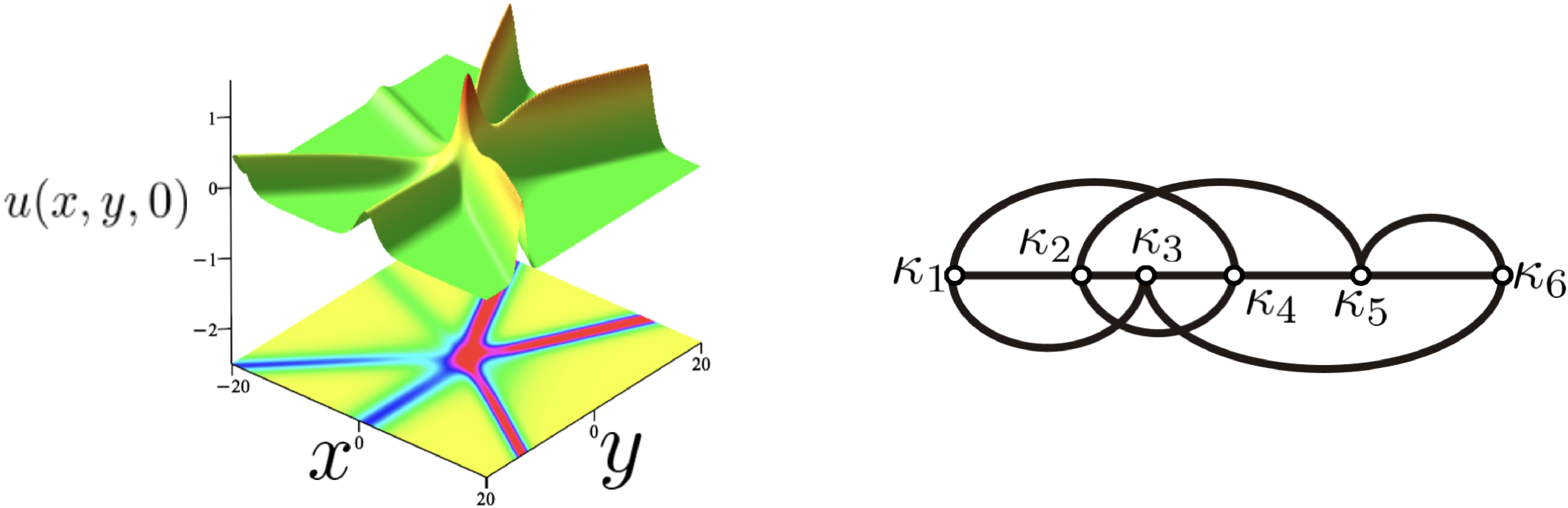}
				\end{minipage}%
				\caption{Example of the KP soliton with $\pi=(4,5,1,2,6,3)$.}
				\label{fig135}
			\end{figure}

\end{example}
{In this paper, we consider deformation (dynamics) of the  chord diagrams to describe the evolution of the perturbed solitons (the chord dynamics).}
\section{The {$\kappa$}-system}\label{A-kappa}
In this appendix, we derive the $\kappa$-system (\ref{6}) for the parameters $\{\kappa_{1},\kappa_{2}\}$ of $[1,2]$-soliton using an asymptotic perturbation theory. Although the system has been derived in \cite{GKP:18,Ryskamp}, we here give a much more elementary derivation using a standard perturbation method. We also emphasize that our slow variables are just $Y=\epsilon y$ and $T=\epsilon t$.

First writing $u=2\phi_{x}$, we have the potential form of the KP equation, 
		\begin{eqnarray}\label{1}
			4\phi_{xt}+12\phi_{x}\phi_{xx}+\phi_{xxxx}+3\phi_{yy}=0.
		\end{eqnarray}
This equation admits a shock type solution in $\phi$, i.e.,
		{\begin{eqnarray}\label{2}
				\phi(x,y,t)=\frac{\kappa_1+\kappa_2}{2}
				+\frac{\kappa_1-\kappa_2}{2}
				\tanh\!\left(\frac{\kappa_2-\kappa_1}{2}\,\xi(x,y,t)\right),
		\end{eqnarray}}where $\xi(x,y,t)=x+Q y-Ct$ with $Q:=\kappa_1+\kappa_2$ and $C:=\kappa_1^2+\kappa_1\kappa_2+\kappa_2^2$. 
Note for $\kappa_{1}<\kappa_{2}$ that the solution behaves as
	\begin{eqnarray*}
		\phi(x,y,t)\quad\longrightarrow\quad \left\{\begin{array}{ll}
			\kappa_{1},\quad &\text{for}\quad x\ll 0,\\
			\kappa_{2},&\text{for}\quad x\gg 0.
		\end{array} \right.
	\end{eqnarray*}
We study a perturbation problem of the one-soliton solution under the assumption of adiabatic
modulation of the parameters $\{\kappa_1,\kappa_2\}$. We then introduce the slow scales
with a small parameter $0<\epsilon\ll 1$,
\begin{equation}\label{slow}
Y=\epsilon y,\qquad\text{and}\qquad T=\epsilon t.
\end{equation}
Note here that we consider $x$ to be a fast scale with $\xi=x+Qy-Ct$. {That is, we consider the charge of the coordinates $(x,y,t)$ to $(\xi,y,t)$.} With the new variables $(\xi, Y, T)$, we have
\begin{equation}\label{SlowP}
\frac{\partial}{\partial x}=\frac{\partial}{\partial \xi},\qquad
\frac{\partial}{\partial y}=Q\frac{\partial}{\partial \xi}+\epsilon\frac{\partial}{\partial Y},\qquad
\frac{\partial}{\partial t}=-C\frac{\partial}{\partial \xi}+\epsilon\frac{\partial}{\partial T},
\end{equation}
where $Q$ and $C$ are function of $(Y,T)$. Then one can see that these variables $(\xi,Y,T)$ are compatible only if the following is satisfied
\begin{equation}\label{compatibility}
\frac{\partial Q}{\partial T}+\frac{\partial C}{\partial Y}=0,
\end{equation}
which is sometimes referred to as the conservation of wave numbers (see e.g. \cite{Ryskamp}).
This equation is derived from the compatibility of the original variables $(x,y,t)$, i.e.,
\[
			\frac{\partial^2}{\partial x \partial y}=\frac{\partial^2}{\partial y \partial x},\qquad
			\frac{\partial^2}{\partial y \partial t}=\frac{\partial^2}{\partial t  \partial y},\qquad
			\frac{\partial^2}{\partial x \partial t}=\frac{\partial^2}{\partial t \partial x}.   
\]
Using \eqref{SlowP}, the KP equation (\ref{1}) becomes
\begin{eqnarray}\label{PKP}
			-4C\phi_{\xi\xi}+12\phi_\xi\phi_{\xi\xi}+\phi_{\xi\xi\xi\xi}+3Q^2\phi_{\xi\xi}
			+\epsilon\left(4\phi_{\xi T}+3Q\phi_{\xi Y}+3(Q\phi_\xi)_Y\right)+3\epsilon^2\phi_{YY}=0.
		\end{eqnarray}
Now we assume the following asymptotic form of the solution,
\begin{equation}\label{eigenF}
			\phi(x,y,t)=\phi^{(0)}(\xi,Y,T)+\epsilon\phi^{(1)}(\xi,Y,T)+\mathcal{O}(\epsilon^2).
\end{equation}
Inserting \eqref{eigenF} into \eqref{PKP}, we have, at the leading order  $\epsilon=0$,
		\begin{eqnarray}\label{4}
			-4C\phi^{(0)}_{\xi\xi}+12\phi^{(0)}_{\xi}\phi^{(0)}_{\xi\xi}+\phi^{(0)}_{\xi\xi\xi\xi}+3Q^2\phi^{(0)}_{\xi\xi}=-(\kappa_{1}-\kappa_{2})^2\phi^{(0)}_{\xi\xi}+12\phi^{(0)}_{\xi}\phi^{(0)}_{\xi\xi}+\phi^{(0)}_{\xi\xi\xi\xi}=0,
		\end{eqnarray}
where we have used $4C-3Q^2=(\kappa_1-\kappa_2)^2$, and the solution is given by \eqref{2}. At the order $\epsilon$, we have
\begin{eqnarray}\label{O1}
			\mathcal{L}[\phi^{(0)}]\phi^{(1)}=-4\phi^{(0)}_{\xi T}-3Q\phi^{(0)}_{Y \xi}-3(Q \phi^{(0)}_{\xi })_{Y},		
\end{eqnarray}
where $\mathcal{L}[\phi^{(0)}]$ is the linearization operator for Eq.~(\ref{4}), i.e.,
\[
\mathcal{L}[\phi^{(0)}]=-(\kappa_{1}-\kappa_{2})^2\frac{\partial^2}{\partial\xi^2}+12\frac{\partial}{\partial\xi}\cdot\phi^{(0)}_{\xi}\cdot\frac{\partial}{\partial \xi}+\frac{\partial^4}{\partial \xi^4}.
\]
Note that this operator is (formally) self-adjoint in the space of bounded functions. Since $\phi^{(0)}_{\xi} \in \ker \mathcal{L}[\phi^{(0)}] \cap L^2(\mathbb{R})$ and $\phi^{(1)}$ is assumed to be bounded, it follows that
{\begin{align*}
		\langle \phi^{(0)}_{\xi},\mathcal{L}[\phi^{(0)}]\phi^{(1)}\rangle
		:= \int_{\mathbb{R}} \phi^{(0)}_{\xi}
		\mathcal{L}[\phi^{(0)}]\phi^{(1)}\, d\xi 
		=  \langle \mathcal{L}[\phi^{(0)}]\phi^{(0)}_{\xi},\phi^{(1)}\rangle =0 .
\end{align*}}This implies 
{\[\int_{\mathbb{R}}\left(
-2\frac{\partial}{\partial T}(\phi^{(0)}_{\xi})^2
-3\frac{\partial}{\partial Y}\bigl(Q(\phi^{(0)}_{\xi})^2\bigr)
\right)d\xi=\int_{\mathbb{R}} \phi^{(0)}_{\xi} (-4\phi^{(0)}_{\xi T}-3q\phi^{(0)}_{Y \xi}-3(q \phi^{(0)}_{\xi })_{Y}) d\xi=0,\]}which leads
\begin{equation}\label{5}
 \frac{\partial }{\partial T}(\kappa_{1}-\kappa_{2})^3+\frac{3}{2}\frac{\partial }{\partial Y}\left(Q(\kappa_{1}-\kappa_{2})^3\right)=0.
\end{equation} 
Then, together with \eqref{compatibility}, we obtain the $\kappa$-system (\ref{6}).

\section{Regularization in the KdV-Whitham equation}\label{A-KW}
{Here, we show that our regularization, discussed in Section \ref{sec:d}, is similar to that used in the Whitham averaging theory \cite{BK:94, Kodama5, Wh:74}.}

Let us first review an elliptic solution of the KdV equation,
\begin{equation}\label{KdV}
4u_t+6uu_{x}+u_{xxx}=0,
\end{equation}
which is the KP equation under the assumption $u_y=0$.
It is well known that the equation admits a periodic solution given by
\begin{eqnarray}\label{110}
u(x,t)&=r_{2}+r_{3}-r_{1}-2(r_2-r_{1})\,\text{sn}^{2}\big(\sqrt{r_{3}-r_{1}}(x-\frac{1}{2}(r_{1}+r_{2}+r_{3})t),m\big),\\
&=r_1+r_2-r_3+2(r_3-r_1)\,\text{dn}^2\big(\sqrt{r_{3}-r_{1}}(x-\frac{1}{2}(r_{1}+r_{2}+r_{3})t),m\big),\nonumber
\end{eqnarray}
where $r_1<r_2<r_3$ are parameters, $\text{sn}(z,m)$ and $\text{dn}(z,m)$ are the Jacobi elliptic functions with $m:=\tfrac{r_2-r_1}{r_3-r_1}$.
The average value of $u(x,t)$ over the period $L=\frac{2K(m)}{\sqrt{r_3-r_1}}$ is
\begin{equation}\label{average}
\bar{u}:=\frac{1}{L}\displaystyle\int_{0}^{L}u(x,t)\,dx=r_1+r_2-r_3+2(r_3-r_1)\frac{E(m)}{K(m)},
\end{equation}
where $K(m)$ and $E(m)$ denote the complete elliptic integrals of the first and second kinds, respectively. In the Whitham theory, the parameters $(r_1,r_2,r_3)$ depend on the slow scales
$(X=\epsilon x,T=\epsilon t)$, that is, $\bar{u}=U(X,T)$.

Let us suppose that the initial data $u(x,0)$ depends only on the slow scale, i.e., $u(x,0)$ changes slowly and $u(x,0)=U(X,0)$, no rapid oscillation. The function $U(X,T)$ then satisfies the \emph{dispersionless} KdV equation,
\begin{equation}\label{dKdV}
4U_T+6UU_X=0,
\end{equation}
that is, we have ignored the dispersion term in \eqref{KdV}. 
Now we consider the following step initial data,
\begin{eqnarray}\label{96}
		u(x,0)=U(X,0)=\left\{\begin{array}{ll} 
			a,\quad &\text{for}\quad X<0,\\
			b, &\text{for}\quad X>0.
		\end{array} \right.
\end{eqnarray}
The (implicit) solution of \eqref{dKdV} is given by
\[
U(X,T)=F(X-\frac{3}{2}UT),
\]
where $F(X)=U(X,0)$, the initial function. As discussed in Section \ref{Sec:kappa},
if $a<b$, then we have a global solution corresponding to a rarefaction wave.
And, if $a>b$, the equation \eqref{dKdV} develops a shock singularity.
Then, the KdV equation \eqref{KdV} with the initial data \eqref{96} develops
a so-called \emph{dispersive shock wave}, which can be considered as a slow modulation
of the periodic solution \eqref{110} (see \cite{Gurevich,Kamchatnov,Wh:74}). The Whitham equation is then given
by a quasilinear system of equations for the parameters $(r_1,r_2,r_3)$. That is, 
for the case with $a>b$, we need to consider the Whitham equation for $(r_1,r_2,r_3)$
instead of the single equation \eqref{dKdV}. Then the initial data \eqref{96} should be
expressed in terms of these variables. This is the regularization considered in \cite{BK:94, Kodama5}.
The initial data for the parameters $(r_1,r_2,r_3)$ is then given by
\begin{align}\label{111}\left\{\begin{array}{lll}
r_1(X,0)=b,\quad\quad\text{for}\quad X\in\mathbb{R},\\[1.0ex]
r_2(X,0)=\left\{\begin{array}{ll}
b,\quad&\text{for}\quad X<0,\\
a,\quad &\text{for}\quad X>0,
\end{array}\right.\\[2.0ex]
r_3(X,0)=a,\quad\quad\text{for}\quad X\in\mathbb{R}.
\end{array}\right.
\end{align}
This initial data corresponds to the following limits of the averaged function $\bar{u}=U(X,0)$,
\begin{equation}\label{limits}
U(X,0)=\left\{\begin{array}{ll}
\displaystyle{\lim_{r_2\to r_1}\bar{u}=r_3=a},\quad &\text{for}\quad X<0,\\[2.0ex]
\displaystyle{\lim_{r_2\to r_3}\bar{u}=r_1=b},\quad &\text{for}\quad X>0.
\end{array}\right.
\end{equation}
Here we have used the following limits,
\[
\lim_{r_2\to r_1}E(m)=\lim_{r_2\to r_1}K(m)=\frac{\pi}{2},\qquad
\lim_{r_2\to r_3}E(m)=1,\quad \lim_{r_2\to r_3}K(m)=\infty.
\]
In terms of the periodic solution \eqref{111}, we have
\begin{itemize}
	\item[(a)]  the limit $r_{2}\to r_{3}$  ($m\to1$) gives
		\begin{eqnarray*}
			u(x,t)\quad\longrightarrow\quad  r_{1}+2(r_3-r_1)\sech^2(\sqrt{r_{3}-r_{1}}(x-\frac{1}{2}(r_{1}+2r_{3})t)),
		\end{eqnarray*}
	where we have used $\text{dn}^2(z,m)\to\sech^2z$. This is called the \emph{soliton limit}.
	\item[(b)]  the limit $r_{2}\to r_{1}$ ($m\to0$) gives
		\begin{eqnarray*}
			u(x,t)\quad\longrightarrow\quad r_{3}+2(r_2-r_1)\sin^2(\sqrt{r_{3}-r_{1}}(x-\frac{1}{2}(2r_{1}+r_{3})t))\approx r_{3},
		\end{eqnarray*}
	where we have used $\text{sn}(z,m)\to\sin z$. This is called the \emph{linear limit}.
\end{itemize}
We illustrate the regularization in Figure \ref{fig85}.
\begin{figure}[htbp]
	\begin{minipage}[t]{1\linewidth}
		\centering
		\includegraphics[height=2.5cm,width=12.8cm]{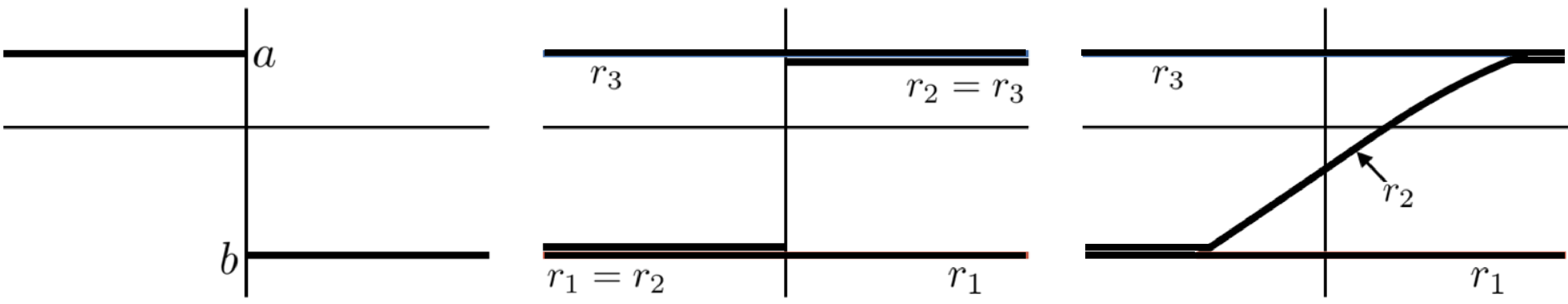}
	\end{minipage}%
	\caption{The regularization in the KdV-Whitham equation.
	The left panel is $u(x,0)$ in \eqref{96}. The middle panel is the regularized initial data
	for $(r_1,r_2,r_3)$. The right panel shows the solution for $T>0$.}
	\label{fig85}
\end{figure}


\bigskip
 \noindent
 {\bf Declarations.}
 \begin{itemize}
\item Data sharing not applicable to this article as no datasets were generated or analyzed during the current study.\\
 \item The authors declare no conflicts of interest associated with this manuscript.
 \end{itemize}


\end{document}